\newcommand{\hypercolor}{blue}
\newcommand{\InsertPDF}[2]{\iffigure\includegraphics[scale=#1]{#2}\fi}
\newcounter{report}
\numberwithin{equation}{report}
\newtheorem{thm}{Theorem}[report]
\newtheorem{proposition}[thm]{Proposition}
\newtheorem{postulateno}{Postulate}
\newtheorem{lemma}[thm]{Lemma}
\newtheorem{cor}[thm]{Corollary}
\newcounter{proof}
\NewDocumentEnvironment{proof}{o}
 {
  \par\medskip
  \noindent
  \textbf{Proof~}
 }
 {\QED\par\smallskip}
\newcounter{postulate}
\renewcommand{\thepostulate}{\arabic{postulate}}
\NewDocumentEnvironment{postulate}{oo}
 {
  \refstepcounter{postulate}
  \begin{postulateno}
  \textbf{\hspace{-0.5em}\IfNoValueTF{#2}{\thepostulate}{#2} ~(\IfNoValueTF{#1}{}{#1})}
 }
 {
  \end{postulateno}
 }
\newcommand{\hA}{\hat{A}}
\newcommand{\hB}{\hat{B}}
\newcommand{\he}{\hat{e}}
\newcommand{\hf}{\hat{f}}
\newcommand{\hg}{\hat{g}}
\newcommand{\hPi}{\hat{\Pi}}
\newcommand{\hPhi}{\hat{\Phi}}
\newcommand{\hPsi}{\hat{\Psi}}
\newcommand{\hLambda}{\hat{\Lambda}}
\newcommand{\hrho}{\hat{\rho}}
\newcommand{\hsigma}{\hat{\sigma}}
\newcommand{\mC}{\mathcal{C}}
\newcommand{\mD}{\mathcal{D}}
\newcommand{\mE}{\mathcal{E}}
\newcommand{\mF}{\mathcal{F}}
\newcommand{\mG}{\mathcal{G}}
\newcommand{\mJ}{\mathcal{J}}
\newcommand{\mK}{\mathcal{K}}
\newcommand{\mL}{\mathcal{L}}
\newcommand{\mN}{\mathcal{N}}
\newcommand{\mP}{\mathcal{P}}
\newcommand{\mS}{\mathcal{S}}
\newcommand{\mU}{\mathcal{U}}
\newcommand{\mX}{\mathcal{X}}
\newcommand{\mZ}{\mathcal{Z}}
\newcommand{\ident}{\hat{1}}
\newcommand{\Real}{\mathbb{R}}
\newcommand{\Complex}{\mathbb{C}}
\newcommand{\QED}{\hspace*{0pt}\hfill $\blacksquare$}
\DeclareMathOperator{\argmax}{argmax}
\newcommand{\T}{\mathsf{T}}
\DeclareMathOperator{\Tr}{Tr}
\newcommand{\Trp}[1]{\mathop{\mathrm{Tr}_{#1}}}
\DeclareMathAlphabet{\mymathbb}{U}{BOONDOX-ds}{m}{n}
\newcommand{\zero}{\mymathbb{0}}
\newcommand{\kket}[1]{|#1\rangle\!\rangle}
\newcommand{\bbra}[1]{\langle\!\langle#1|}
\newcommand{\titlename}{Quantum process discrimination with restricted strategies}
\newcommand{\opt}{\star}
\newcommand{\hphi}{\hat{\phi}}
\newcommand{\hmE}{\hat{\mE}}
\newcommand{\tmE}{\tilde{\mE}}
\newcommand{\hcJ}{\hat{\mathcal{J}}}
\newcommand{\summ}{\sum_{m=1}^M}
\newcommand{\summt}{\sum_{m=1}^3}
\renewcommand{\ident}{\mathbbm{1}}
\newcommand{\I}{I}
\newcommand{\Pos}{\mathsf{Pos}}
\newcommand{\Meas}{\mathsf{Meas}}
\newcommand{\Sep}{\mathsf{Sep}}
\newcommand{\Chn}{\mathsf{Chn}}
\newcommand{\Den}{\mathsf{Den}}
\newcommand{\DenP}{\Den^\mathsf{P}}
\newcommand{\Tester}{\mathsf{Test}}
\newcommand{\TesterM}{\Tester_3}
\newcommand{\Comb}{\mathsf{Comb}}
\newcommand{\Her}{\mathsf{Her}}
\newcommand{\Uni}{\mathsf{Uni}}
\renewcommand{\ol}{\overline}
\DeclareMathOperator{\co}{\mathsf{co}}
\DeclareMathOperator{\clco}{\ol{\co}}
\DeclareMathOperator{\coni}{\mathsf{coni}}
\DeclareMathOperator{\clconi}{\ol{\coni}}
\newcommand{\ot}{\otimes}
\newcommand{\cross}{\times}
\newcommand{\V}{V}
\newcommand{\Vt}{{\V_t}}
\newcommand{\Wt}{{\W_t}}
\newcommand{\W}{W}
\newcommand{\WA}{{W_\mathrm{A}}}
\newcommand{\tchi}{\tilde{\chi}}
\newcommand{\tU}{\tilde{U}}
\newcommand{\tV}{{\tilde{V}}}
\newcommand{\Endash}{\text{\textendash}}
\renewcommand{\c}{\circ}
\newcommand{\termdef}{\textit}
\renewcommand{\P}{\mP}
\newcommand{\Pseq}{\P_\mathrm{seq}}
\newcommand{\PG}{\P_\G}
\newcommand{\Free}{\mF}
\newcommand{\tmF}{\tilde{\Free}}
\newcommand{\G}{\mathrm{G}}
\newcommand{\mCG}{\mC_\G}
\renewcommand{\S}{\mS}
\newcommand{\SG}{{\S_\G}}
\newcommand{\inter}{\mathrm{int}}
\newcommand{\mDC}{{\mD_\mC}}
\newcommand{\mDCG}{{\mD_{\mCG}}}
\newcommand{\Ad}{\mathrm{Ad}}
\newcommand{\inv}[1]{\bar{#1}}
\newcommand{\tvarphi}{\tilde{\varphi}}
\newcommand{\gdis}{\raisebox{-.1em}{\includegraphics[scale=0.5]{text_discard.pdf}}}
\let\ast\relax
\DeclareMathOperator{\ast}{\circledast}
\setlist[enumerate]{label=\arabic*), leftmargin=3em, itemsep=0pt, parsep=0pt, labelwidth=5em}
\let\protect\relax
  \xdef\Craket{\protect\expandafter\noexpand\csname Craket \endcsname}
\gdef\csname Craket \endcsname#1{\begingroup
     \ifx\SavedDoubleVert\relax
       \let\SavedDoubleVert\|\let\|\BraDoubleVert
     \fi
     \mathcode`\|32768\let|\BraVert
     \left({#1}\right)\endgroup}
\definecolor{memo}{RGB}{128,0,255}
\definecolor{gray}{RGB}{128,128,128}
\definecolor{purple}{RGB}{192,0,192}
\definecolor{other}{RGB}{32,192,0}
\newcommand{\Discard}[1]{}
\newcommand{\EN}[1]{}
\begin{document}

\preprint{APS/123-QED}

\title{\titlename}

\affiliation{%
 Quantum Information Science Research Center, Quantum ICT Research Institute, Tamagawa University,
 Machida, Tokyo 194-8610, Japan
}%

\author{Kenji Nakahira}
\affiliation{%
 Quantum Information Science Research Center, Quantum ICT Research Institute, Tamagawa University,
 Machida, Tokyo 194-8610, Japan
}%

\date{\today}

\begin{abstract}
 The discrimination of quantum processes, including quantum states, channels,
 and superchannels, is a fundamental topic in quantum information theory.
 It is often of interest to analyze the optimal performance that can be achieved
 when discrimination strategies are restricted to a given subset of all strategies
 allowed by quantum mechanics.
 In this paper,
 we present a general formulation of the task of finding the maximum success probability
 for discriminating quantum processes as a convex optimization problem
 whose Lagrange dual problem exhibits zero duality gap.
 The proposed formulation can be applied to any restricted strategy.
 We also derive necessary and sufficient conditions for an optimal restricted strategy
 to be optimal within the set of all strategies.
 We provide a simple example in which the dual problem given by our formulation can be much easier
 to solve than the original problem.
 We also show that the optimal performance of each restricted process discrimination problem
 can be written in terms of a certain robustness measure.
 This finding has the potential to provide a deeper insight into
 the discrimination performance of various restricted strategies.
\end{abstract}

\pacs{03.67.Hk}
\keywords{quantum information; quantum process discrimination; restricted discrimination;
          convex optimization}
\maketitle



\section{Introduction}

Quantum processes are fundamental building blocks of quantum information theory.
The tasks of discriminating between quantum processes are of crucial importance
in quantum communication, quantum metrology, quantum cryptography, etc.
In many situations, it is reasonable to assume that the available discrimination strategies
(also known as quantum testers) are restricted to a certain subset of all possible testers
in quantum mechanics.
For example, in practical situations, we are usually concerned only with discrimination strategies
that are readily implementable with current technology.
Another example is a setting where discrimination is performed by two or more parties
whose communication is limited.
In such settings, one may naturally ask
how the performance of an optimal restricted tester can be evaluated.
To answer this question, different individual problems of distinguishing quantum states
\cite{Wal-Sho-Har-Ved-2000,Tou-Ada-Ste-2007,Chi-Dua-Hsi-2014,Ban-Cos-Joh-Rus-2015,Nak-Usu-2016-LOCC},
measurements \cite{Sed-Zim-2014,Puc-Paw-Kra-Kuk-2018,Kra-Paw-Puc-2020,Dat-Bis-Sah-Aug-2020},
and channels \cite{Aci-2001,Sac-2005,Dua-Fen-Yin-2007,Li-Qiu-2008,Har-Has-Leu-Wat-2010,
Mat-Pia-Wat-2010,Jen-Pla-2016,Dua-Guo-Li-Li-2016,Li-Zhe-Sit-Qiu-2017,Kat-Wil-2020}
have been investigated.

It is known that if all quantum testers are allowed,
then the problem of finding the maximum success probability of guessing which process was applied
can be formalized as a semidefinite programming problem,
and its Lagrange dual problem has zero duality gap \cite{Chi-2012}.
Many discrimination problems of quantum states, measurements, and channels
have been addressed through the analysis of their dual problems
\cite{Hol-1973,Bel-1975,Jez-Reh-Fiu-2002,Eld-Meg-Ver-2003,Eld-Meg-Ver-2004,Bar-Cro-2009,
Dec-Ter-2010,Ass-Car-Pie-2010,Ha-Kwo-2013,Bae-Hwa-2013,Sin-Kim-Gho-2019,Chi-2012,
Jen-Pla-2016,Puc-Paw-Kra-Kuk-2018,Kat-Wil-2020}.
However, in a general case where the allowed testers are restricted,
the problem cannot be formalized as a semidefinite programming problem.

In this paper, we provide a general method to analyze quantum process discrimination
problems in which discrimination testers are restricted to given types of testers.
We show that the task of finding the maximum success probability
for discriminating any quantum processes can be formulated as a convex optimization problem
even if the allowed testers are restricted to any subset of all testers
and that its Lagrange dual problem has zero duality gap.
It should be mentioned that, to our knowledge, a convex programming formulation applicable to any restricted strategy
has not yet been reported even in quantum state discrimination problems.
In some scenarios, the dual problem can be much easier to solve analytically or numerically
than the original problem, as we will demonstrate through a simple example.
Our approach can deal with process discrimination problems in both cases with and without
the restriction of testers within a common framework,
which makes it easy to compare their optimal values.
Note that we use the quantum mechanical notation for convenience, but
since our method essentially relies only on convex analysis,
our techniques are applicable to a general operational probabilistic theory
(including a theory that does not obey the no-restriction hypothesis \cite{Jan-Lal-2013}).

The robustness of a resource, which is a topic closely related to
discrimination problems, has been recently extensively investigated.
It is known that the robustness of a process can be seen as a measure of its advantage
over all resource-free processes in some discrimination task
\cite{Pia-Wat-2015,Nap-Bro-Cia-Pia-2016,Pia-Cia-Bro-Nap-2016,Ans-Hsi-Jai-2018,Tak-Reg-Bu-Liu-2019,
Uol-Kra-Sha-Yu-2019,Tak-Reg-2019,Osz-Bis-2019}.
Conversely, we show that the optimal performance of 
any restricted process discrimination problem
is characterized by a certain robustness measure.

\section{Preliminaries}

\subsection{Notation}

Let $N_\V$ be the dimension of a system $\V$.
$\zero$ stands for a zero matrix.
Let $\Complex$ and $\Real_+$ be, respectively, the sets of all complex and nonnegative real
numbers.
Also, let $\Her_\V$, $\Pos_\V$, $\Den_\V$, $\DenP_\V$, and $\Meas_\V$ be, respectively, the sets
of all Hermitian matrices, positive semidefinite matrices, states (i.e., density matrices), pure states,
and measurements of a system $\V$.
Let $\I_\V$ and $\ident_\V$ be, respectively, the identity matrix on $\V$
and the identity map on $\Her_\V$.
We call a quantum operation, which corresponds to a completely positive map,
a single-step process.
Let $\Pos(\V,\W)$ and $\Chn(\V,\W)$ denote, respectively, the sets of all single-step processes
and channels (i.e., completely positive trace-preserving maps) from a system $\V$ to a system $\W$.
In this paper, a one-dimensional system is identified with $\Complex$.
Also, $\Pos(\Complex,\V)$ and $\Pos(\Complex,\Complex)$ are identified with $\Pos_\V$ and
$\Real_+$, respectively.
$H_1 \ge H_2$ with Hermitian matrices $H_1$ and $H_2$ denotes that $H_1 - H_2$ is
positive semidefinite.
Given a set $\mX$ in a real Hilbert space, we denote its interior by $\inter(\mX)$,
its closure by $\ol{\mX}$, its convex hull by $\co \mX$,
its (convex) conical hull by $\coni \mX$,
and its dual cone by $\mX^*$.
$\ol{\co \mX}$ and $\ol{\coni \mX}$ are, respectively, denoted by $\clco \mX$ and $\clconi \mX$.
$x^\T$ denotes the transpose of a matrix $x$.
$\Uni_\V$ denotes the set of all unitary matrices on a system $\V$.
For a unitary matrix $U \in \Uni_\V$, let $\Ad_U$ be the unitary channel defined as
$\Ad_U(\rho) = U \rho U^\dagger$ $~(\rho \in \Pos_\V)$.
Let $\tV \coloneqq \W_T \ot \V_T \ot \cdots \ot \W_1 \ot \V_1$,
where $T$ is some positive integer.

\subsection{Processes, combs, and testers}

In this paper, we often use diagrammatic representations to provide an intuitive understanding.
A single-step process $\hf \in \Pos(\V,\W)$ is depicted by
\begin{alignat}{1}
 \InsertPDF{1.0}{single_step_process.pdf} ~\raisebox{.1em}{.}
\end{alignat}
The system $\Complex$ is represented by `no wire'.
For example, $\hrho \in \Pos_\V$ and $\he \in \Pos(\V,\Complex)$
are diagrammatically represented as
\begin{alignat}{1}
 \InsertPDF{1.0}{state_effect.pdf} ~\raisebox{.1em}{.}
\end{alignat}
Single-step processes can be linked sequentially or in parallel.
The sequential concatenation of $\hf_1 \in \Pos(\V_1,\V_2)$ and $\hf_2 \in \Pos(\V_2,\V_3)$
is a single-step process in $\Pos(\V_1,\V_3)$, denoted as $\hf_2 \circ \hf_1$.
Also, the parallel concatenation of $\hg_1 \in \Pos(\V_1,\W_1)$ and $\hg_2 \in \Pos(\V_2,\W_2)$
is a single-step process in $\Pos(\V_1 \ot \V_2,\W_1 \ot \W_2)$, denoted as $\hg_1 \ot \hg_2$.
In diagrammatic terms, they are depicted as
\begin{alignat}{1}
 \InsertPDF{1.0}{connection.pdf} ~\raisebox{.5em}{.}
 \label{eq:process_product}
\end{alignat}

We refer to a concatenation of one or more single-step processes as a quantum process.
A process represented by a concatenation of $T$ channels is referred to as
a quantum comb with $T$ time steps \cite{Chi-Dar-Per-2008}.
States, channels, and superchannels,
which are processes that transform quantum channels to quantum channels, are special cases
of quantum combs.
The concatenation of two single-step processes
$\hf_1 \in \Pos(\V_1,\W'_1 \ot \W_1)$ and $\hf_2 \in \Pos(\W'_1 \ot \V_2,\W_2)$,
denoted by the process $\hat{F} \coloneqq \hf_1 \ast \hf_2$
(where $\ast$ denotes the concatenation), is often depicted as
\begin{alignat}{1}
 \InsertPDF{1.0}{process2.pdf} ~\raisebox{1em}{.}
 \label{eq:process2}
\end{alignat}

For a process $\hmE$ expressed in the form
\begin{alignat}{1}
 \InsertPDF{1.0}{comb.pdf}
 \label{eq:comb_pdf}
\end{alignat}
with $\hLambda^{(t)} \in \Pos(\W'_{t-1} \ot \V_t, \W'_t \ot \W_t)$,
$\W'_0 \coloneqq \Complex$, and $\W'_T \coloneqq \Complex$,
its Choi-Jamio{\l}kowski representation,
which we denote by the same letter without the hat symbol, is defined as
\begin{alignat}{1}
 \mE &\coloneqq (\hmE \ot \ident_\tV)
 (\kket{\I_\tV}\bbra{\I_\tV}) \in \Pos_\tV,
\end{alignat}
where $\kket{\I_\tV} \coloneqq \sum_n \ket{n}\ket{n} \in \tV \ot \tV$.
A process $\hmE$ is uniquely specified by its Choi-Jamio{\l}kowski representation $\mE$.
$\Comb_{\W_T,\V_T,\dots,\W_1,\V_1}$ denotes the set of all
$\tau \in \Pos_{\W_T \ot \V_T \ot \cdots \ot \W_1 \ot \V_1}$ such that
there exists $\{ \tau^{(t)} \in \Pos_{\Wt \ot \Vt \ot \cdots \ot \W_1 \ot \V_1} \}_{t=1}^{T-1}$
satisfying
\begin{alignat}{1}
 \Trp{\Wt} \tau^{(t)} &= \I_\Vt \ot \tau^{(t-1)}, \quad \forall 1 \le t \le T,
 \label{eq:Comb}
\end{alignat}
where $\tau^{(0)} \coloneqq 1$ and $\tau^{(T)} \coloneqq \tau$.
Each element of $\Comb_{\W_T,\V_T,\dots,\W_1,\V_1}$ corresponds to
a comb expressed in the form of Eq.~\eqref{eq:comb_pdf}
with $\hLambda^{(t)} \in \Chn(\W'_{t-1} \ot \V_t, \W'_t \ot \W_t)$,
$\W_0 \coloneqq \Complex$, and $\W'_T \coloneqq \Complex$.
For simplicity, we often refer to elements of $\Comb_{\W_T,\V_T,\dots,\W_1,\V_1}$ as combs.
Note that the Choi-Jamio{\l}kowski representation, $\rho$, of a state $\hrho$ is
equal to $\hrho$ itself.

$\Comb_{\Complex,\W_T,\V_T,\dots,\W_1,\V_1,\Complex}$ is denoted by
$\Comb^*_{\W_T,\V_T,\dots,\W_1,\V_1}$, which is the set of all
$\tau \in \Pos_{\W_T \ot \V_T \ot \cdots \ot \W_1 \ot \V_1}$ such that
there exist $\tau^{(1)} \in \Den_{\V_1}$ and
$\{ \tau^{(t)} \in \Pos_{\Vt \ot \W_{t-1} \ot \V_{t-1} \ot \cdots \ot \W_1 \ot \V_1} \}_{t=2}^T$
satisfying
\begin{alignat}{1}
 \tau &= \I_{\W_T} \ot \tau^{(T)}, \nonumber \\
 \Trp{\Vt} \tau^{(t)} &= \I_{\W_{t-1}} \ot \tau^{(t-1)}, \quad \forall 2 \le t \le T.
 \label{eq:Phi_sum}
\end{alignat}
Let $\mCG \coloneqq \Pos_\tV^M$ and $\SG \coloneqq \Comb^*_{\W_T,\V_T,\dots,\W_1,\V_1}$.
Each element of $\SG$ corresponds to
a comb expressed in the form
\begin{alignat}{1}
 \InsertPDF{1.0}{comb2.pdf} ~\raisebox{1em}{,}
 \label{eq:comb2_pdf}
\end{alignat}
where $\hsigma_1,\dots,\hsigma_T$ are channels (in particular, $\hsigma_1$ is a state)
and ``$\gdis$'' denotes the trace.
An ensemble of processes $\{ \hPhi_m \}_{m=1}^M$ is referred to as a tester
if $\summ \hPhi_m$ is expressed in the form of Eq.~\eqref{eq:comb2_pdf}.
For each tester element $\hPhi_m$,
$\Phi_m$ denotes the Choi-Jamio{\l}kowski representation of the process $\hPhi_m^\dagger$
(where $^\dagger$ is the adjoint operator), i.e., 
\begin{alignat}{1}
 \Phi_m \coloneqq (\hPhi_m^\dagger \ot \ident_\tV)
 (\kket{\I_\tV}\bbra{\I_\tV}) \in \Pos_\tV.
\end{alignat}
$\{ \hPhi_m \}_{m=1}^M$ is a tester if $\{ \Phi_m \}_{m=1}^M \in \mCG$ and
$\summ \Phi_m \in \SG$ hold and vice versa.
We also refer to $\{ \Phi_m \}_{m=1}^M$ as a tester.
Let $\braket{\Phi_k,\mE_m} \coloneqq \Tr(\Phi_k \mE_m)$; then,
we have
\begin{alignat}{1}
 \braket{\sigma,\tau} &= 1, \quad \forall \tau \in \Comb_{\W_T,\V_T,\dots,\W_1,\V_1},
 ~\sigma \in \SG.
 \label{eq:comb_sigma_tau}
\end{alignat}
In our manuscript, processes corresponding to elements of $\SG$ and tester elements are
diagrammatically depicted in blue.

\section{Quantum process discrimination}

We first review quantum process discrimination problems where all possible testers are allowed.
We here address the problem of discriminating $M$ combs, $\hmE_1,\dots,\hmE_M$,
where each $\hmE_m$ is the concatenation of $T$ channels
$\hLambda^{(1)}_m,\dots,\hLambda^{(T)}_m$ with ancillary systems
(see Fig.~\ref{fig:process_discrimination}).
$\hmE_m$ is expressed by $\hmE_m = \hLambda^{(T)}_m \ast \cdots \ast \hLambda^{(1)}_m$.
In the particular case where, for each $m$,
$\hmE_m$ has no ancillary system and $\hLambda^{(1)}_m,\dots,\hLambda^{(T)}_m$ are
the same channel, denoted by $\hLambda_m$,
the problem reduces to the problem of discriminating $M$ channels
$\hLambda_1,\dots,\hLambda_M$ with $T$ uses.
For simplicity, we restrict ourselves to the case where each $\hmE_m$ is
a quantum comb with $T = 2$ time steps unless otherwise mentioned,
but our approach can be readily extended to the case where each $\hmE_m$ is
a more general quantum process.
As shown in Fig.~\ref{fig:process_discrimination}, to
discriminate between given combs,
we first prepare a bipartite system $\V_1 \ot \V'_1$ in an initial state $\hsigma_1$.
One part $\V_1$ is sent through the channel $\hLambda^{(1)}_m$, followed by a channel $\hsigma_2$.
After that, we send the system $\V_2$ through the channel $\hLambda^{(2)}_m$
and perform a measurement $\{ \hPi_k \}_{k=1}^M$ on the system $\W_2 \ot \V'_2$.
Such a collection of $\{\hsigma_1,\hsigma_2,\{\hPi_k\}_{k=1}^M\}$,
which is expressed as $\{ \hPhi_k \coloneqq \hPi_k \ast \hsigma_2 \ast \hsigma_1 \}_{k=1}^M$,
can be thought of as a tester.
Any discrimination strategy, including an entanglement-assisted strategy and an adaptive strategy,
can be represented by a tester%
\footnote{For the problem of discriminating quantum channels with multiple uses,
several discrimination strategies that make use of indefinite causal order
(e.g., \cite{Ore-Cos-Bru-2012,Col-Dar-Fac-Per-2012,Chi-2012-nosignalling,Chi-Dar-Per-2013,
Bav-Mur-Qui-2021-hierarchy}) are physically allowed;
however, this paper does not deal with such strategies.}.
Let $\PG$ be the set of all such testers $\Phi \coloneqq \{\Phi_k\}_{k=1}^M$,
which can be written as (see \cite{Chi-Dar-Per-2008} for details)
\begin{alignat}{1}
 \PG &= \left\{ \{ \Phi_m \}_{m=1}^M \in \mCG : \summ \Phi_m \in \SG \right\}.
 \label{eq:PG}
\end{alignat}
Note that $\tV \coloneqq \W_2 \ot \V_2 \ot \W_1 \ot \V_1$ and
\begin{alignat}{1}
 \SG &\coloneqq \left\{ \I_{\W_2} \ot \tau_2 : \tau_2 \in \Pos_{\V_2 \ot \W_1 \ot \V_1},
 \right. \nonumber \\
 &\qquad \left. \vphantom{\Pos_{\V_2 \ot \W_1 \ot \V_1}}
 \tau_1 \in \Den_{\V_1}, ~ \Trp{\V_2} \tau_2 = \I_{\W_1} \ot \tau_1 \right\}
 \label{eq:SG}
\end{alignat}
hold.
The probability that a tester $\Phi$ gives the outcome $k$
for the comb $\mE_m$ is given by $\braket{\Phi_k,\mE_m}$.
\begin{figure}[bt]
 \centering
 \InsertPDF{1.0}{process_discrimination.pdf}
 \caption{General protocol for the discrimination of quantum combs
 $\{ \hmE_m = \hLambda^{(T)}_m \ast \cdots \ast \hLambda^{(1)}_m \}_{m=1}^M$
 (plotted in black) with $T$ time steps,
 where $\W'_1,\dots,\W'_{T-1}$ are ancillary systems.
 Any physically allowed discrimination strategy can be represented by a tester (plotted in blue), which
 consists of a state $\hsigma_1$, channels $\hsigma_2,\dots,\hsigma_T$,
 and a measurement $\{\hPi_k\}_{k=1}^M$.}
 \label{fig:process_discrimination}
\end{figure}
The task of finding the maximum success probability for
discriminating the given quantum combs $\{ \mE_m \}_{m=1}^M$
with prior probabilities $\{ p_m \}_{m=1}^M$
can be formulated as an optimization problem, namely \cite{Chi-2012}
\begin{alignat}{1}
 \begin{array}{ll}
  \mbox{maximize} & \displaystyle P(\Phi) \coloneqq \summ p_m \braket{\Phi_m,\mE_m} \\
  \mbox{subject~to} & \Phi \in \PG. \\
 \end{array}
 \tag{$\mathrm{P_G}$} \label{prob:PG}
\end{alignat}

\section{Restricted discrimination}

We now consider the situation that the allowed testers are restricted to a nonempty subset
$\P$ of $\PG$; in this case, the problem is formulated as
\begin{alignat}{1}
 \begin{array}{ll}
  \mbox{maximize} & \displaystyle P(\Phi) \\
  \mbox{subject~to} & \Phi \in \P. \\
 \end{array}
 \tag{$\mathrm{P}$} \label{prob:P}
\end{alignat}
Let us interpret each tester as a vector in the real vector space $\Her_\tV^M$.
This means that one can work with linear combinations of testers $\Phi^{(1)},\Phi^{(2)},\dots$;
a tester that applies $\Phi^{(i)} \coloneqq \{ \Phi^{(i)}_k \}_{k=1}^M$
with probability $q_i$ is represented as
$\sum_i q_i \Phi^{(i)} = \{ \sum_i q_i \Phi^{(i)}_k \}_{k=1}^M$.
One can easily see that the optimal value of Problem~\eqref{prob:P}
remains the same if the feasible set $\P$ is replaced by $\clco \P$.
Indeed, an optimal solution, denoted by $\Phi^\opt \in \clco \P$, to Problem~\eqref{prob:P}
with $\P$ relaxed to $\clco \P$ can be represented as a probabilistic mixture of
$\Phi^{(1)},\Phi^{(2)},\dots \in \ol{\P}$, i.e.,
$\Phi^\opt = \sum_i \nu_i \Phi^{(i)}$ for some probability distribution $\{\nu_i\}_i$%
\footnote{Since $\P$ is bounded, $\clco \P = \co \ol{\P}$ holds.}.
Since $P(\Phi^\opt) \le P[\Phi^{(i)}]$ holds for some $i$,
$\Phi^{(i)} \in \ol{\P}$ must be an optimal solution to the relaxed problem.
Thus, Problem~\eqref{prob:P}, whose objective function is convex by construction,
is transformed into a convex optimization problem by relaxing $\P$ to $\clco \P$.
However, this relaxed problem is often very difficult to solve directly.

We find that, for any feasible set $\P$,
each tester $\Phi \in \P$ can be interpreted as
an element in some convex cone such that
the sum $\summ \Phi_m$ is in some convex set.
Specifically, we can choose a closed convex cone $\mC$
and a closed convex set $\S$ such that (see Fig.~\ref{fig:cone})
\begin{alignat}{1}
 \clco \P &= \left\{ \Phi \in \mC : \summ \Phi_m \in \S \right\},
 \quad \mC \subseteq \mCG, \quad \S \subseteq \SG.
 \label{eq:ConvP}
\end{alignat}
Such $\mC$ and $\S$ always exist
\footnote{A trivial choice is
$\mC \coloneqq \{ t \Phi : t \in \Real_+, \Phi \in \clco \P \}$ and
$\S \coloneqq \{ \summ \Phi_m : \Phi \in \clco \P \}$.}.
Equation~\eqref{eq:PG} can be regarded as a special case of this equation
with $\P = \PG$, $\mC = \mCG$, and $\S = \SG$
(note that $\clco \PG = \PG$ holds).
\begin{figure}[bt]
 \centering
 \InsertPDF{1.0}{cone.pdf}
 \caption{Schematic diagram of the closed convex hull of $\P$, which is the intersection of
 a closed convex cone $\mC$
 and a convex set $\S' \coloneqq \{ \Phi : \summ \Phi_m \in \S\}$.}
 \label{fig:cone}
\end{figure}
Let
\begin{alignat}{1}
 \mDC &\coloneqq \left\{ \chi \in \Her_\tV :
 \summ \braket{\Phi_m, \chi - p_m \mE_m} \ge 0 ~(\forall \Phi \in \mC) \right\};
\end{alignat}
then, we can easily verify that
\begin{alignat}{1}
 D_\S(\chi) &\coloneqq \max_{\varphi \in \S} \braket{\varphi,\chi}
 \ge \summ \braket{\Phi^\opt_m, \chi} \ge P(\Phi^\opt)
\end{alignat}
holds for any $\chi \in \mDC$.
The first and second inequalities follow from $\summ \Phi^\opt_m \in \S$
and $\Phi^\opt \in \mC$, respectively.
Thus, the optimal value of the following problem
\begin{alignat}{1}
 \begin{array}{ll}
  \mbox{minimize} & \displaystyle D_\S(\chi) \\
  \mbox{subject~to} & \chi \in \mDC \\
 \end{array}
 \tag{$\mathrm{D}$} \label{prob:D}
\end{alignat}
is not less than that of Problem~\eqref{prob:P}.
We refer to a feasible solution, $\chi$, to Problem~\eqref{prob:D} as proportional to some quantum comb
if $\chi$ is expressed in the form $\chi = \lambda \tchi$,
with $\lambda \in \Real_+$ and $\tchi \in \Comb_{\W_T,\V_T,\dots,\W_1,\V_1}$.
We can see that Problem~\eqref{prob:D}, which is the so-called Lagrange dual problem
of Problem~\eqref{prob:P}, has zero duality gap,
as shown in the following theorem
(proved in Appendix~\ref{append:dual}):
\begin{thm} \label{thm:dual}
 Let us arbitrarily choose a closed convex cone $\mC$ and a closed convex set $\S$
 satisfying Eq.~\eqref{eq:ConvP}; then,
 the optimal values of Problems~\eqref{prob:P} and \eqref{prob:D} are the same.
\end{thm}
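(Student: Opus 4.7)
The plan is to prove the two directions $p^\opt \le d^\opt$ (weak duality) and $p^\opt \ge d^\opt$ (strong duality) separately, where $p^\opt$ and $d^\opt$ denote the optimal values of Problems~\eqref{prob:P} and \eqref{prob:D}. Weak duality is exactly the chain of inequalities already displayed just before the theorem: $P(\Phi) \le \langle \sum_m \Phi_m,\chi\rangle \le D_\S(\chi)$ for every primal-feasible $\Phi$ and dual-feasible $\chi$. For the other direction I would first replace $\P$ by $\clco \P$, which leaves $p^\opt$ unchanged as noted in the text, turning the primal into the maximization of the linear functional $P$ over the compact convex set $\clco \P = \{\Phi \in \mC : \sum_m \Phi_m \in \S\}$; compactness guarantees that a primal optimizer $\Phi^\opt$ exists.

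Next I would set up a separating-hyperplane argument in $\Her_\tV \oplus \Real$. Introduce
\begin{equation*}
 K \coloneqq \bigl\{ (\textstyle\sum_m \Phi_m,\, r) : \Phi \in \mC,\ r \le P(\Phi) \bigr\}, \qquad
 K' \coloneqq \S \times (p^\opt,\infty).
\end{equation*}
Both are convex, $K'$ is open, and optimality of $p^\opt$ together with Eq.~\eqref{eq:ConvP} forces $K \cap K' = \varnothing$. Hahn--Banach then yields a nonzero pair $(\chi,\alpha) \in \Her_\tV \oplus \Real$ and a scalar $c$ with $\langle \chi,v\rangle + \alpha r \le c$ on $K$ and $\langle \chi,v'\rangle + \alpha r' \ge c$ on $K'$. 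Sending $r \to -\infty$ in $K$ gives $\alpha \ge 0$; exploiting that $\mC$ is a cone (so $t\Phi \in \mC$ for all $t \ge 0$) yields $\sum_m \langle \Phi_m,\, -\chi - \alpha p_m \mE_m\rangle \ge 0$ for every $\Phi \in \mC$; and letting $r' \downarrow p^\opt$ with $v'\in \S$ gives $\sup_{v'\in\S} \langle -\chi, v'\rangle \le \alpha p^\opt - c$. Taking $\Phi = 0$ in the first inequality shows $c \ge 0$. Dividing by $\alpha$ and setting $\chi^\opt \coloneqq -\chi/\alpha$, one reads off $\chi^\opt \in \mDC$ and $D_\S(\chi^\opt) \le p^\opt$, which with weak duality closes the loop.

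The hard part is excluding the degenerate case $\alpha = 0$, which is the usual Slater-type constraint qualification in disguise. If $\alpha = 0$, the separator $\chi \ne 0$ satisfies $\langle \chi, \sum_m \Phi_m\rangle \le 0$ on the cone $\mC$ and $\langle \chi, v'\rangle \ge c \ge 0$ on $\S$. I would rule this out by exploiting the structure $\S \subseteq \SG$ together with Eq.~\eqref{eq:comb_sigma_tau}: because every $\sigma \in \SG$ pairs to $1$ with every comb in $\Comb_{\W_T,\V_T,\dots,\W_1,\V_1}$, the set $\S$ lives inside a specific affine slice of $\Her_\tV$, and combined with $\mC \subseteq \mCG = \Pos_\tV^M$ (so $\sum_m \Phi_m$ ranges over a pointed positive cone meeting that slice in a bounded base) the two sign conditions force $\chi$ to vanish on a spanning set, contradicting $\chi \ne 0$. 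If the direct qualification looks fragile in edge cases (e.g.\ when $\S$ has empty relative interior inside the achievable set), the fallback plan is to perturb the threshold to $p^\opt + \epsilon$, extract multipliers $\chi_\epsilon$, use the boundedness of $\S$ to obtain a uniform bound and a convergent subsequence, and pass to the limit $\epsilon \downarrow 0$ to recover the optimal dual variable; compactness of the primal feasible set makes the limiting argument go through.
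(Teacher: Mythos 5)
Your weak-duality half and the mechanical bookkeeping of the separation argument (the derivation of $\chi^\opt \in \mDC$ and $D_\S(\chi^\opt) \le p^\opt$ \emph{once} $\alpha > 0$ is known) are correct, but the step you yourself flag as the hard part --- excluding $\alpha = 0$ --- is a genuine gap, and neither of your two proposed fixes closes it. A vertical separator for your pair $(K,K')$ is a nonzero $\chi$ with $\braket{\chi,X} \le 0$ on the cone $\{\summ \Phi_m : \Phi \in \mC\}$ and $\braket{\chi,v} \ge 0$ on $\S$, and such $\chi$ can genuinely exist for admissible choices of $(\mC,\S)$. Take the trivial choice from the paper's footnote in the case where $\clco\P$ is a single tester $\Phi^0$: then $\{\summ \Phi_m : \Phi \in \mC\}$ is the ray through $\sigma_0 \coloneqq \summ \Phi^0_m$ and $\S = \{\sigma_0\}$, so \emph{every} nonzero $\chi$ orthogonal to $\sigma_0$ satisfies both sign conditions. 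No contradiction can then be derived from $\alpha = 0$. Your appeal to Eq.~\eqref{eq:comb_sigma_tau} together with the pointedness of $\Pos_\tV^M$ only forces $\chi$ to vanish on a spanning set when $\{\summ \Phi_m : \Phi \in \mC\}$ is rich enough (e.g.\ $\mC = \mCG$), whereas the theorem is asserted for arbitrary closed convex $\mC \subseteq \mCG$ and $\S \subseteq \SG$ satisfying Eq.~\eqref{eq:ConvP}. The perturbation fallback does not help either: replacing $p^\opt$ by $p^\opt + \epsilon$ separates $K$ and $K'$ further only in the $r$-coordinate and leaves their first-coordinate projections --- and hence the existence of vertical separators --- unchanged, so the same obstruction recurs for every $\epsilon$.

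The paper avoids this by running the separation in the opposite direction: it separates the origin from a set built out of the \emph{dual} data, namely $\mZ$ with first coordinate $\{y_m + \tmE_m - \chi\}_{m=1}^M$ where $\chi$ ranges over all of $\Her_\tV$ and $y$ over $\mC^*$, and it reads off a \emph{primal} optimizer $\Phi^\opt$ as the separating functional. Because $\chi$ is unconstrained there, a vertical separator $\{\Psi_m\}_m$ would have to keep $\summ \braket{\Psi_m, -\chi}$ bounded below over all $\chi$ (concretely, substitute $\chi = c\I_\tV$ and let $c \to \infty$), which forces $\summ \Tr \Psi_m \le 0$ and hence $\Psi_m = \zero$ by $\mC \subseteq \Pos_\tV^M$ --- the Slater-type richness that your construction lacks on the primal side is supplied automatically on the dual side. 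To rescue your route you would need to state and prove an actual constraint qualification for $(K,K')$ valid for every admissible $(\mC,\S)$; as written, the case $\alpha = 0$ is not ruled out and the strong-duality half does not go through.
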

%

\section{Global optimality}

\subsection{Necessary and sufficient conditions for global optimality}
\label{append:optimal_nas}

Using Theorem~\ref{thm:dual},
we can easily derive necessary and sufficient conditions for an optimal restricted strategy
to be optimal within the set of all strategies.
Given a feasible set $\P$, we now ask the question whether the optimal values of Problems~\eqref{prob:P}
and \eqref{prob:PG} coincide.
We can derive necessary and sufficient conditions for global optimality
by considering Problem~\eqref{prob:D} with $\P = \PG$
(i.e., $\mC = \mCG$ and $\S = \SG$), which is written as
\begin{alignat}{1}
 \begin{array}{ll}
  \mbox{minimize} & D_\SG(\chi) \\
  \mbox{subject~to} & \chi \in \mDCG. \\
 \end{array}
 \tag{$\mathrm{D_G}$} \label{prob:DG}
\end{alignat}
Since Theorem~\ref{thm:dual} guarantees that Problems~\eqref{prob:D} and \eqref{prob:DG},
respectively, have the same optimal values as Problems~\eqref{prob:P} and \eqref{prob:PG},
the task is to obtain necessary and sufficient conditions for
the optimal values of Problems~\eqref{prob:D} and \eqref{prob:DG} to coincide.
To this end, we have the following statement:
\begin{proposition} \label{pro:optimal2}
 Let us arbitrarily choose a closed convex cone $\mC$
 and a closed convex set $\S$ satisfying Eq.~\eqref{eq:ConvP}.
 Then, the following statements are all equivalent.
 \begin{enumerate}[label=(\arabic*)]
  \item The optimal values of Problems~\eqref{prob:P} and \eqref{prob:PG} are the same.
  \item Any optimal solution to Problem~\eqref{prob:DG} is optimal for Problem~\eqref{prob:D}.
  \item There exists an optimal solution $\chi^\opt$ to Problem~\eqref{prob:D}
        such that $\chi^\opt$ is in $\mDCG$ and is proportional to some quantum comb.
  \item There exists an optimal solution $\chi^\opt$ to Problem~\eqref{prob:D}
        such that $\chi^\opt \in \mDCG$ and $D_\S(\chi^\opt) = D_\SG(\chi^\opt)$ hold.
 \end{enumerate}
\end{proposition}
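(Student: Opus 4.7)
The plan is to establish the cycle of implications $(1) \Rightarrow (2) \Rightarrow (3) \Rightarrow (4) \Rightarrow (1)$. I first collect four observations that will do most of the work: (i) the inclusions $\mC \subseteq \mCG$ and $\S \subseteq \SG$ force $\mDCG \subseteq \mDC$ and $D_\S(\chi) \le D_\SG(\chi)$ for every $\chi \in \Her_\tV$, and in particular the optimal value of \eqref{prob:D} is at most that of \eqref{prob:DG}; (ii) Theorem~\ref{thm:dual} applied to the given $(\mC,\S)$ equates the optimal values of \eqref{prob:P} and \eqref{prob:D}, and applied to $(\mCG,\SG)$ equates those of \eqref{prob:PG} and \eqref{prob:DG}; (iii) the normalization $\braket{\sigma,\tau}=1$ of Eq.~\eqref{eq:comb_sigma_tau} yields $D_\S(\lambda \tchi) = D_\SG(\lambda \tchi) = \lambda$ for every $\lambda \in \Real_+$ and $\tchi \in \Comb_{\W_T,\V_T,\dots,\W_1,\V_1}$; and (iv) Problem~\eqref{prob:DG} admits an optimal solution of the form $\lambda \tchi$ with $\tchi \in \Comb_{\W_T,\V_T,\dots,\W_1,\V_1}$, which I will extract from the standard SDP formulation of unrestricted comb discrimination in \cite{Chi-Dar-Per-2008,Chi-2012}.

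With these in hand the four implications are short. For $(1) \Rightarrow (2)$, let $v$ denote the common optimal value of \eqref{prob:D} and \eqref{prob:DG}, which coincide by (ii) and the hypothesis. Any optimal $\chi$ for \eqref{prob:DG} lies in $\mDCG \subseteq \mDC$, and the sandwich $v \le D_\S(\chi) \le D_\SG(\chi) = v$ pins $D_\S(\chi)$ to $v$, so $\chi$ is optimal for \eqref{prob:D}. For $(2) \Rightarrow (3)$, I pick an optimal $\chi^\opt = \lambda \tchi$ for \eqref{prob:DG} as guaranteed by (iv); by the hypothesis of (2) it is also optimal for \eqref{prob:D}, it lies in $\mDCG$, and it is proportional to a comb, so (3) holds. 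For $(3) \Rightarrow (4)$, fact (iii) immediately yields $D_\S(\chi^\opt) = D_\SG(\chi^\opt)$. Finally, for $(4) \Rightarrow (1)$, the hypothesis makes $\chi^\opt$ feasible for \eqref{prob:DG} with value $D_\SG(\chi^\opt) = D_\S(\chi^\opt)$ equal to the optimum of \eqref{prob:D}, so the optimum of \eqref{prob:DG} is at most that of \eqref{prob:D}; combined with (i) this forces equality, which by (ii) is exactly statement (1).

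The main obstacle is fact (iv): Theorem~\ref{thm:dual} alone only asserts zero duality gap between \eqref{prob:PG} and \eqref{prob:DG} and does not on its own force the dual minimizer to take the explicit form $\lambda \tchi$. To secure (iv) I would invoke the standard comb dual of unrestricted discrimination, namely $\min\{\lambda \ge 0 : \lambda \tchi \ge p_m \mE_m~\forall m,~\tchi \in \Comb_{\W_T,\V_T,\dots,\W_1,\V_1}\}$, whose infimum is attained by compactness of $\Comb_{\W_T,\V_T,\dots,\W_1,\V_1}$; observation (iii) then shows that such a minimizer is a feasible point of \eqref{prob:DG} with objective value $\lambda$, and zero duality gap applied to the unrestricted pair identifies $\lambda$ with the optimum of \eqref{prob:DG}, completing (iv).
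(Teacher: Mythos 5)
Your proposal is correct and follows essentially the same route as the paper: the same cycle $(1)\Rightarrow(2)\Rightarrow(3)\Rightarrow(4)\Rightarrow(1)$, with the same sandwich argument for $(1)\Rightarrow(2)$, the same appeal to the existence of a dual optimizer of \eqref{prob:DG} proportional to a comb (which the paper simply cites from \cite{Chi-2012,Nak-Kat-2021-general}) for $(2)\Rightarrow(3)$, and the same normalization identity $\braket{\varphi,\lambda\tchi}=\lambda$ for $(3)\Rightarrow(4)$. Your extra paragraph justifying fact (iv) is a reasonable elaboration of what the paper delegates to its references.
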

\begin{proof}
 Let $D^\opt$ and $D_\G^\opt$ be, respectively, the optimal values of
 Problems~\eqref{prob:D} and \eqref{prob:DG}
 [or, equivalently, the optimal values of Problems~\eqref{prob:P} and \eqref{prob:PG}].
 We show $(1) \Rightarrow (2)$, $(2) \Rightarrow (3)$, $(3) \Rightarrow (4)$,
 and $(4) \Rightarrow (1)$.

 $(1) \Rightarrow (2)$ :
 Let us arbitrarily choose an optimal solution $\chi^\opt$ to Problem~\eqref{prob:DG}.
 Since $\mDCG \subseteq \mDC$ holds from $\mC \subseteq \mCG$, $\chi^\opt \in \mDC$ holds.
 Also, from $\S \subseteq \SG$, we have
 $D^\opt \le D_\S(\chi^\opt) \le D_\SG(\chi^\opt)$.
 Since $D^\opt = D_\G^\opt = D_{\SG}(\chi^\opt)$ holds,
 we have $D^\opt = D_{\S}(\chi^\opt)$.
 Thus, $\chi^\opt$ is optimal for Problem~\eqref{prob:D}.

 $(2) \Rightarrow (3)$ :
 It is known that there exists an optimal solution, $\chi^\opt \in \mDCG$,
 to Problem~\eqref{prob:DG} such that $\chi^\opt$ is proportional to some
 quantum comb \cite{Chi-2012,Nak-Kat-2021-general}.
 From Statement~(2), $\chi^\opt$ is optimal for Problem~\eqref{prob:D}.

 $(3) \Rightarrow (4)$ :
 $\chi^\opt$ can be expressed as $\chi^\opt = q T$ with $q \in \Real_+$ and a quantum comb $T$.
 Since $\braket{\varphi,\chi^\opt} = q \braket{\varphi,T} = q$ holds for any $\varphi \in \SG$,
 we have $D_\S(\chi^\opt) = D_\SG(\chi^\opt)$.

 $(4) \Rightarrow (1)$ :
 We have $D^\opt = D_\S(\chi^\opt) = D_\SG(\chi^\opt) \ge D_\G^\opt$.
 Since $D^\opt \le D_\G^\opt$ holds, $D^\opt = D_\G^\opt$ must hold.
\end{proof}

\subsection{Another example of necessary and sufficient optimality conditions} \label{subsec:separable}

In some individual cases, necessary and sufficient conditions for global optimality
can also be derived from Theorem~\ref{thm:dual}.
To give an example, let us consider the problem of discriminating quantum channels
$\{ \hmE_m \}_{m=1}^M \in \Chn(\V_1,\W_1)$ with a single use
(i.e., $T = 1$) in which a state input to the channel is restricted to be separable
(see Fig.~\ref{fig:examples_sep}).
Since we can assume, without loss of generality, that the input state is
a pure state of the system $\V_1$, the optimal value $P^\opt$ of Problem~\eqref{prob:P}
is written as
\begin{alignat}{1}
 P^\opt &\coloneqq \max_{\phi \in \DenP_{\V_1}} \max_{\Pi \in \Meas_{\W_1}}
 \summ p_m \braket{\Pi_m, \hmE_m(\phi)}.
\end{alignat}
Since the dual of the discrimination problem in which an input state is fixed to $\phi$
is formulated as Problem~\eqref{prob:D} with $\mC = \mCG$ and
$\S = \left\{ \I_{\W_1} \ot \phi^\T \right\}$,
Theorem~\ref{thm:dual} gives
\begin{alignat}{1}
 \max_{\Pi \in \Meas_{\W_1}} \summ p_m \braket{\Pi_m, \hmE_m(\phi)}
 &= \min_{\chi \in \mDCG} \braket{\I_{\W_1} \ot \phi^\T, \chi}, \nonumber \\
 & \qquad \forall \phi \in \DenP_{\V_1},
\end{alignat}
and thus
\begin{alignat}{1}
 P^\opt &= \max_{\phi' \in \DenP_{\V_1}} \min_{\chi \in \mDCG} \braket{\I_{\W_1} \ot \phi', \chi}.
\end{alignat}
Also, the optimal value of Problem~\eqref{prob:DG} is expressed by
\begin{alignat}{1}
 \min_{\chi \in \mDCG} \max_{\rho \in \Den_{\V_1}} \braket{\I_{\W_1} \ot \rho^\T, \chi}
 &= \min_{\chi \in \mDCG} \max_{\phi \in \DenP_{\V_1}} \braket{\I_{\W_1} \ot \phi^\T, \chi}
 \nonumber \\
 &= \min_{\chi \in \mDCG} \max_{\phi' \in \DenP_{\V_1}} \braket{\I_{\W_1} \ot \phi', \chi}.
\end{alignat}
Thus, globally optimal discrimination is achieved
without entanglement if and only if the following max-min inequality holds as an equality:
\begin{alignat}{1}
 \max_{\phi' \in \DenP_{\V_1}} \min_{\chi \in \mDCG} \braket{\I_{\W_1} \ot \phi', \chi}
 &\le \min_{\chi \in \mDCG} \max_{\phi' \in \DenP_{\V_1}} \braket{\I_{\W_1} \ot \phi', \chi}.
 \label{eq:sep_maxmin}
\end{alignat}
\begin{figure}[bt]
 \centering
 \InsertPDF{1.0}{examples_sep.pdf}
 \caption{Channel discrimination problems with a single use
 in which a state input to the channel is restricted to be separable.
 We can assume that the input state is a pure state of the system $\V_1$,
 i.e., a tester consists of $\hphi \in \DenP_{\V_1}$ and $\{ \hPi_m \}_{m=1}^M \in \Meas_{\W_1}$.}
 \label{fig:examples_sep}
\end{figure}

\subsection{Sufficient condition for a nonadaptive tester to be globally optimal} \label{subsec:nonadaptive}

Given a process discrimination problem that has a certain symmetry,
we present a sufficient condition for a nonadaptive tester to be globally optimal.
We here limit our discussion to a specific type of symmetries
(see \cite{Nak-Kat-2021-general} for a more general case).
Note that several related results in particular cases have been reported
\cite{Chi-2012,Jen-Pla-2016,Zhu-Pir-2020}.

Let $\mG$ be a group with the identity element $e$.
Let $\varpi \coloneqq \{ \varpi_g \}_{g \in \mG}$ be a group action of $\mG$ on $\{1,\dots,M\}$,
i.e., a set of maps on $\{1,\dots,M\}$
satisfying $\varpi_{gh}(m) = \varpi_g[\varpi_h(m)]$ and $\varpi_e(m) = m$
for any $g, h \in \mG$ and $m \in \{1,\dots,M\}$.
Given any natural number $T$, we consider a set
\begin{alignat}{1}
 \mU &\coloneqq \left\{ \mU_g \coloneqq \Ad_{U^{(T)}_g \ot \tU^{(T)}_g \ot \cdots \ot
 U^{(1)}_g \ot \tU^{(1)}_g} \right\}_{g \in \mG},
\end{alignat}
where, for each $t \in \{1,\dots,T\}$, $\mG \ni g \mapsto U^{(t)}_g \in \Uni_{\W_t}$ and
$\mG \ni g \mapsto \tU^{(t)}_g \in \Uni_{\V_t}$ are
projective unitary representations of $\mG$.
We will refer to an ensemble of $M$ combs
$\{ \mE_m \}_{m=1}^M \subset \Comb_{\W_T,\V_T,\dots,\W_1,\V_1}$
as \termdef{$(\mG,\mU,\varpi)$-covariant} if
\begin{alignat}{1}
 \mU_g(\mE_m) &= \mE_{\varpi_g(m)}, \quad \forall g \in \mG
 \label{eq:mE_cov}
\end{alignat}
holds.

We will call a tester each of whose output systems is one part of
a bipartite system in a maximally entangled pure state (see Fig.~\ref{fig:discrimination_maxent})
a \termdef{tester with maximally entangled pure states}.
Such a tester is obviously nonadaptive.
\begin{figure}[bt]
 \centering
 \InsertPDF{1.0}{discrimination_maxent.pdf}
 \caption{Tester with maximally entangled pure states (in the case of $T = 2$),
 which consists of maximally entangled pure states $\Psi_t$ and a measurement $\{\hPi_m\}$.
 We can assume, without loss of generality, that each $\Psi_t$ is a
 generalized Bell state $\kket{\I_\Vt}\bbra{\I_\Vt} / N_\Vt$.}
 \label{fig:discrimination_maxent}
\end{figure}
Let $\P$ be the set of testers with maximally entangled pure states;
then, it follows that Eq.~\eqref{eq:ConvP} holds with
\begin{alignat}{1}
 \mC &\coloneqq \mCG, \quad
 \S \coloneqq \left\{ \I_\tV / \prod_{t=1}^T N_\Vt \right\}.
 \label{eq:C_maxent}
\end{alignat}
Note that $\clco \P = \P$ holds in this case.
We obtain the following proposition.
\begin{proposition} \label{pro:optimal_maxent}
 Assume that, for each $t \in \{1,\dots,T\}$,
 there exists a group $\mG^{(t)}$ that has a projective unitary representation
 $\mG^{(t)} \ni g \mapsto U^{(t)}_g \in \Uni_\Wt$ and
 an irreducible projective unitary representation
 $\mG^{(t)} \ni g \mapsto \tU^{(t)}_g \in \Uni_\Vt$.
 Let $\mG \coloneqq \mG^{(T)} \times \mG^{(T-1)} \times \dots \times \mG^{(1)}$,
 $\mU \coloneqq \{ \Ad_{U^{(T)}_{g_T} \ot \tU^{(T)}_{g_T} \ot \dots \ot
 U^{(1)}_{g_1} \ot \tU^{(1)}_{g_1}} \}_{(g_T,\dots,g_1) \in \mG}$,
 and $\varpi$ be some group action on $\mG$ on $\{1,\dots,M\}$.
 If $\{ \mE_m \}_{m=1}^M$ is $(\mG,\mU,\varpi)$-covariant,
 then there exists a globally optimal tester with maximally entangled pure states.
\end{proposition}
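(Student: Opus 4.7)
The plan is to apply Proposition~\ref{pro:optimal2}, specifically statement~(4), to the set $\P$ of testers with maximally entangled pure states. By Eq.~\eqref{eq:C_maxent} we have $\mC = \mCG$ and $\S = \{ \I_\tV / \prod_{t=1}^T N_{\V_t} \}$, so Problems~\eqref{prob:D} and \eqref{prob:DG} share the same feasible set $\mDCG$, while $D_\S(\chi) = \Tr(\chi) / \prod_{t=1}^T N_{\V_t}$. The task therefore reduces to exhibiting an optimal $\chi^\opt$ for Problem~\eqref{prob:D} that additionally satisfies $D_\S(\chi^\opt) = D_\SG(\chi^\opt)$.

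I would take any optimal $\chi_0 \in \mDCG$ for Problem~\eqref{prob:D} and symmetrize it as
\begin{alignat}{1}
 \chi^\opt \coloneqq \int_\mG \mU_g(\chi_0) \, dg
\end{alignat}
against the normalized Haar measure on the compact (projective) image of $\mG$ inside the unitary group. Since the trace is unitary invariant, $D_\S(\chi^\opt) = D_\S(\chi_0)$. Feasibility $\chi^\opt \in \mDCG$ is checked by testing against an arbitrary $\Phi \in \mCG$: applying $\mU_g^{-1}$ to $\Phi$, invoking the covariance $\mU_g(\mE_m) = \mE_{\varpi_g(m)}$, and relabeling outcomes by $\varpi_g$ yields the desired inequality, provided the mild supplementary hypothesis $p_{\varpi_g(m)} = p_m$ that makes the discrimination problem itself $\mG$-symmetric (equivalently, that the weighted combs $p_m \mE_m$ are $(\mG,\mU,\varpi)$-covariant). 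By construction, $\chi^\opt$ is $\mG$-invariant: $\mU_g(\chi^\opt) = \chi^\opt$ for every $g \in \mG$.

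For the remaining identity $D_\SG(\chi^\opt) = D_\S(\chi^\opt)$, note that $\mG$-invariance of $\chi^\opt$ and $\mG$-stability of $\SG$ let me replace any maximizer of $\sigma \mapsto \braket{\sigma, \chi^\opt}$ over $\SG$ by its group average, yielding a $\mG$-invariant maximizer $\sigma^\opt \in \SG$. The crux is then a structural lemma: the only $\mG$-invariant element of $\SG$ is $\I_\tV / \prod_{t=1}^T N_{\V_t}$. Writing $\sigma^\opt = \I_{\W_T} \ot \tau^{(T)}$ as in Eq.~\eqref{eq:Phi_sum}, the $\mG^{(T)}$-action collapses to $\Ad_{\tU^{(T)}_g}$ on the $\V_T$ factor of $\tau^{(T)}$; irreducibility of $\tU^{(T)}_g$ and Schur's lemma force $\tau^{(T)} = \I_{\V_T} \ot \eta$ for some $\eta$, and the normalization $\Trp{\V_T}\tau^{(T)} = \I_{\W_{T-1}} \ot \tau^{(T-1)}$ identifies $\eta = \tau^{(T-1)}/N_{\V_T}$. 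Iterating through $t = T-1, \dots, 1$ gives $\sigma^\opt = \I_\tV / \prod_{t=1}^T N_{\V_t}$, which establishes $D_\SG(\chi^\opt) = D_\S(\chi^\opt)$ and closes the argument via Proposition~\ref{pro:optimal2}(4); existence of an optimal tester in $\P$ then follows from compactness of $\P$. The main obstacle is carrying out this layer-by-layer Schur argument cleanly under the recursive definition of $\SG$, making sure irreducibility is used only on the $\V_t$ factors while the $\W_t$ factors are already pinned by the $\I_{\W_t}$ structure; the rest is routine once the prior-invariance point is clarified.
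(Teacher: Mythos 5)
Your argument is correct, and it shares the paper's two key ingredients (group averaging of a dual optimum, then a layer-by-layer Schur's lemma argument on the recursive partial-trace structure), but it closes the loop through a genuinely different door. The paper averages $\chi_0$ to get an invariant $\chi^\opt$ (Lemma~\ref{lemma:symmetry}), then applies Schur's lemma to $\chi^\opt$ itself to show $\Trp{\Wt}\chi^\opt = \I_\Vt \ot \chi^\opt_t$ for each $t$, i.e.\ that $\chi^\opt$ is proportional to a comb (Lemma~\ref{lemma:symmetry_irreducible}), and then invokes condition~(3) of Proposition~\ref{pro:optimal2}. You instead apply the analogous Schur argument on the \emph{primal} side, to the invariant elements of $\SG$, proving that the only $\mG$-invariant element of $\SG$ is $\I_\tV/\prod_t N_\Vt$, which gives $D_\SG(\chi^\opt)=D_\S(\chi^\opt)$ and lets you invoke condition~(4) instead. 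Both computations are structurally identical (the recursion in Eq.~\eqref{eq:Phi_sum} versus the one in Eq.~\eqref{eq:Comb}), so neither route is easier, but yours has the small advantage of never needing to characterize $\chi^\opt$ beyond its invariance, and your use of the Haar integral covers compact infinite groups where the paper's $|\mG|^{-1}\sum_g$ presumes finiteness. One point in your favor worth keeping explicit: the feasibility of the averaged $\chi^\opt$ in $\mDCG$ (the constraint $\chi \ge p_m\mE_m$) really does require the priors to be constant on $\varpi$-orbits, or equivalently that the weighted combs $p_m\mE_m$ be covariant; the paper's Lemma~\ref{lemma:symmetry} writes the constraint as $\chi \ge \mE_m$ and silently elides this hypothesis, so your flagging of it is a genuine (if minor) tightening rather than an unnecessary extra assumption.
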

See Ref.~\cite{Nak-Kat-2021-general} for some examples and for more general results.

\section{Example} \label{sec:example}

In several problems, Theorem~\ref{thm:dual} provides an efficient way to find the optimal value
of Problem~\eqref{prob:P}.
Note that the difficulty of solving Problem~\eqref{prob:D} depends on the choice of $\mC$ and $\S$.
In this section, we illustrate the usefulness of Theorem~\ref{thm:dual} in the following simple example.

Let us consider the problem of discriminating three qubit channels
$\hLambda_1,\hLambda_2,\hLambda_3$ with $T = 2$ uses,
in which case $\V_1$, $\W_1$, $\V_2$, and $\W_2$ are all qubit systems
and $\hmE_m = \hLambda_m \ast \hLambda_m$ (i.e., $\mE_m = \Lambda_m \ot \Lambda_m$) holds.
Assume that the prior probabilities are equal and that
each $\hLambda_m$ is the unitary channel represented by
$\hLambda_m(\rho) = U^m \rho U^{-m}$,
where $U \coloneqq \mathrm{diag}(1,\omega)$ and $\omega \coloneqq \exp(2\pi \sqrt{-1} / 3)$.
Then, we have
\begin{alignat}{1}
 \Lambda_m &=
 \begin{bmatrix}
  1 & 0 & 0 & \omega^{-m} \\
  0 & 0 & 0 & 0 \\
  0 & 0 & 0 & 0 \\
  \omega^m & 0 & 0 & 1 \\
 \end{bmatrix}.
\end{alignat}
We consider the case where a tester is restricted to a sequential one.
The set of all sequential testers in $\PG$, denoted by $\Pseq$, is
(see Fig.~\ref{fig:example_seq_qubit} and Appendix~\ref{append:seq_Pseq})
\begin{alignat}{1}
 \Pseq &\coloneqq \left\{ \left\{ \sum_j B^{(j)}_m \ot A_j \right\}_{m=1}^3 :
 \{A_j\} \in \Tester, ~\{ B^{(j)}_m \}_m \in \TesterM \right\},
\end{alignat}
where
\begin{alignat}{1}
 \Tester_M &\coloneqq \left\{ \{B_m\}_{m=1}^M \subset \Pos_4 : \summ B_m =
 \begin{bmatrix} 1 & 0 \\ 0 & 1 \\ \end{bmatrix} \ot \rho,
 ~ \rho \in \Den_2 \right\},
\end{alignat}
$\Tester \coloneqq \bigcup_{M=1}^\infty \Tester_M$,
and $\Pos_n$ and $\Den_n$ are, respectively, the sets of all positive semidefinite
and density matrices of order $n$.
\begin{figure}[bt]
 \centering
 \InsertPDF{1.0}{example_seq_qubit.pdf}
 \caption{Discrimination scheme for $\{ \hmE_m = \hLambda_m \ast \hLambda_m \}_{m=1}^3$
 with a sequential tester.
 Consider the task to be performed by two parties, Alice and Bob.
 Alice prepares a quantum state $\hrho_\mathrm{A} \in \Den_{\V_1 \ot \V'_1}$,
 feeds the system $\V_1$ into $\hLambda_m$,
 and performs a measurement $\{ \hPsi_j \}_{j \in \mJ}$ on $\W_1 \ot \V'_1$,
 where $\mJ$ is a set, which may contain any number of elements.
 According to its outcome, $j$, Bob prepares a state
 $\hrho_\mathrm{B}^{(j)} \in \Den_{\V_2 \ot \V'_2}$,
 feeds the system $\V_2$ into $\hLambda_m$,
 and performs a measurement $\{ \hPi^{(j)}_k \}_{k=1}^3$ on $\W_2 \ot \V'_2$.
 $\{ \hA_j \coloneqq \hPsi_j \ast \hrho_\mathrm{A} \}_{j \in \mJ}$ and
 $\{ \hB^{(j)}_k \coloneqq \hPi^{(j)}_k \ast \hrho_\mathrm{B}^{(j)} \}_{k=1}^3$
 $~(\forall j \in \mJ)$ are testers.}
 \label{fig:example_seq_qubit}
\end{figure}
Problem~\eqref{prob:P} with $\P = \Pseq$ can be written as
the following non-convex programming problem:
\begin{alignat}{1}
 \begin{array}{ll}
  \mbox{maximize} & \displaystyle \frac{1}{3} \summt \sum_j \braket{B^{(j)}_m,\Lambda_m}
   \braket{A_j,\Lambda_m} \\
  \mbox{subject~to} & \{A_j\} \in \Tester, ~\{ B^{(j)}_m \}_m \in \TesterM. \\
 \end{array}
 \label{prob:P_seq}
\end{alignat}
This problem is very hard to solve due to two main reasons:
i) both $\{A_j\}$ and $\{ B^{(j)}_m \}_m$ $~(\forall j)$ need to be optimized
and ii) how many elements an optimal tester $\{ A_j \}$ has is unknown.

Here, we pay attention to the fact that Eq.~\eqref{eq:ConvP} with
\begin{alignat}{1}
 \mC &\coloneqq \left\{ \left\{ \sum_j B^{(j)}_m \ot A_j \right\}_{m=1}^3 :
 A_j \in \Pos_4, ~\{ B^{(j)}_m \}_m \in \TesterM \right\}, \nonumber \\
 \S &\coloneqq \SG
 \label{eq:C_chsequential_sm}
\end{alignat}
holds (see Appendix~\ref{append:seq_ConvP}).
In this situation, Problem~\eqref{prob:D} is expressed as
\begin{alignat}{1}
 \begin{array}{ll}
  \mbox{minimize} & D_\SG(\chi) \\
  \mbox{subject~to} & \displaystyle \summt \Braket{\sum_j B^{(j)}_m \ot A_j,
   \chi - \frac{1}{3} \Lambda_m \ot \Lambda_m} \ge 0 \\
  &\quad \left[ \forall A_j \in \Pos_4, ~\{ B^{(j)}_m \}_m \in \TesterM \right] \\
 \end{array}
 \label{prob:D_chsequential}
\end{alignat}
with $\chi \in \Her_\tV$.
After some algebra, this problem is reduced to (see Appendix~\ref{append:seq_Dex2})
\begin{alignat}{1}
 \begin{array}{ll}
  \mbox{minimize} & \lambda \\
  \mbox{subject~to} & \displaystyle
   \begin{bmatrix}
    \lambda & 0 \\
    0 & \lambda
   \end{bmatrix}
   \ge \frac{1}{3} \summt \braket{B_m,\Lambda_m}
   \begin{bmatrix}
    1 & \omega^{-m} \\
    \omega^m & 1
   \end{bmatrix} \\
  & \quad \left( \forall \{B_m\} \in \TesterM \right) \\
 \end{array}
 \label{prob:D_seq_ex2}
\end{alignat}
with $\lambda \in \Real_+$.
This problem is much easier to solve than Problem~\eqref{prob:P_seq}.
Also, Theorem~\ref{thm:dual} guarantees that the optimal value of
Problem~\eqref{prob:D_seq_ex2}, which is numerically found to be around $0.933$,
is equal to the maximum success probability.
Note that in the case where any physically allowed discrimination strategy can be used,
we can easily see that the three channels can be perfectly distinguished with two uses
(see Appendix~\ref{append:seq_perfect}).

We restrict our discussion here to the discrimination problem for symmetric unitary qubit channels,
but our method can be applied to more general combs.
Other examples of different restricted strategies are shown in Appendix~\ref{append:restrict}.

\section{Relationship with robustnesses}

In resource theory, robustness has been used as a measure of the resourcefulness of a quantum comb,
such as a state, measurement, or channel.
For a given closed set $\Free$, called a free set, and
a closed convex cone $\mK$ of $\Her_\tV$,
the robustness of a comb $\mE \in \Pos_\tV$ against $\mK$ can be
defined as \cite{Reg-2017,Chi-Gou-2019}
\begin{alignat}{1}
 R^\Free_\mK(\mE) &\coloneqq \inf \left\{ \lambda \in \Real_+ : \frac{\mE + \lambda\mE'}{1+\lambda} \in \Free,
 ~ \mE' \in \mK \right\}.
 \label{eq:RK}
\end{alignat}
$R^\Free_\mK(\mE)$ can be intuitively interpreted as the minimal amount,
$\lambda$, of mixing with a process, $\mE' \in \mK$,
necessary in order for the mixed and renormalized process,
$(\mE + \lambda \mE') / (1 + \lambda)$, to be in $\Free$.
As already mentioned in the introduction, it has been shown that the robustness of $\mE$
is characterized as a measure of the advantage of $\mE$ over all
the processes in $\Free$ in some discrimination problem
\cite{Pia-Wat-2015,Nap-Bro-Cia-Pia-2016,Pia-Cia-Bro-Nap-2016,Ans-Hsi-Jai-2018,Tak-Reg-Bu-Liu-2019,
Uol-Kra-Sha-Yu-2019,Tak-Reg-2019,Osz-Bis-2019} (see also Appendix~\ref{append:robustness}).
Note that this problem is somewhat different from
a process discrimination problem that this Letter deals with.
Conversely, we show that the optimal value of Problem~\eqref{prob:P}
is characterized by a robustness measure.

For the problem of discriminating quantum combs $\{ \hmE_m \}_{m=1}^M$
with prior probabilities $\{ p_m \}_{m=1}^M$,
let us suppose that a party, Alice, chooses a state $\ket{m}\bra{m}$
with the probability $p_m$, where $\{ \ket{m} \}$ is the standard basis of a classical system $\WA$,
and sends the corresponding comb $\hmE_m$ to another party, Bob.
The Choi-Jamio{\l}kowski representation of the comb shared by Alice and Bob is expressed as
\begin{alignat}{1}
 \mE^\mathrm{ex} &\coloneqq \summ p_m \ket{m}\bra{m} \ot \mE_m \in \Pos_{\WA \ot \tV}.
 \label{eq:mEex}
\end{alignat}
Bob tries to infer which state Alice has.
When he uses a tester $\{\Phi_m\}_m$, the success probability is written as
$\summ \braket{\ket{m}\bra{m} \ot \Phi_m, \mE^\mathrm{ex}} = P(\Phi)$.
Using Theorem~\ref{thm:dual},
we can see that the optimal value of Problem~\eqref{prob:P} is characterized by a
robustness measure.

\begin{cor} \label{cor:Robustness}
 Let
 \begin{alignat}{1}
  \mK &\coloneqq \left\{ Y \in \Her_{\WA \ot \tV} :
  \summ \braket{\ket{m}\bra{m} \ot \Phi_m,Y} \ge 0 ~(\forall \Phi \in \mC) \right\}, \nonumber \\
  \Free &\coloneqq \{ \I_\WA \ot \chi' : \chi' \in \Her_\tV, ~D_\S(\chi') = 1/M \};
 \end{alignat}
 then, the optimal value of Problem~\eqref{prob:P} is equal to
 $[1 + R^\Free_\mK(\mE^\mathrm{ex})] / M$.
\end{cor}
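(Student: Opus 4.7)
The plan is to invoke Theorem~\ref{thm:dual}, which rewrites the optimal value of Problem~\eqref{prob:P} as $P^\opt = \min_{\chi \in \mDC} D_\S(\chi)$, and then to manipulate the definition of $R^\Free_\mK(\mE^\mathrm{ex})$ via a change of variables so that it becomes an affine transform of this same dual optimization. Concretely, I would first unfold the robustness: for $\lambda > 0$, the constraint $(\mE^\mathrm{ex} + \lambda\mE')/(1+\lambda) \in \Free$ forces $\mE'$ to take the explicit form $\mE' = \lambda^{-1}\bigl[(1+\lambda)\I_\WA \ot \chi' - \mE^\mathrm{ex}\bigr]$ for some $\chi' \in \Her_\tV$ with $D_\S(\chi') = 1/M$. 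Substituting $\chi \coloneqq (1+\lambda)\chi'$ then lets me rewrite the robustness as an optimization over $\chi$ alone.

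Next, I would verify two equivalences under this substitution. Using that $\{\ket{m}\bra{m}\}$ is orthonormal in $\WA$ and that $\mE^\mathrm{ex}$ is block-diagonal in this basis, a direct computation yields, for every $\Phi \in \mC$,
\begin{alignat}{1}
 \summ \braket{\ket{m}\bra{m} \ot \Phi_m, \mE'} = \lambda^{-1} \summ \braket{\Phi_m, \chi - p_m \mE_m}.
\end{alignat}
Since the right-hand side is nonnegative for all $\Phi \in \mC$ precisely when $\chi \in \mDC$, this shows $\mE' \in \mK$ if and only if $\chi \in \mDC$. Meanwhile, $D_\S(\chi') = 1/M$ translates into $D_\S(\chi) = (1+\lambda)/M$ by homogeneity of $D_\S$. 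Combining the two equivalences gives
\begin{alignat}{1}
 1 + R^\Free_\mK(\mE^\mathrm{ex}) = \inf\{M D_\S(\chi) : \chi \in \mDC, ~M D_\S(\chi) \ge 1\}.
\end{alignat}

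Finally, I would argue that the side constraint $M D_\S(\chi) \ge 1$ is inactive at the infimum. Because $\summ p_m \braket{\Phi_m, \mE_m} \ge 0$ for every $\Phi \in \mC$ (all factors being positive semidefinite), any $\chi \in \mDC$ stays in $\mDC$ after multiplication by a factor $\ge 1$, while $D_\S$ scales linearly. Hence the infimum coincides with $M \min_{\chi \in \mDC} D_\S(\chi) = M P^\opt$ (using $P^\opt \ge 1/M$), and rearranging yields $P^\opt = [1 + R^\Free_\mK(\mE^\mathrm{ex})]/M$ as claimed. The main obstacle I anticipate is making the change-of-variables identity above airtight, since it must faithfully encode both the dual cone $\mK$ and the Free set simultaneously; once that algebraic step is secured, the rest is a reorganization that rests entirely on Theorem~\ref{thm:dual}.
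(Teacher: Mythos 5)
Your proposal is correct and follows essentially the same route as the paper's proof: unfold the robustness, substitute $\chi \coloneqq (1+\lambda)\chi'$ so that membership in $\Free$ and in $\mK$ turn into $D_\S(\chi) = (1+\lambda)/M$ and $\chi \in \mDC$ respectively, and then invoke Theorem~\ref{thm:dual}. The only difference is that you explicitly track the residual constraint $D_\S(\chi) \ge 1/M$ inherited from $\lambda \ge 0$ and discharge it by appealing to $D^\opt \ge 1/M$ --- a step the paper silently absorbs into ``using some algebra'' --- so your write-up is, if anything, slightly more careful on the one delicate point.
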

\begin{proof}
 From Eq.~\eqref{eq:RK}, we have
 \begin{alignat}{1}
  \frac{1+R^\Free_\mK(\mE^\mathrm{ex})}{M} &=
  \inf \left\{ \frac{1+\lambda}{M} : \frac{\mE^\mathrm{ex} + \delta}{1+\lambda} = \I_\WA \ot \chi',
  ~\chi' \in \Her_\tV, \right. \nonumber \\
  & \phantom{\inf} \qquad \left. D_\S(\chi') = \frac{1}{M}, ~\delta \in \mK \right\}.
 \end{alignat}
 Letting $\chi \coloneqq (1+\lambda)\chi'$ and using some algebra,
 the right-hand side becomes
 \begin{alignat}{1}
  & \inf \left\{ D_\S(\chi) : \chi \in \Her_\tV,
  ~\I_\WA \ot \chi - \mE^\mathrm{ex} \in \mK \right\} \nonumber \\
  &\quad = \inf \left\{ D_\S(\chi) : \chi \in \Her_\tV,
  ~\summ \ket{m}\bra{m} \ot (\chi - p_m \mE_m) \in \mK \right\} \nonumber \\
  &\quad = \inf \left\{ D_\S(\chi) : \chi \in \mDC \right\}
  = D^\opt,
 \end{alignat}
 where $D^\opt$ is the optimal value of Problem~\eqref{prob:D}.
 Thus, Theorem~\ref{thm:dual} completes the proof.
\end{proof}
%
%

If $\mE^\mathrm{ex}$ belongs to the free set $\Free$,
then $p_1 = \cdots = p_M = 1/M$ and $\mE_1 = \cdots = \mE_M$ must hold,
which implies that $\Free$ can intuitively be thought of as
a set that includes all $\mE^\mathrm{ex}$
corresponding to trivial process discrimination problems.
This robustness measure indicates how far $\mE^\mathrm{ex}$ is from $\Free$.
This interpretation has the potential to provide a deeper insight into optimal
discrimination of quantum processes with restricted testers.

\section{Conclusion}

We have presented a general approach for solving
quantum process discrimination problems with restricted testers
based on convex programming.
Our analysis indicates that a dual problem exhibiting zero duality gap is obtained
regardless of the set of all restricted testers.
Necessary and sufficient conditions for an optimal restricted tester
to be globally optimal are also derived.
We have shown that the optimal value of each process discrimination problem
can be written in terms of a robustness measure.
In comparison to previous theoretical works, our approach would allow a unified analysis
for a large class of process discrimination problems
in which the allowed testers are restricted.
A meaningful direction for subsequent work would be to apply our approach to practical fields,
such as quantum communication and quantum metrology.

We thank for O.~Hirota, T.~S.~Usuda, and K.~Kato for insightful discussions.
This work was supported by JSPS KAKENHI Grant Number JP19K03658.


\appendix

\section{Proof of Theorem~\ref{thm:dual}} \label{append:dual}

We consider the following Lagrangian associated with Problem~\eqref{prob:P}:
\begin{alignat}{1}
 L(\Phi,\varphi,\chi) &\coloneqq \summ \braket{\Phi_m,\tmE_m} + \Braket{\varphi - \summ \Phi_m,\chi} \nonumber \\
 &= \braket{\varphi,\chi} - \summ \braket{\Phi_m, \chi - \tmE_m},
 \label{eq:L}
\end{alignat}
where $\Phi \in \mC$, $\varphi \in \S$, $\chi \in \Her_\tV$, and $\tmE_m \coloneqq p_m \mE_m$.
From Eq.~\eqref{eq:L}, we have
\begin{alignat}{1}
 \inf_\chi L(\Phi,\varphi,\chi) &=
 \begin{dcases}
  \summ \braket{\Phi_m,\tmE_m}, & \varphi = \summ \Phi_m, \\
  -\infty, & \mathrm{otherwise},
 \end{dcases} \nonumber \\
 \sup_\Phi L(\Phi,\varphi,\chi) &=
 \begin{dcases}
  \braket{\varphi,\chi}, & \chi \in \mDC, \\
  \infty, & \mathrm{otherwise}.
 \end{dcases}
\end{alignat}
Thus, the left- and right-hand sides of the max-min inequality
\begin{alignat}{1}
 \sup_{\Phi,\varphi} \inf_\chi L(\Phi,\varphi,\chi) &\le \inf_\chi \sup_{\Phi,\varphi} L(\Phi,\varphi,\chi)
 \label{eq:L_maxmin}
\end{alignat}
equal the optimal values of Problems~\eqref{prob:P} and \eqref{prob:D}, respectively.

To show the strong duality, it suffices to show that there exists $\Phi^\opt \in \clco \P$
such that $P(\Phi^\opt) \ge D^\opt$,
where $D^\opt$ is the optimal value of Problem~\eqref{prob:D}.
Let us consider the following set:
\begin{alignat}{1}
 \mZ &\coloneqq \left\{ \left( \{y_m + \tmE_m - \chi\}_{m=1}^M, D_\S(\chi) - d \right) :
 (\chi,y,d) \in \mZ_0 \right\} \nonumber \\
 &\subset \Her_\tV^M \times \Real,
\end{alignat}
where $y \coloneqq \{ y_m \}_{m=1}^M$ and
\begin{alignat}{1}
 \mZ_0 &\coloneqq \left\{ \left( \chi,y,d \right)
 \in \Her_\tV \times \mC^* \times \Real : d < D^\opt \right\}.
\end{alignat}
It is easily seen that $\mZ$ is a nonempty convex set.
Arbitrarily choose $(\chi,y,d) \in \mZ_0$ such that
$y_m + \tmE_m - \chi = \zero$ $~(\forall m)$;
then, $D_\S(\chi) \ge D^\opt$ holds from $\{\chi - \tmE_m\}_m \in \mC^*$,
which yields $D_\S(\chi) - d \ge D^\opt - d > 0$.
Thus, we have $(\{\zero\},0) \not\in \mZ$.
From separating hyperplane theorem \cite{Dha-Dut-2011},
there exists $(\{\Psi_m\}_{m=1}^M,\alpha) \neq (\{\zero\},0)$ such that
\begin{alignat}{1}
 \summ \braket{\Psi_m, y_m + \tmE_m - \chi} + \alpha [D_\S(\chi) - d] &\ge 0,
 \quad \forall (\chi,y,d) \in \mZ_0.
 \label{eq:separation0}
\end{alignat}
Substituting $y_m = c y'_m$ $~(c \in \Real_+, \{y'_m\}_m \in \mC^*)$
into Eq.~\eqref{eq:separation0} and taking the limit $c \to \infty$
yields $\{\Psi_m\}_m \in \mC$.
Taking the limit $d \to -\infty$ gives $\alpha \ge 0$.
To show $\alpha > 0$, assume by contradiction $\alpha = 0$.
Substituting $\chi = c \I_\tV$ $~(c \in \Real_+)$
and taking the limit $c \to \infty$ yields $\summ \Tr \Psi_m \le 0$.
From $\{\Psi_m\}_m \in \mC \subseteq \Pos_\tV^M$, $\Psi_m = \zero$ $~(\forall m)$ holds.
This contradicts $(\{\Psi_m\}_m,\alpha) \neq (\{\zero\},0)$,
and thus $\alpha > 0$ holds.
Let $\Phi^\opt_m \coloneqq \Psi_m / \alpha$; then, Eq.~\eqref{eq:separation0} yields
\begin{alignat}{1}
 \summ \braket{\Phi^\opt_m, y_m + \tmE_m - \chi} + D_\S(\chi) - d &\ge 0,
 \quad \forall (\chi,y,d) \in \mZ_0.
 \nonumber \\
 \label{eq:separation}
\end{alignat}
By substituting $\chi = c \chi'$ $~(c \in \Real_+, \chi' \in \Her_\tV)$
into Eq.~\eqref{eq:separation} and taking the limit $c \to \infty$,
we have $D_\S(\chi') \ge \summ \braket{\Phi^\opt_m,\chi'}$ $~(\forall \chi' \in \Her_\tV)$.
This implies $\summ \Phi^\opt_m \in \S$, i.e., $\Phi^\opt \in \clco \P$.
Indeed, assume by contradiction $\summ \Phi^\opt_m \not\in \S$;
then, since $\S$ is a closed convex set, from separating hyperplane theorem,
there exists $\chi' \in \Her_\tV$ such that
$\braket{\phi,\chi'} < \braket{\summ \Phi^\opt_m,\chi'}$ $~(\forall \phi \in \S)$,
which contradicts $D_\S(\chi') \ge \summ \braket{\Phi^\opt_m,\chi'}$.
Substituting $y_m = \zero$ and $\chi = \zero$ into Eq.~\eqref{eq:separation} and
taking the limit $d \to D^\opt$ yields
$P(\Phi^\opt) = \summ \braket{\Phi^\opt_m,\tmE_m} \ge D^\opt$.
\QED

Theorem~\ref{thm:dual} can be generalized to the following corollary.
\begin{cor} \label{cor:dual}
 Given $\P$, let us arbitrarily choose a subset $\mC$ of $\mCG$ and a bounded subset $\S$ of $\SG$
 such that
 \begin{alignat}{1}
  \clco \P &= \left\{ \Phi \in \clconi \mC : \summ \Phi_m \in \clco \S \right\}.
  \label{eq:ConvP2}
 \end{alignat}
 Then, the problem
 \begin{alignat}{1}
  \begin{array}{ll}
   \mbox{minimize} & \displaystyle \sup_{\varphi \in \S} \braket{\varphi,\chi} \\
   \mbox{subject~to} & \chi \in \mDC \\
  \end{array}
 \end{alignat}
 has the same optimal value as Problem~\eqref{prob:P}.
\end{cor}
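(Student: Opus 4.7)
The plan is to reduce the corollary directly to Theorem~\ref{thm:dual} by replacing $\mC$ and $\S$ with their closed conic and closed convex hulls, respectively, and then checking that neither the feasible region nor the supremum objective changes under this replacement.

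First, I would set $\mC' \coloneqq \clconi \mC$ and $\S' \coloneqq \clco \S$. By construction $\mC'$ is a closed convex cone and $\S'$ is a closed convex set. Because $\mCG = \Pos_\tV^M$ is itself a closed convex cone and $\SG$ is a closed convex set, the inclusions $\mC \subseteq \mCG$ and $\S \subseteq \SG$ propagate to $\mC' \subseteq \mCG$ and $\S' \subseteq \SG$. The hypothesis~\eqref{eq:ConvP2} then says exactly that $(\mC', \S')$ satisfies Eq.~\eqref{eq:ConvP}, so Theorem~\ref{thm:dual} applies and gives that Problem~\eqref{prob:P} has the same optimal value as
\[
\begin{array}{ll}
 \mbox{minimize} & D_{\S'}(\chi) \\
 \mbox{subject~to} & \chi \in \mD_{\mC'}. \\
\end{array}
\]

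Next, I would identify this reduced dual with the problem stated in the corollary. For the feasible sets, the inclusion $\mD_{\mC'} \subseteq \mDC$ is immediate from $\mC \subseteq \mC'$; the reverse uses that $\Phi \mapsto \summ \braket{\Phi_m, \chi - p_m \mE_m}$ is linear in $\Phi$, so nonnegativity on $\mC$ extends to $\coni \mC$ by nonnegative combinations and then to $\clconi \mC = \mC'$ by continuity. For the objectives, the same linearity plus continuity argument applied to $\varphi \mapsto \braket{\varphi, \chi}$ gives $\sup_{\varphi \in \S} \braket{\varphi,\chi} = \sup_{\varphi \in \co \S} \braket{\varphi,\chi} = \sup_{\varphi \in \clco \S} \braket{\varphi,\chi}$; boundedness of $\S$ in finite dimension makes $\S'$ compact, so this common supremum is attained and equals $D_{\S'}(\chi)$.

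The whole argument is essentially mechanical once the correct closures are identified, and I do not anticipate a substantive obstacle. The only subtle point is the use of boundedness of $\S$, which is what allows $D_{\S'}$ to be written as a maximum rather than merely a supremum and hence brings the two forms of the dual into precise coincidence.
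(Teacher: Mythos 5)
Your proposal is correct and follows essentially the same route as the paper: apply Theorem~\ref{thm:dual} to $\clconi\mC$ and $\clco\S$, then observe that $\mD_{\clconi\mC}=\mDC$ and $\sup_{\varphi\in\S}\braket{\varphi,\chi}=D_{\clco\S}(\chi)$ by linearity, continuity, and compactness of $\clco\S$. The paper states these identifications without detail, so your spelled-out verification of the two equalities is simply a more explicit version of the same argument.
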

\begin{proof}
 From Theorem~\ref{thm:dual}, the following problem
 \begin{alignat}{1}
  \begin{array}{ll}
   \mbox{minimize} & \displaystyle D_{\clco \S}(\chi) \\
   \mbox{subject~to} & \chi \in \mD_{\clconi \mC} \\
  \end{array}
  \label{prob:Dex}
 \end{alignat}
 has the same optimal value as Problem~\eqref{prob:P}.
 Also, it is easily seen that
 $D_{\clco \S}(\chi) = D_{\co \ol{\S}}(\chi) = \sup_{\varphi \in \S} \braket{\varphi,\chi}$
 and $\mD_{\clconi \mC} = \mDC$ hold.
\end{proof}

\section{Supplement of the example of sequential strategies} \label{append:seq}

\subsection{Formulation of $\Pseq$} \label{append:seq_Pseq}

We show that the set of all sequential testers in $\PG$ is expressed as
\begin{alignat}{1}
 \Pseq &\coloneqq \left\{ \left\{ \sum_j B^{(j)}_m \ot A_j \right\}_{m=1}^3 :
 \{A_j\} \in \Tester, ~\{ B^{(j)}_m \}_m \in \TesterM \right\}.
\end{alignat}
From Fig.~\ref{fig:example_seq_qubit},
$\Phi \in \Pseq$ holds if and only if $\hPhi_k$ is expressed in the form
\begin{alignat}{1}
 \InsertPDF{1.0}{examples_cZ.pdf} ~\raisebox{1em}{,}
 \label{eq:examples_cZ}
\end{alignat}
where $\hrho_\mathrm{A} \in \Den_{\V_1 \ot \V'_1}$, $\{\hPsi_j\}_j \in \Meas_{\W_1 \ot \V'_1}$,
$\hrho^{(j)}_\mathrm{B} \in \Den_{\V_2 \ot \V'_2}$ $(\forall j)$,
and $\{\hPi^{(j)}_m\}_{m=1}^M \in \Meas_{\W_2 \ot \V'_2}$ $(\forall j)$.
It follows that $\{ \hA_j \coloneqq \hPsi_j \ast \hrho_\mathrm{A} \}_{j \in \mJ}$ and
$\{ \hB^{(j)}_k \coloneqq \hPi^{(j)}_k \ast \hrho_\mathrm{B}^{(j)} \}_{k=1}^3$
are testers and Eq.~\eqref{eq:examples_cZ} can be rewritten as
\begin{alignat}{1}
 \InsertPDF{1.0}{examples_c_tester.pdf} ~\raisebox{1em}{.}
 \label{eq:examples_c_tester}
\end{alignat}
This gives that $\Phi \in \Pseq$ holds if and only if $\Phi$ is expressed in the form
$\Phi = \left\{ \sum_j B^{(j)}_m \ot A_j \right\}_{m=1}^3$
with
\begin{alignat}{1}
 \{ A_j \}_{j \in \mJ} \subset \Pos_{\W_1 \ot \V_1}, \quad
 \sum_{j \in \mJ} A_j \in \Comb^*_{\W_1,\V_1}
 \label{eq:Aj}
\end{alignat}
and
\begin{alignat}{1}
 \{ B^{(j)}_m \}_{m=1}^3 \subset \Pos_{\W_2 \ot \V_2}, \quad
 \sum_{m=1}^3 B^{(j)}_m \in \Comb^*_{\W_2,\V_2}, \quad \forall j \in \mJ.
 \nonumber \\
 \label{eq:Bjm}
\end{alignat}
From Eq.~\eqref{eq:Phi_sum},
Eqs.~\eqref{eq:Aj} and \eqref{eq:Bjm} are, respectively,
equivalent to $\{ A_j \} \in \Tester$ and $\{ B^{(j)}_m \}_m \in \Tester_3$.

\subsection{Derivation of Eq.~(\ref{eq:ConvP}) with Eq.~(\ref{eq:C_chsequential_sm})} \label{append:seq_ConvP}

Let $\P'$ be the right-hand side of Eq.\eqref{eq:ConvP}, i.e.,
\begin{alignat}{1}
 \P' &\coloneqq \left\{ \Phi \in \mC : \summ \Phi_m \in \S \right\}.
 \label{eq:Pp}
\end{alignat}
Since we can easily obtain
$\clco \P = \P$ and $\P \subseteq \P'$ (i.e., $\Phi \in \mC$ and $\summ \Phi_m \in \S$ hold
for any $\Phi \in \P$), it suffices to show $\P' \subseteq \P$.

Let us consider $\P'$ with Eq.~\eqref{eq:C_chsequential_sm}.
Arbitrarily choose $\Phi' \in \P'$.
From $\Phi' \in \mC$, $\hPhi'_k$ is expressed in the form of Eq.~\eqref{eq:examples_c_tester}
with $A_j \in \Pos_{\W_1 \ot \V_1}$ and $\{ B^{(j)}_m \}_m \in \Tester_{\W_2,\V_2}$
$~(\forall j)$.
Arbitrarily choose $\hsigma \in \Chn(\V_2,\W_2)$; then, from
\begin{alignat}{1}
 \InsertPDF{1.0}{examples_c_tester_sum.pdf} ~\raisebox{1.5em}{,}
 \label{eq:examples_c_tester_sum}
\end{alignat}
we have
\begin{alignat}{1}
 \InsertPDF{1.0}{examples_c_tester_sum3.pdf} ~\raisebox{1em}{.}
 \label{eq:examples_c_tester_sum3}
\end{alignat}
Also, from $\sum_{k=1}^M \Phi'_k \in \SG$, we have
\begin{alignat}{1}
 \InsertPDF{1.0}{examples_c_tester_sum2.pdf}
 \label{eq:examples_c_tester_sum2}
\end{alignat}
with some $\hrho'_\mathrm{A} \in \Den_{\V_1}$.
Equations~\eqref{eq:examples_c_tester_sum3} and \eqref{eq:examples_c_tester_sum2}
yield that $\{ \hA_j \}$ is a tester.
Since $\{ \hA_j \}$ and $\{ \hB^{(j)}_m \}_m$ are testers,
$\hPhi'_k$ is expressed in the form of Eq.~\eqref{eq:examples_cZ},
i.e., $\P' \subseteq \P$.

\subsection{Derivation of Problem~(\ref{prob:D_seq_ex2})} \label{append:seq_Dex2}

There exists an optimal solution $\chi$ to Problem~\eqref{prob:D_chsequential}
expressed in the form $\chi = \lambda \tchi$ with $\lambda \in \Real_+$ and
$\tchi \in \Comb_{\W_2,\V_2,\W_1,\V_1}$ (see \cite{Nak-Kat-2021-general}).
Note that from Eq.~\eqref{eq:comb_sigma_tau}, $D_\S(\tchi) = 1$ holds for any
$\tchi \in \Comb_{\W_2,\V_2,\W_1,\V_1}$.
Since $\tchi$ is a comb, we can see that
\begin{alignat}{1}
 X &\coloneqq \summt \Trp{\W_2 \ot \V_2} \left[ \left[ B^{(j)}_m \ot \I_2 \right] \tchi \right]
 \in \Comb_{\W_1,\V_1}
 \label{eq:XComb}
\end{alignat}
is independent of the measurement $\{ B^{(j)}_m\}_{m=1}^3$.
Conversely, for any $X \in \Comb_{\W_1,\V_1}$, there exists $\tchi \in \Comb_{\W_2,\V_2,\W_1,\V_1}$
satisfying Eq.~\eqref{eq:XComb}.
Thus, Problem~\eqref{prob:D_chsequential} is rewritable as
\begin{alignat}{1}
 \begin{array}{ll}
  \mbox{minimize} & \lambda \\
  \mbox{subject~to} & \displaystyle \lambda X \ge \frac{1}{3} \summt \braket{B_m,\Lambda_m} \Lambda_m
   \quad \left( \forall \{B_m\} \in \TesterM \right) \\
 \end{array}
 \label{prob:D_seq_ex1}
\end{alignat}
with $\lambda \in \Real_+$ and $X \in \Comb_{\W_1,\V_1}$.
Due to the symmetry of
\begin{alignat}{1}
 \Lambda_m &=
 \begin{bmatrix}
  1 & 0 & 0 & \omega^{-m} \\
  0 & 0 & 0 & 0 \\
  0 & 0 & 0 & 0 \\
  \omega^m & 0 & 0 & 1 \\
 \end{bmatrix},
\end{alignat}
we can assume without loss of generality that
\begin{alignat}{1}
 X &\coloneqq
 \begin{bmatrix}
  1 & 0 & 0 & 0 \\
  0 & 0 & 0 & 0 \\
  0 & 0 & 0 & 0 \\
  0 & 0 & 0 & 1 \\
 \end{bmatrix}
\end{alignat} 
holds.
This reduces Problem~\eqref{prob:D_seq_ex1} to Problem~\eqref{prob:D_seq_ex2}.

\subsection{Perfect distinguishability} \label{append:seq_perfect}

We here show that $\{ \mE_m \coloneqq \Lambda_m \ot \Lambda_m \}_{m=1}^3$
can be perfectly distinguished if any physically allowed discrimination strategy can be performed.
Assume that the prior probabilities are equal; then,
the maximum success probability is equal to the optimal value of the following problem
\cite{Chi-2012}:
\begin{alignat}{1}
 \begin{array}{ll}
  \mbox{minimize} & \lambda \\
  \mbox{subject~to} & \displaystyle \lambda \tchi \ge \mE_m / 3 \\
 \end{array}
 \label{prob:D_chsequential_global}
\end{alignat}
with $\lambda \in \Real_+$ and $\tchi \in \Comb_{\W_2,\V_2,\W_1,\V_1}$.
Note that $\mE_m$ is given by
\setcounter{MaxMatrixCols}{20}
\begin{alignat}{1}
 \mE_m &=
 \begin{bmatrix}
  1 & 0 & 0 & \omega^{-m} & 0 & 0 & 0 & 0 & 0 & 0 & 0 & 0 & \omega^{-m} & 0 & 0 & \omega^m \\
  0 & 0 & 0 & 0 & 0 & 0 & 0 & 0 & 0 & 0 & 0 & 0 & 0 & 0 & 0 & 0 \\
  0 & 0 & 0 & 0 & 0 & 0 & 0 & 0 & 0 & 0 & 0 & 0 & 0 & 0 & 0 & 0 \\
  \omega^m & 0 & 0 & 1 & 0 & 0 & 0 & 0 & 0 & 0 & 0 & 0 & 1 & 0 & 0 & \omega^{-m} \\
  0 & 0 & 0 & 0 & 0 & 0 & 0 & 0 & 0 & 0 & 0 & 0 & 0 & 0 & 0 & 0 \\
  0 & 0 & 0 & 0 & 0 & 0 & 0 & 0 & 0 & 0 & 0 & 0 & 0 & 0 & 0 & 0 \\
  0 & 0 & 0 & 0 & 0 & 0 & 0 & 0 & 0 & 0 & 0 & 0 & 0 & 0 & 0 & 0 \\
  0 & 0 & 0 & 0 & 0 & 0 & 0 & 0 & 0 & 0 & 0 & 0 & 0 & 0 & 0 & 0 \\
  0 & 0 & 0 & 0 & 0 & 0 & 0 & 0 & 0 & 0 & 0 & 0 & 0 & 0 & 0 & 0 \\
  0 & 0 & 0 & 0 & 0 & 0 & 0 & 0 & 0 & 0 & 0 & 0 & 0 & 0 & 0 & 0 \\
  0 & 0 & 0 & 0 & 0 & 0 & 0 & 0 & 0 & 0 & 0 & 0 & 0 & 0 & 0 & 0 \\
  0 & 0 & 0 & 0 & 0 & 0 & 0 & 0 & 0 & 0 & 0 & 0 & 0 & 0 & 0 & 0 \\
  \omega^m & 0 & 0 & 1 & 0 & 0 & 0 & 0 & 0 & 0 & 0 & 0 & 1 & 0 & 0 & \omega^{-m} \\
  0 & 0 & 0 & 0 & 0 & 0 & 0 & 0 & 0 & 0 & 0 & 0 & 0 & 0 & 0 & 0 \\
  0 & 0 & 0 & 0 & 0 & 0 & 0 & 0 & 0 & 0 & 0 & 0 & 0 & 0 & 0 & 0 \\
  \omega^{-m} & 0 & 0 & \omega^m & 0 & 0 & 0 & 0 & 0 & 0 & 0 & 0 & \omega^m & 0 & 0 & 1 \\
 \end{bmatrix}.
\end{alignat}
Due to the symmetry of $\mE_m$,
we can assume, without loss of generality, that $\lambda \tchi$ is expressed in the form
\begin{alignat}{1}
 \lambda \tchi &=
 \begin{bmatrix}
  \lambda & 0 & 0 & 0 & 0 & 0 & 0 & 0 & 0 & 0 & 0 & 0 & 0 & 0 & 0 & 0 \\
  0 & 0 & 0 & 0 & 0 & 0 & 0 & 0 & 0 & 0 & 0 & 0 & 0 & 0 & 0 & 0 \\
  0 & 0 & 0 & 0 & 0 & 0 & 0 & 0 & 0 & 0 & 0 & 0 & 0 & 0 & 0 & 0 \\
  0 & 0 & 0 & \lambda & 0 & 0 & 0 & 0 & 0 & 0 & 0 & 0 & y & 0 & 0 & 0 \\
  0 & 0 & 0 & 0 & 0 & 0 & 0 & 0 & 0 & 0 & 0 & 0 & 0 & 0 & 0 & 0 \\
  0 & 0 & 0 & 0 & 0 & 0 & 0 & 0 & 0 & 0 & 0 & 0 & 0 & 0 & 0 & 0 \\
  0 & 0 & 0 & 0 & 0 & 0 & 0 & 0 & 0 & 0 & 0 & 0 & 0 & 0 & 0 & 0 \\
  0 & 0 & 0 & 0 & 0 & 0 & 0 & 0 & 0 & 0 & 0 & 0 & 0 & 0 & 0 & 0 \\
  0 & 0 & 0 & 0 & 0 & 0 & 0 & 0 & 0 & 0 & 0 & 0 & 0 & 0 & 0 & 0 \\
  0 & 0 & 0 & 0 & 0 & 0 & 0 & 0 & 0 & 0 & 0 & 0 & 0 & 0 & 0 & 0 \\
  0 & 0 & 0 & 0 & 0 & 0 & 0 & 0 & 0 & 0 & 0 & 0 & 0 & 0 & 0 & 0 \\
  0 & 0 & 0 & 0 & 0 & 0 & 0 & 0 & 0 & 0 & 0 & 0 & 0 & 0 & 0 & 0 \\
  0 & 0 & 0 & y & 0 & 0 & 0 & 0 & 0 & 0 & 0 & 0 & \lambda & 0 & 0 & 0 \\
  0 & 0 & 0 & 0 & 0 & 0 & 0 & 0 & 0 & 0 & 0 & 0 & 0 & 0 & 0 & 0 \\
  0 & 0 & 0 & 0 & 0 & 0 & 0 & 0 & 0 & 0 & 0 & 0 & 0 & 0 & 0 & 0 \\
  0 & 0 & 0 & 0 & 0 & 0 & 0 & 0 & 0 & 0 & 0 & 0 & 0 & 0 & 0 & \lambda \\
 \end{bmatrix}
\end{alignat}
with $y \in \Complex$.
Note that since
$\mE_m = \Ad_{U^{m-n} \ot \I_2 \ot U^{m-n} \ot \I_2}(\mE_n)$ $~(\forall m,n \in \{1,2,3\})$
holds, there exists an optimal solution $(\lambda,\tchi)$ to Problem~\eqref{prob:D_chsequential_global}
satisfying $\tchi = \Ad_{U^k \ot \I_2 \ot U^k \ot \I_2}(\tchi)$
$~(\forall k \in \{1,2,3\})$,
as will be shown in Lemma~\ref{lemma:symmetry}.
After some simple algebra, we can see that $\lambda \tchi \ge \mE_m / 3$ is equivalent to
\begin{alignat}{1}
 3 \lambda^2 - 4\lambda + 3 y\lambda - 2y &\ge 0, ~ \lambda \ge y \ge \frac{4}{3} - 2\lambda.
\end{alignat}
We obtain the minimal $\lambda$ satisfying these inequalities,
which is the optimal value of Problem~\eqref{prob:D_chsequential_global}, to be 1.
Thus, $\{ \mE_m \coloneqq \Lambda_m \ot \Lambda_m \}_{m=1}^3$
are perfectly distinguishable.

We here give another proof.
Let us consider an ensemble of three states $\{ \hrho_m \}_{m=1}^3$ expressed as
\begin{alignat}{1}
 \InsertPDF{1.0}{example_perfect.pdf}
 \label{eq:exampls_perfect}
\end{alignat}
[i.e., $\hrho_m \coloneqq (\hLambda_m \ast \hsigma_2 \ast \hLambda_m)(\hsigma_1)$]
with $\hsigma_1 \in \Den_{\V_1 \ot \V'_1}$ and $\hsigma_2 \in \Chn(\W_1 \ot \V'_1, \V_2 \ot \W'_2)$.
We choose
\begin{alignat}{1}
 \hsigma_1 &\coloneqq \frac{1}{2}
 \begin{bmatrix}
  1 & 0 & 0 & 1 \\
  0 & 0 & 0 & 0 \\
  0 & 0 & 0 & 0 \\
  1 & 0 & 0 & 1 \\
 \end{bmatrix}, \quad
 \hsigma_2 \coloneqq \Ad_{U'}, \nonumber \\
 U' &\coloneqq \frac{1}{\sqrt{3}}
 \begin{bmatrix}
  \sqrt{2} & 0 & 1 & 0 \\
  0 & \sqrt{2} & 0 & 1 \\
  -1 & 0 & \sqrt{2} & 0 \\
  0 & -1 & 0 & \sqrt{2} \\
 \end{bmatrix};
\end{alignat}
then, it is easily seen that $\{ \hrho_m \}_{m=1}^3$ are orthogonal.
Therefore, $\{ \mE_m \}_{m=1}^3$ are perfectly distinguishable.

\section{Applications of Theorem~\ref{thm:dual}} \label{append:restrict}

In addition to the example given in the main paper,
we will provide two other examples demonstrating the utility of
Theorem~\ref{thm:dual}.
In this section, we consider the case $T = 2$.

\subsection{First example}

The first example is the restriction to nonadaptive testers [see Fig.~\ref{fig:examples}(a)].
Let
\begin{alignat}{1}
 \mC &\coloneqq \mCG, \nonumber \\
 \S &\coloneqq \left\{ \cross_{\W_1,\V_2}(\I_{\W_2 \ot \W_1} \ot \rho') :
 \rho' \in \Den_{\V_2 \ot \V_1} \right\},
 \label{eq:C_nonadaptive}
\end{alignat}
where $\cross_{\W,\V} \in \Chn(\W \ot \V,\V \ot \W)$ is the channel
that swaps two systems $\W$ and $\V$, which is depicted by
\begin{alignat}{1}
 \InsertPDF{1.0}{cross.pdf} ~\raisebox{1em}{.}
 \label{eq:cross}
\end{alignat}
We will show that $\P'$ of Eq.~\eqref{eq:Pp} is equal to $\clco \P$
[i.e., Eq.~\eqref{eq:ConvP} holds] in the next paragraph.
Substituting Eq.~\eqref{eq:C_nonadaptive} into Problem~\eqref{prob:D} yields
the following dual problem
\begin{alignat}{1}
 \begin{array}{ll}
  \mbox{minimize} & \displaystyle \max_{\rho' \in \Den_{\V_2 \ot \V_1}}
   \braket{\cross_{\W_1,\V_2}(\I_{\W_2 \ot \W_1} \ot \rho'), \chi} \\
  \mbox{subject~to} & \chi \in \mDCG = \{ \chi \in \Pos_\tV : \chi \ge p_m \mE_m ~(\forall m) \}. \\
 \end{array}
 \label{prob:D_nonadaptive}
\end{alignat}
Note that although this problem is also formulated as the task of
discriminating $\{ \cross_{\V_2,\W_1} (\mE_m) \in \Comb_{\W_2 \ot \W_1,\V_2 \ot \V_1} \}_m$
with a single use,
the expression of Problem~\eqref{prob:D_nonadaptive} is useful for verifying
the global optimality of nonadaptive testers.
\begin{figure}[bt]
 \centering
 \InsertPDF{0.95}{examples.pdf}
 \caption{Examples of two types of restricted testers.
 (a) A nonadaptive tester,
 in which first a state $\hrho \in \Den_{\V_1 \ot \V_2 \ot \V'_1}$ is prepared, then
 the parts $\V_1$ and $\V_2$ are transmitted through a given process $\hmE_m$,
 and finally a measurement $\{\hPi_k\}_{k=1}^M \in \Meas_{\W_2 \ot \W_1 \ot \V'_1}$ is performed.
 (b) A tester performed by two parties, Alice and Bob.
 In this tester, Alice first prepares a state $\hrho_i \in \Den_{\V_1 \ot \V'_1}$
 with a probability $q_i$ and sends its one part $\V_1$ to a given process $\hmE_m$.
 She also sends the classical information $i$ to Bob, who performs a channel
 $\hsigma_i \in \Chn(\W_1,\V_2)$ depending on $i$.
 Alice finally performs a measurement $\{\hPi^{(i)}_k\}_{k=1}^M \in \Meas_{\W_2 \ot \V'_1}$.
 Only one-way classical communication from Alice to Bob is allowed.}
 \label{fig:examples}
\end{figure}

We here show $\P' = \clco \P$.
It is easily seen $\clco \P = \P$ and $\P \subseteq \P'$
(i.e., $\Phi \in \mC$ and $\summ \Phi_m \in \S$ hold for any $\Phi \in \P$).
Thus, it suffices to show $\P' \subseteq \P$.
From Fig.~\ref{fig:examples}(a),
$\Phi \in \P$ holds if and only if $\hPhi_k$ is expressed in the form
\begin{alignat}{1}
 \InsertPDF{1.0}{examples_aZ.pdf} ~\raisebox{1em}{,}
 \label{eq:examples_aZ}
\end{alignat}
where $\hrho \in \Den_{\V_1 \ot \V_2 \ot \V'_1}$ and
$\{\hPi_m\}_{m=1}^M \in \Meas_{\W_2 \ot \W_1 \ot \V'_1}$.
Arbitrarily choose $\Phi' \in \P'$.
One can easily verify $\S \subseteq \SG$, i.e., $\P' \subseteq \PG$.
Thus, $\hPhi'_k$ is expressed in the form
\begin{alignat}{1}
 \InsertPDF{1.0}{examples_a_tester.pdf} ~\raisebox{1em}{,}
\end{alignat}
where $\hsigma_1 \in \Den_{\V_1 \ot \V''_1}$, $\hsigma_2 \in \Chn(\W_1 \ot \V''_1,\V_2 \ot \V''_2)$,
and $\{ \hPi'_m \}_{m=1}^M \in \Meas_{\W_2 \ot \V''_2}$.
$\V''_1$ and $\V''_2$ are some ancillary systems.
Also, from $\summ \Phi'_m \in \S$,
there exists $\hrho' \in \Den_{\V_1 \ot \V_2}$ such that
\begin{alignat}{1}
 \InsertPDF{1.0}{examples_a.pdf} ~\raisebox{1em}{.}
\end{alignat}
Thus, $\hPhi'_k$ is expressed in the form of Eq.~\eqref{eq:examples_aZ},
where $\hrho$ is a purification of $\hrho'$.
Therefore, $\P' \subseteq \P$ holds.

\subsection{Second example}

The second example is described in Fig.~\ref{fig:examples}(b).
We want to find $\mC$ and $\S$ such that Problem~\eqref{prob:D} is easy to solve;
however, it is hard to find such $\mC$ and $\S$ satisfying Eq.~\eqref{eq:ConvP}.
Instead, we consider relaxing this equation to $\clco \P \subseteq \P'$.
We here choose
\begin{alignat}{1}
 \mC &\coloneqq \left\{ \left\{ \cross_{\V_2 \ot \W_1,\W_2}
 \left( \sum_i \sigma_i \ot A^{(i)}_m \right) \right\}_{m=1}^M : \right. \nonumber \\
 &\qquad \left. \vphantom{\left( \sum_i \sigma_i \ot A^{(i)}_m \right)}
 \sigma_i \in \Comb_{\V_2,\W_1}, ~A^{(i)}_m \in \Pos_{\W_2 \ot \V_1} \right\}
 \label{eq:C_Bobchannel}
\end{alignat}
and $\S \coloneqq \SG$.
This allows Problem~\eqref{prob:D} to be rewritten in this situation as
\begin{alignat}{1}
 \begin{array}{ll}
  \mbox{minimize} & D_\SG(\chi) \\
  \mbox{subject~to} & \displaystyle \Trp{\V_2 \ot \W_1} [\sigma (\chi - p_m\mE_m)] \ge \zero
   \\
  & \qquad (\forall 1 \le m \le M, ~\sigma \in \Comb_{\V_2,\W_1})
 \end{array}
 \label{prob:D_Bobchannel}
\end{alignat}
with $\chi \in \Her_\tV$.
We relatively easily obtain the optimal value, denoted by $D^\opt$, of this problem.
The optimal value of Problem~\eqref{prob:P} is upper bounded by $D^\opt$,
since $D^\opt$ coincides with the optimal value of Problem~\eqref{prob:P} where the feasible set
is relaxed from $\P$ to $\P'$.

\section{Proof of Proposition~\ref{pro:optimal_maxent}} \label{append:symmetry}

Before giving the proof, we first prove the following two lemmas.
\begin{lemma} \label{lemma:symmetry}
 Assume that $\{ \mE_m \}_{m=1}^M$ is $(\mG,\mU,\varpi)$-covariant.
 If
 \begin{alignat}{1}
  \mC &= \mCG, \nonumber \\
  \mU_g(\varphi) &\in \S, \quad \forall g \in \mG, ~\varphi \in \S
  \label{eq:sym_ConvS}
 \end{alignat}
 holds, then there exists an optimal solution, $\chi^\opt \in \Pos_\tV$,
 to Problem~\eqref{prob:D} such that
 \begin{alignat}{1}
  \mU_g(\chi^\opt) &= \chi^\opt, \quad \forall g \in \mG.
  \label{eq:sym_chi}
 \end{alignat}
\end{lemma}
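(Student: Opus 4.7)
The plan is a standard symmetrization argument: take any optimal solution of Problem~\eqref{prob:D}, average it over the group action, and verify that the averaged object is still feasible, still optimal, and now $\mG$-invariant. Concretely, let $\chi'$ be any optimal solution (its existence follows from the closedness/compactness considerations underlying Theorem~\ref{thm:dual}). I would then define
\begin{equation}
 \chi^\opt \coloneqq \int_\mG \mU_g(\chi')\, d\mu(g),
\end{equation}
where $d\mu$ is the normalized Haar measure on $\mG$ (replaced by the uniform probability measure if $\mG$ is finite, which is the only case needed for Proposition~\ref{pro:optimal_maxent}).

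Three short verifications then complete the proof. First, $\mU_h(\chi^\opt) = \chi^\opt$ for all $h \in \mG$ is immediate from the left-invariance of $d\mu$, the linearity of $\mU_h$, and the representation property $\mU_h \circ \mU_g = \mU_{hg}$. Second, $\chi^\opt \in \mDC$: since $\mC = \mCG$, membership in $\mDC$ reduces to $\chi \ge p_m \mE_m$ for all $m$; applying $\mU_g$ to $\chi' \ge p_m \mE_m$ and invoking covariance gives $\mU_g(\chi') \ge p_m \mE_{\varpi_g(m)}$, and re-indexing using the $\varpi$-invariance of the prior yields $\mU_g(\chi') \ge p_k \mE_k$ for every $k$; positivity of the integrand is preserved by the average, so $\chi^\opt \ge p_k \mE_k \ge \zero$, which also proves $\chi^\opt \in \Pos_\tV$. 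Third, $D_\S(\chi^\opt) \le D_\S(\chi')$: each $\mU_g$ is a unitary conjugation with $\mU_g^{-1} = \mU_{g^{-1}}$, and by a change of variables
\begin{equation}
 D_\S(\mU_g(\chi')) = \sup_{\varphi \in \S} \braket{\mU_{g^{-1}}(\varphi),\chi'} = D_\S(\chi'),
\end{equation}
where the second equality uses $\mU_g(\S) = \S$ (obtained by applying the hypothesis with both $g$ and $g^{-1}$). Convexity of $D_\S$ as a supremum of linear functionals then gives $D_\S(\chi^\opt) \le \int_\mG D_\S(\mU_g(\chi'))\, d\mu(g) = D_\S(\chi')$, so $\chi^\opt$ is again optimal.

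The main subtlety is the bookkeeping between the prior $\{p_m\}$ and the permutation $\varpi$: the feasibility step works because one can re-index $p_m \mE_{\varpi_g(m)}$ as $p_{\varpi_g^{-1}(k)} \mE_k$ and identify this with $p_k \mE_k$, which implicitly uses $\varpi$-invariance of the prior (a standard but unstated feature of the covariant setting). A secondary technical point is that the averaging must be meaningful, which requires $\mG$ to carry an invariant probability measure; this is automatic for finite groups and, more generally, for compact ones. Both issues are routine but should be flagged as part of the hypothesis.
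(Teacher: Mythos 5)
Your proposal is correct and follows essentially the same route as the paper's proof: symmetrize an optimal $\chi$ by averaging over $\mG$, check feasibility via the covariance relation \eqref{eq:mE_cov}, and bound $D_\S$ of the average using convexity together with $\mU_{\inv{g}}(\varphi) \in \S$. The only differences are cosmetic or in your favor --- the paper restricts to a finite group and uniform averaging where you allow a Haar average, and you explicitly flag the $\varpi$-invariance of the prior $\{p_m\}$ needed in the re-indexing step, a point the paper's proof passes over silently by writing $\chi \ge \mE_m$ with the priors absorbed.
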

\begin{proof}
 Let $\chi$ be an optimal solution to Problem~\eqref{prob:D}.
 From $\mC = \mCG$, we can easily see $\chi \in \Pos_\tV$.
 Also, let
 \begin{alignat}{1}
  \chi^\opt &\coloneqq \frac{1}{|\mG|}
  \sum_{g \in \mG} \mU_g(\chi) \in \Pos_\tV,
 \end{alignat}
 where $|\mG|$ is the order of $\mG$.
 Since $\mU_g \c \mU_{g'} = \mU_{gg'}$ holds for any $g,g' \in \mG$,
 one can easily see Eq.~\eqref{eq:sym_chi}.
 We have that from Eq.~\eqref{eq:mE_cov} and $\chi \ge \mE_m$,
 \begin{alignat}{1}
  \chi^\opt - \mE_m &= \frac{1}{|\mG|} \sum_{g \in \mG} \mU_g(\chi - \mE_{\varpi_{\inv{g}}(m)}) \ge \zero,
  \quad \forall m \in \{1,\dots,M\},
  \label{eq:sym_Xl1}
 \end{alignat}
 i.e., $\chi^\opt \in \mDC$,  where $\inv{g}$ is the inverse of $g$.
 Moreover, we have
 \begin{alignat}{1}
  D_\S(\chi^\opt) &= \max_{\varphi \in \S} \braket{\varphi,\chi^\opt}
  = \frac{1}{|\mG|} \max_{\varphi \in \S} \sum_{g \in \mG} \braket{\varphi,\mU_g(\chi)}
  \nonumber \\
  &\le \frac{1}{|\mG|} \sum_{g \in \mG} \max_{\varphi \in \S} \braket{\varphi,\mU_g(\chi)}
  = \frac{1}{|\mG|} \sum_{g \in \mG} \max_{\varphi \in \S} \braket{\mU_{\inv{g}}(\varphi),\chi}
  \nonumber \\
  &\le \frac{1}{|\mG|} \sum_{g \in \mG} D_\S(\chi) = D_\S(\chi),
 \end{alignat}
 where the last inequality follows from $\mU_{\inv{g}}(\varphi) \in \S$ for each $\varphi \in \S$
 and $g \in \mG$, which follows from the second line of Eq.~\eqref{eq:sym_ConvS}.
 Therefore, $\chi^\opt$ is optimal for Problem~\eqref{prob:D}.
\end{proof}
The proof of this lemma also shows that
if there exists an optimal solution that is proportional to some quantum comb, then
there also exists an optimal solution, $\chi^\opt$, that is proportional to some quantum comb
and that satisfies Eq.~\eqref{eq:sym_chi}.

\begin{lemma} \label{lemma:symmetry_irreducible}
 Assume that, for each $t \in \{1,\dots,T\}$,
 there exists a group $\mG^{(t)}$ that has two projective unitary representations
 $\mG^{(t)} \ni g \mapsto U^{(t)}_g \in \Uni_\Wt$ and
 $\mG^{(t)} \ni g \mapsto \tU^{(t)}_g \in \Uni_\Vt$.
 If $g \mapsto \tU^{(t)}_g$ is irreducible for each $t \in \{1,\dots,T\}$, then
 any $\chi^\opt \in \Pos_\tV$ satisfying
 \begin{alignat}{1}
  (\ident_{\W_T \ot \V_T \ot \cdots \ot \W_{t+1} \ot \V_{t+1}} \ot \Ad_{U^{(t)}_g \ot \tU^{(t)}_g}
  \ot \ident_{\W_{t-1} \ot \V_{t-1} \ot \cdots \ot \W_1 \ot \V_1})(\chi^\opt) = \chi^\opt,
  \nonumber \\
  \qquad \forall t \in \{1,\dots,T\}, ~g \in \mG^{(t)}
  \label{eq:sym_chi2}
 \end{alignat}
 is proportional to some quantum comb.
\end{lemma}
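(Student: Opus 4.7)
The plan is to verify, by iterated application of Schur's lemma slot by slot, that $\chi^\opt$ itself satisfies (up to an overall nonnegative scalar) the defining recursion of the comb cone. Setting $\chi^{(T)} \coloneqq \chi^\opt$, the target is to construct operators $\chi^{(t)} \in \Pos_{\Wt \ot \Vt \ot \cdots \ot \W_1 \ot \V_1}$ for $t = T-1, T-2, \ldots, 1$ together with a scalar $\chi^{(0)} = c \in \Real_+$ such that $\Trp{\Wt} \chi^{(t)} = \I_\Vt \ot \chi^{(t-1)}$ for every $1 \le t \le T$, which is exactly Eq.~\eqref{eq:Comb} after rescaling by $c$.

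For the base step, I would consider $\Trp{\W_T} \chi^\opt$, an operator on $\V_T \ot \W_{T-1} \ot \V_{T-1} \ot \cdots \ot \W_1 \ot \V_1$. Applying $\Trp{\W_T}$ to Eq.~\eqref{eq:sym_chi2} taken at $t = T$ and using the fact that conjugation by $U^{(T)}_g \ot \I$ inside the partial trace over $\W_T$ drops out, I obtain that $\Trp{\W_T} \chi^\opt$ is invariant under $\Ad_{\tU^{(T)}_g}$ acting on the $\V_T$ factor alone for every $g \in \mG^{(T)}$. Because $g \mapsto \tU^{(T)}_g$ is an irreducible projective unitary representation on $\V_T$, the tensor-product form of Schur's lemma applies: any operator on $\V_T \ot \mH$ commuting with $\tU^{(T)}_g \ot \I_\mH$ for all $g$ must factor as $\I_{\V_T} \ot Y$ for a unique $Y$ on $\mH$. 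This gives $\Trp{\W_T} \chi^\opt = \I_{\V_T} \ot \chi^{(T-1)}$.

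For the recursion, I would verify that $\chi^{(T-1)}$ inherits Eq.~\eqref{eq:sym_chi2} for every $t \le T-1$. These symmetries act on factors disjoint from $\W_T$, so they survive the partial trace over $\W_T$; applying them to $\I_{\V_T} \ot \chi^{(T-1)}$ and invoking uniqueness of the Schur factorization transfers each invariance to $\chi^{(T-1)}$ itself. The same base-step argument then peels off the $\Wt$ and $\Vt$ slots one at a time, using in turn the irreducibility of $g \mapsto \tU^{(t)}_g$ on $\V_t$. Iterating $T$ times terminates in a scalar $\chi^{(0)} \in \Complex$, which is nonnegative because partial traces preserve positivity of $\chi^\opt$. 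If $c \coloneqq \chi^{(0)} > 0$, setting $\tchi \coloneqq \chi^\opt / c$ and $\tau^{(t)} \coloneqq \chi^{(t)}/c$ satisfies Eq.~\eqref{eq:Comb} with $\tau^{(0)} = 1$, so $\tchi \in \Comb_{\W_T,\V_T,\ldots,\W_1,\V_1}$ and $\chi^\opt = c \tchi$; if $c = 0$, the recursion forces $\chi^\opt = 0$, which is trivially proportional to any comb.

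The main obstacle is the tensor-product form of Schur's lemma used in the base step, and in particular checking it for \emph{projective} rather than genuine unitary representations. The usual fix is to note that the commutant of a projective representation coincides with the commutant of the algebra it generates, because the scalar cocycle drops out of the commutation condition; the standard argument (decompose the commuting operator with respect to a basis of $\mathrm{End}(\V_t)$ and apply ordinary Schur to each coefficient) then carries over verbatim. Everything else in the proof is routine bookkeeping of which symmetry conditions are preserved by each partial trace.
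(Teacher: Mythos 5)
Your proof is correct and follows essentially the same route as the paper's: both hinge on applying Schur's lemma to the irreducible representation on each $\V_t$ to force the identity factor $\I_\Vt$ after tracing out $\Wt$. The only difference is organizational---the paper establishes $\Trp{\Wt}\chi^\opt = \I_\Vt \ot \chi^\opt_t$ for each $t$ directly on $\chi^\opt$ (handling the tensor-product form of Schur's lemma by contracting with an arbitrary $s \in \Pos_{X_t}$ and applying ordinary Schur on $\V_t$) and leaves the assembly of the nested comb chain as an easy verification, whereas you build the chain $\chi^{(T)},\dots,\chi^{(0)}$ recursively, which makes that final assembly (and the degenerate case $\chi^{(0)}=0$) explicit at the cost of checking that the symmetries descend through the partial traces.
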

\begin{proof}
 It suffices to show that, for each $t \in \{1,\dots,T\}$,
 $\Trp{\Wt} \chi^\opt$ is expressed in the form $\Trp{\Wt} \chi^\opt = \I_\Vt \ot \chi^\opt_t$
 with $\chi^\opt_t \in \Pos_{X_t}$,
 where $X_t$ is the tensor product of all $\W_{t'} \ot \V_{t'}$
 with $t' \in \{1,\dots,t-1,t+1,\dots,T\}$.
 Indeed, in this case, one can easily verify that
 $\chi^\opt$ is proportional to some quantum comb.
 Let us fix $t \in \{1,\dots,T\}$.
 Also, let $\chi^\opt_s \coloneqq \Trp{X_t}[(\I_\Vt \ot s) \Trp{\Wt} \chi^\opt] \in \Pos_\Vt$,
 where $s \in \Pos_{X_t}$ is arbitrarily chosen;
 then, we have
 \begin{alignat}{1}
  \Tr \chi^\opt_s &= \braket{s, \chi'},
  \quad \chi' \coloneqq \Trp{\Wt \ot \Vt} \chi^\opt.
  \label{eq:TrChiOptS}
 \end{alignat}
 Equation~\eqref{eq:sym_chi2} gives
 $\Ad_{\tU^{(t)}_g}(\chi^\opt_s) = \chi^\opt_s$ $~[\forall g \in \mG^{(t)}]$.
 From Schur's lemma,
 $\chi^\opt_s$ must be proportional to $\I_\Vt$.
 Thus, from Eq.~\eqref{eq:TrChiOptS}, $\chi^\opt_s = \braket{s,\chi'} \I_\Vt / N_\Vt$ holds.
 We have that for any $s' \in \Pos_\Vt$,
 \begin{alignat}{1}
  \braket{s' \ot s, \Trp{\Wt} \chi^\opt} &= \braket{s', \chi^\opt_s}
  = \braket{s, \chi'} \braket{s', \I_\Vt / N_\Vt} \nonumber \\
  &= \braket{s' \ot s, \I_\Vt \ot \chi' / N_\Vt}.
  \label{eq:ss_chidt}
 \end{alignat}
 Since Eq.~\eqref{eq:ss_chidt} holds for any $s$ and $s'$,
 we have $\Trp{\Wt} \chi^\opt = \I_\Vt \ot \chi^\opt_t$
 with $\chi^\opt_t \coloneqq \chi' / N_\Vt \in \Pos_{X_t}$.
\end{proof}

Now, we are ready to prove Proposition~\ref{pro:optimal_maxent}. 
Let $\mC$ and $\S$ be defined as Eq.~\eqref{eq:C_maxent};
then, it is easily seen that Eq.~\eqref{eq:sym_ConvS} holds.
Let $e_t$ be the identity element of $\mG^{(t)}$.
From Lemma~\ref{lemma:symmetry},
there exists an optimal solution $\chi^\opt \in \Pos_\tV$
to Problem~\eqref{prob:D} satisfying Eq.~\eqref{eq:sym_chi}.
Thus, $\mU_{(e_T,\dots,e_{t+1},g_t,e_{t-1},\dots,e_1)}(\chi^\opt) = \chi^\opt$ holds
for any $t \in \{1,\dots,T\}$ and $g_t \in \mG^{(t)}$,
i.e., Eq.~\eqref{eq:sym_chi2} holds,
which implies from Lemma~\ref{lemma:symmetry_irreducible} that
$\chi^\opt$ is proportional to some quantum comb.
Therefore, Proposition~\ref{pro:optimal2} concludes the proof.

\section{Relationship between robustnesses and process discrimination problems} \label{append:robustness}

Let us consider the robustness of $\mE \in \Her_\tV$ defined by
\begin{alignat}{1}
 R^\Free_\mK(\mE) &\coloneqq \inf \left\{ \lambda \in \Real_+ : \frac{\mE + \lambda\mE'}{1+\lambda} \in \Free,
 ~ \mE' \in \mK \right\}, \quad \mE \in \Her_\tV,
 \label{eq:RK2t}
\end{alignat}
where $\mK$ $~(\subset \Her_\tV)$ is a proper convex cone
[or, equivalently, $\mK$ is a closed convex cone that is pointed (i.e., $\mK \cap -\mK = \{ \zero \}$)
and has nonempty interior] and
$\Free$ $~(\subset \Her_\tV)$ is a compact set.
Assume that the set $\{ \lambda Z : \lambda \in \Real, ~0 \le \lambda \le 1, ~Z \in \Free \}$
is convex.
In order for this value to be well-defined, we assume that $\Free \cap \inter(\mK)$ is not empty.
The so-called global (or generalized) robustness of a state $\rho \in \Den_\V$
with respect to $\Free \subseteq \Den_\V$, defined as \cite{Har-Nie-2003}
\begin{alignat}{1}
 R_{\Free}(\rho) &\coloneqq \min \left\{ \lambda \in \Real_+ : \frac{\rho + \lambda\rho'}{1+\lambda} \in \Free,
 ~\rho' \in \Den_\V \right\}, \nonumber \\
 \qquad \rho \in \Den_\V
 \label{eq:RF}
\end{alignat}
is equal to $R^\Free_{\Pos_\V}(\rho)$.
In other words, $R_{\Free}:\Den_\V \to \Real_+$ is the same function as
$R^\Free_{\Pos_\V}:\Her_\V \to \Real_+$, but is only defined on $\Den_\V$.
As an example of $R_{\Free}$, if $\Free$ is the set of all bipartite separable states,
then $R_{\Free}(\rho)$ can be understood as a measure of entanglement.
The robustness $R_{\Free}(\rho)$ is known to represent the maximum advantage that
$\rho$ provides in a certain subchannel discrimination problem
(e.g., \cite{Tak-Reg-Bu-Liu-2019,Uol-Kra-Sha-Yu-2019}).
Similarly, as will be seen in Proposition~\ref{pro:R_disc_ex},
$R^\Free_\mK(\mE)$ has a close relationship
with the maximum advantage that $\mE$ provides in a certain discrimination problem.

By letting $Z \coloneqq (\mE + \lambda\mE')/(1+\lambda)$,
we can rewrite $R^\Free_\mK(\mE)$ of Eq.~\eqref{eq:RK2t} as
\begin{alignat}{1}
 R^\Free_\mK(\mE) &= \min \{ \lambda \in \Real_+ : (1 + \lambda) Z - \mE \in \mK, ~Z \in \Free \}.
 \label{eq:RK2a}
\end{alignat}
Let
\begin{alignat}{1}
 \mN &\coloneqq \{ \mE \in \Her_\tV : \delta Z - \mE \not\in \mK ~(\forall \delta < 1, Z \in \Free) \};
\end{alignat}
then, it follows that
\begin{alignat}{1}
 R^\Free_\mK(\mE) &= \min \{ \lambda \in \Real : (1 + \lambda) Z - \mE \in \mK, ~Z \in \Free \},
 \quad \forall \mE \in \mN.
 \label{eq:RK2}
\end{alignat}

We first prove the following two lemmas.
\begin{lemma} \label{lemma:R_disc0}
 If $\mE \in \mN$ holds, then
 \begin{alignat}{1}
  1 + R^\Free_\mK(\mE) &= \max
  \{ \braket{\varphi,\mE} : \varphi \in \mK^*, \braket{\varphi,Z} \le 1 ~(\forall Z \in \Free) \}
  \nonumber \\
  \label{eq:R_dual}
 \end{alignat}
 holds.
\end{lemma}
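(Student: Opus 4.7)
My plan is to derive the asserted duality by conic Lagrangian duality after a convex lift enabled by the hypothesis that $\tmF := \{\lambda Z : 0 \le \lambda \le 1,~Z \in \Free\}$ is convex. Starting from Eq.~\eqref{eq:RK2}, valid because $\mE \in \mN$, I substitute $\mu := 1+\lambda$ to obtain $1+R^\Free_\mK(\mE) = \inf\{\mu \in \Real : \mu Z - \mE \in \mK,~Z \in \Free\}$. The bilinear coupling $\mu Z$ obstructs direct convexity, so I introduce $W := \mu Z$ and work with the cone $\Gamma := \coni(\tmF \times \{1\}) \subset \Her_\tV \times \Real$, which is a convex cone because $\tmF$ is convex; one checks $(W,\mu) \in \Gamma$ if and only if $W = \mu' Z$ for some $Z \in \Free$ and $\mu' \in [0,\mu]$. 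The reformulated primal is then the conic LP $\min\{\mu : (W,\mu) \in \Gamma,~W - \mE \in \mK\}$, whose optimum still equals $1+R^\Free_\mK(\mE)$ since taking $\mu' = \mu$ is always feasible.

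Next I compute the Lagrangian $L(W,\mu,\varphi) := \mu - \braket{\varphi, W - \mE}$ with $\varphi \in \mK^*$ dualizing the constraint $W - \mE \in \mK$. Minimizing over $(W,\mu) \in \Gamma$ yields a finite value exactly when $\mu - \braket{\varphi, W} \ge 0$ throughout $\Gamma$; substituting $W = \mu Z$ for $Z \in \Free$ and $\mu > 0$ reduces this to $\braket{\varphi, Z} \le 1$ for every $Z \in \Free$, and under that constraint the dual value equals $\braket{\varphi, \mE}$. The dual problem therefore matches the right-hand side of Eq.~\eqref{eq:R_dual}. Weak duality admits a direct verification as well: for primal-feasible $(\mu, Z)$ and dual-feasible $\varphi$, the inequality $\braket{\varphi, \mu Z - \mE} \ge 0$ combined with $\braket{\varphi, Z} \le 1$ and $\mu \ge 1 > 0$ (from $\mE \in \mN$) yields $\mu \ge \braket{\varphi, \mE}$.

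Strong duality and attainment of the maximum then follow from standard conic-programming arguments. The assumption $\Free \cap \inter(\mK) \ne \emptyset$ furnishes a Slater point: fix $Z_0 \in \Free \cap \inter(\mK)$ and choose $\mu_0$ large enough that $W_0 - \mE \in \inter(\mK)$ for $W_0 := \mu_0 Z_0$, so that $(W_0,\mu_0)$ is strictly primal feasible; this implies zero duality gap. Attainment of the supremum then follows from compactness of the dual feasible set: since $Z_0$ is in the interior of $\mK$, the inequality $\braket{\varphi, Z_0} \le 1$ together with $\varphi \in \mK^*$ uniformly bounds $\|\varphi\|$, so the supremum is realized on a compact set.

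The key technical obstacle is the convex lift in the first step: without the hypothesis that $\tmF$ (rather than $\Free$ itself) is convex, the set $\{(\mu Z, \mu) : \mu \ge 0, Z \in \Free\}$ would not form a convex cone, so neither the dual characterization of $\Gamma^*$ nor the strong-duality argument would go through without additional work. Once the lift is in place, the rest is routine conic Lagrangian duality.
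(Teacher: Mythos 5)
Your proof is correct, and at its core it is the same Lagrangian-duality argument as the paper's, but the two differ in how the key steps are packaged and, more substantively, in how strong duality is certified. The paper works directly with the gauge function $\eta(Y)=\min\{\lambda\in\Real_+ : Y=\lambda Z,~Z\in\Free\}$ on $\coni\Free$ and writes $1+R^\Free_\mK(\mE)=\min\{\eta(Y):Y-\mE\in\mK,~Y\in\coni\Free\}$; your cone $\Gamma=\coni(\tmF\times\{1\})$ is precisely the epigraph of that gauge, so your lifted conic program and the paper's gauge-minimization are the same convex problem and yield the same dual (the paper then separately checks that the constraint $\braket{\varphi,Y}\le\eta(Y)$ on $\coni\Free$ is equivalent to $\braket{\varphi,Z}\le 1$ on $\Free$, a step your parametrization makes automatic). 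The genuine divergence is the Slater argument: the paper exhibits a strictly feasible \emph{dual} point, rescaling an arbitrary $\varphi'\in\inter(\mK^*)$ by the factor $\gamma=\sup_Y\braket{\varphi',Y}/\eta(Y)$ (which is finite and positive thanks to $\Free\cap\inter(\mK)\neq\emptyset$ and compactness of $\Free$), whereas you exhibit a strictly feasible \emph{primal} point $\mu_0 Z_0$ with $\mu_0Z_0-\mE\in\inter(\mK)$. Your choice has the advantage of directly delivering attainment of the maximum over $\varphi$ (reinforced by your compactness bound $\braket{\varphi,Z_0}\ge\epsilon\|\varphi\|$), which is what the statement of the lemma actually asserts; the paper's choice instead delivers attainment on the minimization side and leaves attainment of the max to be read off from the same compactness fact. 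Both routes consume exactly the same hypotheses ($\mE\in\mN$, convexity of $\tmF$, and $\Free\cap\inter(\mK)\neq\emptyset$), so neither is more general; yours is arguably the cleaner certificate for the max form of the identity.
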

\begin{proof}
 Let $\tmF \coloneqq \coni \Free$.
 $\eta:\tmF \to \Real_+$ denotes the gauge function of $\Free$, which is defined as
 \begin{alignat}{1}
  \eta(Y) &\coloneqq \min \{ \lambda \in \Real_+ : Y = \lambda Z, ~Z \in \Free \}, \quad Y \in \tmF.
  \label{eq:etaY}
 \end{alignat}
 Note that $\eta$ is a convex function.
 Equation~\eqref{eq:RK2} gives
 \begin{alignat}{1}
  1 + R^\Free_\mK(\mE) &= \min \{ \eta(Y) : Y - \mE \in \mK, ~Y \in \tmF \}.
  \label{eq:RK3}
 \end{alignat}
 Let us consider the following Lagrangian
 \begin{alignat}{1}
  L(Y,\varphi) &\coloneqq \eta(Y) - \braket{\varphi,Y - \mE}
  = \braket{\varphi,\mE} + \eta(Y) - \braket{\varphi,Y}
 \end{alignat}
 with $Y \in \tmF$ and $\varphi \in \mK^*$.
 We can easily verify
 \begin{alignat}{1}
  \sup_{\varphi \in \mK^*} L(Y,\varphi) &=
  \begin{dcases}
   \eta(Y), & Y - \mE \in \mK, \\
   \infty, & \mathrm{otherwise},
  \end{dcases} \nonumber \\
  \inf_{Y \in \tmF} L(Y,\varphi) &=
  \begin{dcases}
   \braket{\varphi,\mE}, & \braket{\varphi,Y'} \le \eta(Y') ~(\forall Y' \in \tmF), \\
   -\infty, & \mathrm{otherwise}.
  \end{dcases}
 \end{alignat}
 Thus, the max-min inequality
 \begin{alignat}{1}
  \inf_{Y \in \tmF} \sup_{\varphi \in \mK^*} L(Y,\varphi) &\ge
  \sup_{\varphi \in \mK^*} \inf_{Y \in \tmF} L(Y,\varphi)
 \end{alignat}
 yields
 \begin{alignat}{1}
  1 + R^\Free_\mK(\mE) &\ge \max
  \{ \braket{\varphi,\mE} : \varphi \in \mK^*, ~\braket{\varphi,Y} \le \eta(Y) ~(\forall Y \in \tmF) \}.
  \label{eq:R_dual2}
 \end{alignat}
 We now prove the equality of Eq.~\eqref{eq:R_dual2}.
 To this end, it suffices to show that there exists $\varphi \in \inter(\mK^*)$ such that
 $\braket{\varphi,Y} \le \eta(Y)$ $~(\forall Y \in \tmF)$;
 indeed, in this case, the equality of Eq.~\eqref{eq:R_dual2} follows from Slater's condition.
 Arbitrarily choose $\varphi' \in \inter(\mK^*)$
 and let $\gamma \coloneqq \sup_{Y \in \tmF \setminus \{\zero\}} [\braket{\varphi',Y}/\eta(Y)]$
 [note that $\eta(Y) > 0$ holds for any $Y \in \tmF \setminus\{\zero\}$].
 Since $\Free \cap \inter(\mK)$ is not empty,
 there exists $Y \in \Free \cap \inter(\mK)$ such that $\braket{\varphi',Y} > 0$, which yields $\gamma > 0$.
 Let $\varphi \coloneqq \gamma^{-1} \varphi' \in \inter(\mK^*)$; then,
 we can easily verify $\braket{\varphi,Y} \le \eta(Y)$ $~(\forall Y \in \tmF)$.

 It remains to show
 \begin{alignat}{1}
  \braket{\varphi,Z} \le 1 ~(\forall Z \in \Free)
  &\quad\Leftrightarrow\quad \braket{\varphi,Y} \le \eta(Y) ~(\forall Y \in \tmF).
 \end{alignat}
 We first prove ``$\Rightarrow$''.
 Arbitrarily choose $Y \in \tmF$; then,
 from Eq.~\eqref{eq:etaY}, there exists $Z \in \Free$ such that $Y = \eta(Y)Z$.
 Thus, $\braket{\varphi,Y} = \eta(Y)\braket{\varphi,Z} \le \eta(Y)$ holds.
 We next prove ``$\Leftarrow$''.
 Arbitrarily choose $Z \in \Free$.
 Since the case $\braket{\varphi,Z} \le 0$ is obvious, we may assume $\braket{\varphi,Z} > 0$.
 Let $Z' \coloneqq Z / \eta(Z)$; then, from $\eta(Z) \le 1$, $Z' \in \tmF$, and $\eta(Z') = 1$,
 we have $\braket{\varphi,Z} \le \braket{\varphi,Z'} \le \eta(Z') = 1$.
\end{proof}
\begin{lemma} \label{lemma:R_disc}
 If $\mE \in \mN$ holds, then we have
 \begin{alignat}{1}
  \max_{\varphi \in \mX \setminus \{\zero\}}
  \frac{\braket{\varphi,\mE}}{\max_{Z \in \Free} \braket{\varphi,Z}}
  &= 1 + R^\Free_\mK(\mE),
  \label{eq:R_disc}
 \end{alignat}
 where $\mX$ is any set such that the cone generated by $\mX$ is $\mK^*$, i.e.,
 $\{ \lambda \varphi : \lambda \in \Real_+, ~\varphi \in \mX\} = \mK^*$.
\end{lemma}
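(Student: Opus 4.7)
The plan is to bootstrap off Lemma~\ref{lemma:R_disc0}, which already expresses $1+R^\Free_\mK(\mE)$ as $\max\{\braket{\varphi,\mE}:\varphi\in\mK^{*},\ \braket{\varphi,Z}\le 1\ \forall Z\in\Free\}$. Since the constraints $\braket{\varphi,Z}\le 1$ for all $Z\in\Free$ are exactly $\max_{Z\in\Free}\braket{\varphi,Z}\le 1$, the right-hand side is a constrained maximization over the cone $\mK^{*}$ with a normalization inequality. My first step is to argue that this inequality is tight at an optimizer.

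To do so, fix an optimal $\varphi^{\opt}\in\mK^{*}$ from Eq.~\eqref{eq:R_dual} and let $c^{\opt}\coloneqq\max_{Z\in\Free}\braket{\varphi^{\opt},Z}$. I need $c^{\opt}>0$: pick $Y\in\Free\cap\inter(\mK)$ (nonempty by hypothesis); then since $\mK$ is a proper cone, $\braket{\varphi,Y}>0$ for every nonzero $\varphi\in\mK^{*}$, so $c^{\opt}\ge\braket{\varphi^{\opt},Y}>0$ as soon as $\varphi^{\opt}\neq\zero$ (which follows from the next sentence). Next, $\mE\in\mN$ combined with Lemma~\ref{lemma:R_disc0} gives $\braket{\varphi^{\opt},\mE}=1+R^\Free_\mK(\mE)\ge 1>0$. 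If $c^{\opt}<1$ held, then rescaling $\varphi^{\opt}\mapsto\varphi^{\opt}/c^{\opt}\in\mK^{*}$ would remain feasible and strictly increase the objective — contradicting optimality. Hence $c^{\opt}=1$, and one can rewrite
\begin{alignat}{1}
1+R^\Free_\mK(\mE) &= \max\left\{\braket{\varphi,\mE}:\varphi\in\mK^{*},\ \max_{Z\in\Free}\braket{\varphi,Z}=1\right\}.
\end{alignat}

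Once this normalized form is in hand, the remainder is a scaling argument. The ratio $\braket{\varphi,\mE}/\max_{Z\in\Free}\braket{\varphi,Z}$ is homogeneous of degree zero: it is unchanged under $\varphi\mapsto\lambda\varphi$ for any $\lambda>0$. Dividing the objective in the display above by the constraint value, the normalized maximum becomes $\max_{\varphi\in\mK^{*}\setminus\{\zero\}}\braket{\varphi,\mE}/\max_{Z\in\Free}\braket{\varphi,Z}$. Finally, by the hypothesis on $\mX$, every $\varphi\in\mK^{*}\setminus\{\zero\}$ admits a decomposition $\varphi=\lambda\varphi'$ with $\lambda>0$ and $\varphi'\in\mX\setminus\{\zero\}$ (the factor $\lambda$ must be positive, else $\varphi=\zero$). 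By the scale-invariance just noted, the supremum over $\mK^{*}\setminus\{\zero\}$ equals the supremum over $\mX\setminus\{\zero\}$, which is Eq.~\eqref{eq:R_disc}.

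The only delicate step is the tightening of the constraint to $c^{\opt}=1$: it relies simultaneously on the positivity of the denominator (guaranteed by $\Free\cap\inter(\mK)\neq\emptyset$) and on $\mE\in\mN$ (used to rule out $\braket{\varphi^{\opt},\mE}\le 0$, which would otherwise make rescaling unhelpful). Everything else is routine manipulation of a homogeneous objective over a cone.
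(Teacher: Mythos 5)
Your proof is correct and follows essentially the same route as the paper's: both start from Lemma~\ref{lemma:R_disc0}, show the normalization constraint is tight at an optimizer (using $\Free\cap\inter(\mK)\neq\emptyset$ for positivity of the denominator and $\braket{\varphi^\opt,\mE}=1+R^\Free_\mK(\mE)>0$ to make the rescaling contradiction work), and then exploit the degree-zero homogeneity of the ratio to pass to $\mX\setminus\{\zero\}$. If anything, your explicit detour through $\mK^*\setminus\{\zero\}$ before restricting to $\mX$ is slightly more careful than the paper's final display, which conflates $\mX$ with the cone it generates.
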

\begin{proof}
 Let
 \begin{alignat}{1}
  \varphi^\opt &\coloneqq \mathop{\argmax}\limits_{\varphi \in \mK^*, \Gamma(\varphi) \le 1}
  \braket{\varphi,\mE},
  \quad \Gamma(\varphi) \coloneqq \max_{Z \in \Free} \braket{\varphi,Z};
 \end{alignat}
 then, from Eq.~\eqref{eq:R_dual}, we have $\braket{\varphi^\opt,\mE} = 1 + R^\Free_\mK(\mE)$.
 Since $\Free \cap \inter(\mK)$ is not empty,
 $\Gamma(\varphi) > 0$ holds for any $\varphi \in \mK^* \setminus \{\zero\}$.
 It follows that $\Gamma(\varphi^\opt) = 1$ must hold
 [otherwise, $\tvarphi \coloneqq \varphi^\opt / \Gamma(\varphi^\opt)$
 satisfies $\braket{\tvarphi,\mE} > \braket{\varphi^\opt,\mE}$,
 $\tvarphi \in \mK^*$, and $\Gamma(\tvarphi) = 1$,
 which contradicts the definition of $\varphi^\opt$].
 For any $\varphi \in \mX \setminus \{\zero\}$,
 $\varphi' \coloneqq \varphi / \Gamma(\varphi)$ satisfies
 $\braket{\varphi,\mE} / \Gamma(\varphi) = \braket{\varphi',\mE}$
 and $\Gamma(\varphi') = 1$.
 Thus, we have
 \begin{alignat}{1}
  \max_{\varphi \in \mX \setminus \{\zero\}} \frac{\braket{\varphi,\mE}}{\Gamma(\varphi)}
  &= \max_{\substack{\varphi' \in \mX \setminus \{\zero\}, \\ \Gamma(\varphi') = 1}}
  \braket{\varphi',\mE} = \braket{\varphi^\opt,\mE} = 1 + R^\Free_\mK(\mE).
 \end{alignat}
\end{proof}

We should note that, in practical situations, many physically interesting processes belong to $\mN$.
As an example, if $\Free$ is a subset of all combs in $\Her_\tV$ and
$\inter(\mK^*) \cap \Comb^*_{\W_T,\V_T,\dots,\W_1,\V_1}$ is not empty,
then any comb in $\Her_\tV$ belongs to $\mN$.
[Indeed, arbitrarily choose
$\phi \in \inter(\mK^*) \cap \Comb^*_{\W_T,\V_T,\dots,\W_1,\V_1}$
and a comb $\mE \in \Her_\tV$;
then, from $\phi \in \Comb^*_{\W_T,\V_T,\dots,\W_1,\V_1}$,
$\braket{\phi,Z} = \braket{\phi,\mE} = 1$ $~(\forall Z \in \Free)$ holds.
Thus, for any $\delta < 1$, $\braket{\phi,\delta Z - \mE} = \delta - 1 < 0$ holds,
which yields $\delta Z - \mE \not\in \mK$.
Therefore, we have $\mE \in \mN$.]
For instance, if $\Free \subseteq \Den_\V$ and
$\Tr x > 0$ $~(\forall x \in \mK \setminus \{\zero\})$ hold,
then any $\rho \in \Den_\V$ is in $\mN$.
As another example, if $R^\Free_\mK(\mE) > 0$ holds, then $\mE \in \mN$ always holds.
[Indeed, by contraposition, assume $\mE \not\in \mN$; then,
there exists $\delta < 1$ and $Z \in \Free$ such that $\delta Z - \mE \in \mK$.
Let $\lambda^\opt \coloneqq R^\Free_\mK(\mE)$.
It is easily seen from $R^\Free_\mK(\mE) \ge 0$ that there exists $Z^\opt \in \Free$ such that
$(1 + \lambda^\opt)Z^\opt - \mE \in \mK$.
Let $p \coloneqq (1 - \delta)/(\lambda^\opt + 1 - \delta)$
and $Z' \coloneqq p(1 + \lambda^\opt)Z^\opt + (1-p)\delta Z$;
then, we have $0 \le p \le 1$, $Z' \in \Free$, and $Z' - \mE \in \mK$.
This implies $R^\Free_\mK(\mE) = 0$.]

We obtain the following proposition.
\begin{proposition} \label{pro:R_disc_ex}
 Let us consider $\mE \in \mN$.
 Let $\tV'$ be an arbitrary system.
 We consider a set of pairs $\mL \coloneqq \{(\{\hcJ_m\}_{m=1}^M, \{\Phi_m\}_{m=1}^M)\}$,
 where $\{ \hcJ_m : \Her_\tV \to \Her_{\tV'} \}_{m=1}^M$ is a collection of linear maps
 and $\Phi_1,\dots,\Phi_M \in \Her_{\tV'}$.
 Assume that the cone generated by
 \begin{alignat}{1}
  \mX &\coloneqq \left\{ \summ \hcJ_m^\dagger(\Phi_m) :
  ~(\{\hcJ_m\}, \{\Phi_m\}) \in \mL \right\}
 \end{alignat}
 is $\mK^*$, where $\hcJ_m^\dagger$ is the adjoint of $\hcJ_m$,
 which is defined as $\braket{\hcJ_m^\dagger(\Phi'),\mE'} = \braket{\Phi',\hcJ_m(\mE')}$
 $~(\forall \mE' \in \Her_\tV, \Phi' \in \Her_{\tV'})$.
 Then, we have
 \begin{alignat}{1}
  \max_{(\{\hcJ_m\}, \{\Phi_m\}) \in \mL'} \frac{\summ \braket{\Phi_m,\hcJ_m(\mE)}}
  {\max_{Z \in \Free} \summ \braket{\Phi_m,\hcJ_m(Z)}} &= 1 + R^\Free_\mK(\mE),
  \nonumber \\
  \label{eq:R_disc_ex}
 \end{alignat}
 where
 \begin{alignat}{1}
  \mL' &\coloneqq \left\{ (\{\hcJ_m\}, \{\Phi_m\}) \in \mL :
  \summ \hcJ_m^\dagger(\Phi_m) \neq \zero \right\}.
 \end{alignat}
\end{proposition}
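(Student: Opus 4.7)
The plan is to reduce Proposition~\ref{pro:R_disc_ex} directly to Lemma~\ref{lemma:R_disc} by using the adjoint map to repackage each pair $(\{\hcJ_m\},\{\Phi_m\}) \in \mL$ into a single Hermitian operator on $\tV$. Concretely, for each such pair define $\varphi \coloneqq \summ \hcJ_m^\dagger(\Phi_m) \in \Her_\tV$. By the defining property of the adjoint, for every $Y \in \Her_\tV$ we have $\summ \braket{\Phi_m,\hcJ_m(Y)} = \braket{\varphi,Y}$. Applying this identity once with $Y = \mE$ and once with $Y = Z$ (uniformly in $Z \in \Free$) rewrites the numerator and denominator on the left-hand side of Eq.~\eqref{eq:R_disc_ex} as $\braket{\varphi,\mE}$ and $\max_{Z \in \Free}\braket{\varphi,Z}$, respectively.

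Next I would check that optimizing $(\{\hcJ_m\},\{\Phi_m\})$ over $\mL'$ is equivalent to optimizing $\varphi$ over $\mX\setminus\{\zero\}$. The map $(\{\hcJ_m\},\{\Phi_m\}) \mapsto \summ \hcJ_m^\dagger(\Phi_m)$ sends $\mL$ onto $\mX$ by construction, and the condition $\summ \hcJ_m^\dagger(\Phi_m) \neq \zero$ that defines $\mL'$ is exactly the condition $\varphi \neq \zero$. Hence the left-hand side of Eq.~\eqref{eq:R_disc_ex} equals
\begin{alignat}{1}
 \max_{\varphi \in \mX \setminus \{\zero\}}
 \frac{\braket{\varphi,\mE}}{\max_{Z \in \Free}\braket{\varphi,Z}}.
\end{alignat}

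By hypothesis, the cone generated by $\mX$ is $\mK^*$, so Lemma~\ref{lemma:R_disc} applies directly and equates the above expression to $1 + R^\Free_\mK(\mE)$, yielding Eq.~\eqref{eq:R_disc_ex}. One small point I would verify along the way is that the ``max'' on the left is actually attained (rather than merely a supremum): since Lemma~\ref{lemma:R_disc} guarantees that the supremum over $\mX\setminus\{\zero\}$ is attained at some $\varphi^\opt \in \mK^*$ with $\max_{Z\in\Free}\braket{\varphi^\opt,Z}=1$, one may lift $\varphi^\opt$ back to some representative $(\{\hcJ_m\},\{\Phi_m\}) \in \mL'$ realising the optimum, using positive homogeneity of both numerator and denominator in $\varphi$ to absorb any scalar factor needed to pass from an element of $\mK^*$ to an element of $\mX$.

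The main obstacle is essentially bookkeeping rather than a genuine mathematical difficulty: one must be careful that the denominator $\max_{Z\in\Free}\braket{\varphi,Z}$ is positive on $\mK^*\setminus\{\zero\}$ so that all the ratios in sight are well defined. This positivity is built into the hypothesis that $\Free \cap \inter(\mK)$ is nonempty (used in the proof of Lemma~\ref{lemma:R_disc}), and the assumption $\mE \in \mN$ ensures the reformulation of $R^\Free_\mK(\mE)$ in Eq.~\eqref{eq:RK2} that underlies Lemmas~\ref{lemma:R_disc0} and~\ref{lemma:R_disc}. Once these preconditions are acknowledged, the proof amounts to the adjoint rewriting and a single invocation of Lemma~\ref{lemma:R_disc}.
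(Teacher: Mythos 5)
Your proposal is correct and follows essentially the same route as the paper: rewrite the left-hand side via the adjoint identity $\summ \braket{\Phi_m,\hcJ_m(Y)} = \braket{\summ \hcJ_m^\dagger(\Phi_m),Y}$ as an optimization of $\braket{\varphi,\mE}/\max_{Z\in\Free}\braket{\varphi,Z}$ over $\varphi \in \mX\setminus\{\zero\}$, then invoke Lemma~\ref{lemma:R_disc}. The extra checks you flag (positivity of the denominator, attainment of the maximum) are sound and only make explicit what the paper leaves implicit.
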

\begin{proof}
 The left-hand side of Eq.~\eqref{eq:R_disc_ex} is rewritten by
 \begin{alignat}{1}
  \max_{\varphi \in \mX \setminus \{\zero\}}
  \frac{\braket{\varphi,\mE}}{\max_{Z \in \Free} \braket{\varphi,Z}}.
  \label{eq:R_disc_ex2}
 \end{alignat}
 Thus, an application of Lemma~\ref{lemma:R_disc} completes the proof.
\end{proof}

The operational meaning of Eq.~\eqref{eq:R_disc_ex} is as follows.
Suppose that $\{ \hcJ_m \}_m$ is a collection of (unnormalized) processes such that
$\summ \hcJ_m$ is a comb from $\Pos_\tV$ to $\Pos_{\tV'}$ and that $\{\Phi_k\}_k$ is a tester,
where the pair $(\{ \hcJ_m \}_m, \{\Phi_k\}_k)$ is restricted to belong to $\mL$.
We consider the situation that a party, Alice, applies a process $\hcJ_m$ to a comb
$\mE \in \Pos_\tV \cap \mN$,
and then another party, Bob, applies a tester $\{\Phi_k\}_k$ to $\hcJ_m(\mE)$.
The probability of Bob correctly guessing which of the processes $\hcJ_1,\dots,\hcJ_M$ Alice applies is
expressed by $\summ \braket{\Phi_m,\hcJ_m(\mE)}$
[note that $\sum_{k=1}^M \summ \braket{\Phi_k,\hcJ_m(\mE)} = 1$ holds].
Equation~\eqref{eq:R_disc_ex} implies that the advantage of $\mE$ over all $Z \in \Free$
in such a discrimination problem can be exactly quantified by the robustness $R^\Free_\mK(\mE)$.
In this situation, $\mK^* \subseteq \Pos^*_\tV$, i.e., $\mK \supseteq \Pos_\tV$, holds.

We give two examples of the application of Proposition~\ref{pro:R_disc_ex}.
The first example is the case $\mK = \Pos_\tV$.
Let us consider the case where $\summ \hcJ_m$
can be any comb from $\Pos_\tV$ to $\Pos_{\tV'}$
and $\{ \Phi_k \}_k$ can be any tester.
We can easily see that Eq.~\eqref{eq:R_disc_ex} with $\mK = \Pos_\tV$ holds.
Note that, for example, Theorem~2 of Ref.~\cite{Tak-Reg-Bu-Liu-2019} and
Theorems~1 and 2 of Ref.~\cite{Uol-Kra-Sha-Yu-2019} can be understood as
special cases of Proposition~\ref{pro:R_disc_ex} with $\mK = \Pos_\tV$.
The second example is the case $\mK \neq \Pos_\tV$.
For instance, for a given channel $\hmE$ from a system $\V$ to a system $\W$,
assume that $\hcJ_m$ is the process that applies $\hmE$ to a state
$\rho_m \in \Den_\V$ with probability $p_m$
[i.e., $\hcJ_m(\mE) = p_m \Trp{\V}[(I_\W \ot \rho_m^\T)\mE]$ holds]
and $\{ \Phi_k \}_k$ is a measurement of $\W$.
Then, we have
$\braket{\Phi_k, \hcJ_m(\Endash)} = p_m \braket{\Phi_k \ot \rho_m^\T, \Endash}$.
It is easily seen that Eq.~\eqref{eq:R_disc_ex} with $\mK^* = \Sep_{\W,\V}$
(or, equivalently, $\mK = \Sep_{\W,\V}^*$) holds,
where $\Sep_{\W,\V}$ is the set of all bipartite separable elements in $\Pos_{\W \ot \V}$.
Note that, for a linear map $\hPsi$ from $\V$ to $\W$,
$\Psi \in \Sep_{\W,\V}^*$ holds if and only if $\hPsi$ is a positive map.

%


\begin{thebibliography}{52}%
\makeatletter
\providecommand \@ifxundefined [1]{%
 \@ifx{#1\undefined}
}%
\providecommand \@ifnum [1]{%
 \ifnum #1\expandafter \@firstoftwo
 \else \expandafter \@secondoftwo
 \fi
}%
\providecommand \@ifx [1]{%
 \ifx #1\expandafter \@firstoftwo
 \else \expandafter \@secondoftwo
 \fi
}%
\providecommand \natexlab [1]{#1}%
\providecommand \enquote  [1]{``#1''}%
\providecommand \bibnamefont  [1]{#1}%
\providecommand \bibfnamefont [1]{#1}%
\providecommand \citenamefont [1]{#1}%
\providecommand \href@noop [0]{\@secondoftwo}%
\providecommand \href [0]{\begingroup \@sanitize@url \@href}%
\providecommand \@href[1]{\@@startlink{#1}\@@href}%
\providecommand \@@href[1]{\endgroup#1\@@endlink}%
\providecommand \@sanitize@url [0]{\catcode `\\12\catcode `\$12\catcode
  `\&12\catcode `\#12\catcode `\^12\catcode `\_12\catcode `\%12\relax}%
\providecommand \@@startlink[1]{}%
\providecommand \@@endlink[0]{}%
\providecommand \url  [0]{\begingroup\@sanitize@url \@url }%
\providecommand \@url [1]{\endgroup\@href {#1}{\urlprefix }}%
\providecommand \urlprefix  [0]{URL }%
\providecommand \Eprint [0]{\href }%
\providecommand \doibase [0]{http://dx.doi.org/}%
\providecommand \selectlanguage [0]{\@gobble}%
\providecommand \bibinfo  [0]{\@secondoftwo}%
\providecommand \bibfield  [0]{\@secondoftwo}%
\providecommand \translation [1]{[#1]}%
\providecommand \BibitemOpen [0]{}%
\providecommand \bibitemStop [0]{}%
\providecommand \bibitemNoStop [0]{.\EOS\space}%
\providecommand \EOS [0]{\spacefactor3000\relax}%
\providecommand \BibitemShut  [1]{\csname bibitem#1\endcsname}%
\let\auto@bib@innerbib\@empty
\bibitem [{\citenamefont {Walgate}\ \emph {et~al.}(2000)\citenamefont
  {Walgate}, \citenamefont {Short}, \citenamefont {Hardy},\ and\ \citenamefont
  {Vedral}}]{Wal-Sho-Har-Ved-2000}%
  \BibitemOpen
  \bibfield  {author} {\bibinfo {author} {\bibfnamefont {J.}~\bibnamefont
  {Walgate}}, \bibinfo {author} {\bibfnamefont {A.~J.}\ \bibnamefont {Short}},
  \bibinfo {author} {\bibfnamefont {L.}~\bibnamefont {Hardy}}, \ and\ \bibinfo
  {author} {\bibfnamefont {V.}~\bibnamefont {Vedral}},\ }\href
  {https://doi.org/10.1103/PhysRevLett.85.4972} {\bibfield  {journal} {\bibinfo
   {journal} {Phys. Rev. Lett.}\ }\textbf {\bibinfo {volume} {85}},\ \bibinfo
  {pages} {4972} (\bibinfo {year} {2000})}\BibitemShut {NoStop}%
\bibitem [{\citenamefont {Touzel}\ \emph {et~al.}(2007)\citenamefont {Touzel},
  \citenamefont {Adamson},\ and\ \citenamefont {Steinberg}}]{Tou-Ada-Ste-2007}%
  \BibitemOpen
  \bibfield  {author} {\bibinfo {author} {\bibfnamefont {M.~A.~P.}\
  \bibnamefont {Touzel}}, \bibinfo {author} {\bibfnamefont {R.~B.~A.}\
  \bibnamefont {Adamson}}, \ and\ \bibinfo {author} {\bibfnamefont {A.~M.}\
  \bibnamefont {Steinberg}},\ }\href
  {https://doi.org/10.1103/PhysRevA.76.062314} {\bibfield  {journal} {\bibinfo
  {journal} {Phys. Rev. A}\ }\textbf {\bibinfo {volume} {76}},\ \bibinfo
  {pages} {062314} (\bibinfo {year} {2007})}\BibitemShut {NoStop}%
\bibitem [{\citenamefont {Chitambar}\ \emph {et~al.}(2014)\citenamefont
  {Chitambar}, \citenamefont {Hsieh},\ and\ \citenamefont
  {Duan}}]{Chi-Dua-Hsi-2014}%
  \BibitemOpen
  \bibfield  {author} {\bibinfo {author} {\bibfnamefont {E.}~\bibnamefont
  {Chitambar}}, \bibinfo {author} {\bibfnamefont {M.}~\bibnamefont {Hsieh}}, \
  and\ \bibinfo {author} {\bibfnamefont {R.}~\bibnamefont {Duan}},\ }\href
  {https://doi.org/10.1109/TIT.2013.2295356} {\bibfield  {journal} {\bibinfo
  {journal} {IEEE Trans. Inf. Theory}\ }\textbf {\bibinfo {volume} {60}},\
  \bibinfo {pages} {1549} (\bibinfo {year} {2014})}\BibitemShut {NoStop}%
\bibitem [{\citenamefont {Bandyopadhyay}\ \emph {et~al.}(2015)\citenamefont
  {Bandyopadhyay}, \citenamefont {Cosentino}, \citenamefont {Johnston},
  \citenamefont {Russo}, \citenamefont {Watrous},\ and\ \citenamefont
  {Yu}}]{Ban-Cos-Joh-Rus-2015}%
  \BibitemOpen
  \bibfield  {author} {\bibinfo {author} {\bibfnamefont {S.}~\bibnamefont
  {Bandyopadhyay}}, \bibinfo {author} {\bibfnamefont {A.}~\bibnamefont
  {Cosentino}}, \bibinfo {author} {\bibfnamefont {N.}~\bibnamefont {Johnston}},
  \bibinfo {author} {\bibfnamefont {V.}~\bibnamefont {Russo}}, \bibinfo
  {author} {\bibfnamefont {J.}~\bibnamefont {Watrous}}, \ and\ \bibinfo
  {author} {\bibfnamefont {N.}~\bibnamefont {Yu}},\ }\href
  {https://doi.org/10.1109/TIT.2015.2417755} {\bibfield  {journal} {\bibinfo
  {journal} {IEEE Trans. Inf. Theory}\ }\textbf {\bibinfo {volume} {61}},\
  \bibinfo {pages} {3593} (\bibinfo {year} {2015})}\BibitemShut {NoStop}%
\bibitem [{\citenamefont {Nakahira}\ and\ \citenamefont
  {Usuda}(2018)}]{Nak-Usu-2016-LOCC}%
  \BibitemOpen
  \bibfield  {author} {\bibinfo {author} {\bibfnamefont {K.}~\bibnamefont
  {Nakahira}}\ and\ \bibinfo {author} {\bibfnamefont {T.~S.}\ \bibnamefont
  {Usuda}},\ }\href {https://doi.org/10.1109/TIT.2016.2549994} {\bibfield
  {journal} {\bibinfo  {journal} {IEEE Trans. Inf. Theory}\ }\textbf {\bibinfo
  {volume} {64}},\ \bibinfo {pages} {613} (\bibinfo {year} {2018})}\BibitemShut
  {NoStop}%
\bibitem [{\citenamefont {Sedl{\'a}k}\ and\ \citenamefont
  {Ziman}(2014)}]{Sed-Zim-2014}%
  \BibitemOpen
  \bibfield  {author} {\bibinfo {author} {\bibfnamefont {M.}~\bibnamefont
  {Sedl{\'a}k}}\ and\ \bibinfo {author} {\bibfnamefont {M.}~\bibnamefont
  {Ziman}},\ }\href {https://doi.org/10.1103/PhysRevA.90.052312} {\bibfield
  {journal} {\bibinfo  {journal} {Phys. Rev. A}\ }\textbf {\bibinfo {volume}
  {90}},\ \bibinfo {pages} {052312} (\bibinfo {year} {2014})}\BibitemShut
  {NoStop}%
\bibitem [{\citenamefont {Pucha{\l}a}\ \emph {et~al.}(2018)\citenamefont
  {Pucha{\l}a}, \citenamefont {Pawela}, \citenamefont {Krawiec},\ and\
  \citenamefont {Kukulski}}]{Puc-Paw-Kra-Kuk-2018}%
  \BibitemOpen
  \bibfield  {author} {\bibinfo {author} {\bibfnamefont {Z.}~\bibnamefont
  {Pucha{\l}a}}, \bibinfo {author} {\bibfnamefont {{\L}.}~\bibnamefont
  {Pawela}}, \bibinfo {author} {\bibfnamefont {A.}~\bibnamefont {Krawiec}}, \
  and\ \bibinfo {author} {\bibfnamefont {R.}~\bibnamefont {Kukulski}},\ }\href
  {https://doi.org/10.1103/PhysRevA.98.042103} {\bibfield  {journal} {\bibinfo
  {journal} {Phys. Rev. A}\ }\textbf {\bibinfo {volume} {98}},\ \bibinfo
  {pages} {042103} (\bibinfo {year} {2018})}\BibitemShut {NoStop}%
\bibitem [{\citenamefont {Krawiec}\ \emph {et~al.}(2020)\citenamefont
  {Krawiec}, \citenamefont {Pawela},\ and\ \citenamefont
  {Pucha{\l}a}}]{Kra-Paw-Puc-2020}%
  \BibitemOpen
  \bibfield  {author} {\bibinfo {author} {\bibfnamefont {A.}~\bibnamefont
  {Krawiec}}, \bibinfo {author} {\bibfnamefont {{\L}.}~\bibnamefont {Pawela}},
  \ and\ \bibinfo {author} {\bibfnamefont {Z.}~\bibnamefont {Pucha{\l}a}},\
  }\href {https://doi.org/10.1007/s11128-020-02883-3} {\bibfield  {journal}
  {\bibinfo  {journal} {Quant. Inf. Proc.}\ }\textbf {\bibinfo {volume} {19}},\
  \bibinfo {pages} {428} (\bibinfo {year} {2020})}\BibitemShut {NoStop}%
\bibitem [{\citenamefont {Datta}\ \emph {et~al.}(2021)\citenamefont {Datta},
  \citenamefont {Biswas}, \citenamefont {Saha},\ and\ \citenamefont
  {Augusiak}}]{Dat-Bis-Sah-Aug-2020}%
  \BibitemOpen
  \bibfield  {author} {\bibinfo {author} {\bibfnamefont {C.}~\bibnamefont
  {Datta}}, \bibinfo {author} {\bibfnamefont {T.}~\bibnamefont {Biswas}},
  \bibinfo {author} {\bibfnamefont {D.}~\bibnamefont {Saha}}, \ and\ \bibinfo
  {author} {\bibfnamefont {R.}~\bibnamefont {Augusiak}},\ }\href
  {https://doi.org/10.1088/1367-2630/abecaf} {\bibfield  {journal} {\bibinfo
  {journal} {New J. Phys.}\ }\textbf {\bibinfo {volume} {23}},\ \bibinfo
  {pages} {043021} (\bibinfo {year} {2021})}\BibitemShut {NoStop}%
\bibitem [{\citenamefont {Acin}(2001)}]{Aci-2001}%
  \BibitemOpen
  \bibfield  {author} {\bibinfo {author} {\bibfnamefont {A.}~\bibnamefont
  {Acin}},\ }\href {https://doi.org/10.1103/PhysRevLett.87.177901} {\bibfield
  {journal} {\bibinfo  {journal} {Phys. Rev. Lett.}\ }\textbf {\bibinfo
  {volume} {87}},\ \bibinfo {pages} {177901} (\bibinfo {year}
  {2001})}\BibitemShut {NoStop}%
\bibitem [{\citenamefont {Sacchi}(2005)}]{Sac-2005}%
  \BibitemOpen
  \bibfield  {author} {\bibinfo {author} {\bibfnamefont {M.~F.}\ \bibnamefont
  {Sacchi}},\ }\href {https://doi.org/10.1103/PhysRevA.71.062340} {\bibfield
  {journal} {\bibinfo  {journal} {Phys. Rev. A}\ }\textbf {\bibinfo {volume}
  {71}},\ \bibinfo {pages} {062340} (\bibinfo {year} {2005})}\BibitemShut
  {NoStop}%
\bibitem [{\citenamefont {Duan}\ \emph {et~al.}(2007)\citenamefont {Duan},
  \citenamefont {Feng},\ and\ \citenamefont {Ying}}]{Dua-Fen-Yin-2007}%
  \BibitemOpen
  \bibfield  {author} {\bibinfo {author} {\bibfnamefont {R.}~\bibnamefont
  {Duan}}, \bibinfo {author} {\bibfnamefont {Y.}~\bibnamefont {Feng}}, \ and\
  \bibinfo {author} {\bibfnamefont {M.}~\bibnamefont {Ying}},\ }\href
  {https://doi.org/10.1103/PhysRevLett.98.100503} {\bibfield  {journal}
  {\bibinfo  {journal} {Phys. Rev. Lett.}\ }\textbf {\bibinfo {volume} {98}},\
  \bibinfo {pages} {100503} (\bibinfo {year} {2007})}\BibitemShut {NoStop}%
\bibitem [{\citenamefont {Li}\ and\ \citenamefont {Qiu}(2008)}]{Li-Qiu-2008}%
  \BibitemOpen
  \bibfield  {author} {\bibinfo {author} {\bibfnamefont {L.}~\bibnamefont
  {Li}}\ and\ \bibinfo {author} {\bibfnamefont {D.}~\bibnamefont {Qiu}},\
  }\href {https://doi.org/10.1088/1751-8113/41/33/335302} {\bibfield  {journal}
  {\bibinfo  {journal} {J. Phys. A: Math. Theor.}\ }\textbf {\bibinfo {volume}
  {41}},\ \bibinfo {pages} {335302} (\bibinfo {year} {2008})}\BibitemShut
  {NoStop}%
\bibitem [{\citenamefont {Harrow}\ \emph {et~al.}(2010)\citenamefont {Harrow},
  \citenamefont {Hassidim}, \citenamefont {Leung},\ and\ \citenamefont
  {Watrous}}]{Har-Has-Leu-Wat-2010}%
  \BibitemOpen
  \bibfield  {author} {\bibinfo {author} {\bibfnamefont {A.~W.}\ \bibnamefont
  {Harrow}}, \bibinfo {author} {\bibfnamefont {A.}~\bibnamefont {Hassidim}},
  \bibinfo {author} {\bibfnamefont {D.~W.}\ \bibnamefont {Leung}}, \ and\
  \bibinfo {author} {\bibfnamefont {J.}~\bibnamefont {Watrous}},\ }\href
  {https://doi.org/10.1103/PhysRevA.81.032339} {\bibfield  {journal} {\bibinfo
  {journal} {Phys. Rev. A}\ }\textbf {\bibinfo {volume} {81}},\ \bibinfo
  {pages} {032339} (\bibinfo {year} {2010})}\BibitemShut {NoStop}%
\bibitem [{\citenamefont {Matthews}\ \emph {et~al.}(2010)\citenamefont
  {Matthews}, \citenamefont {Piani},\ and\ \citenamefont
  {Watrous}}]{Mat-Pia-Wat-2010}%
  \BibitemOpen
  \bibfield  {author} {\bibinfo {author} {\bibfnamefont {W.}~\bibnamefont
  {Matthews}}, \bibinfo {author} {\bibfnamefont {M.}~\bibnamefont {Piani}}, \
  and\ \bibinfo {author} {\bibfnamefont {J.}~\bibnamefont {Watrous}},\ }\href
  {https://doi.org/10.1103/PhysRevA.82.032302} {\bibfield  {journal} {\bibinfo
  {journal} {Phys. Rev. A}\ }\textbf {\bibinfo {volume} {82}},\ \bibinfo
  {pages} {032302} (\bibinfo {year} {2010})}\BibitemShut {NoStop}%
\bibitem [{\citenamefont {Jen{\v{c}}ov{\'a}}\ and\ \citenamefont
  {Pl{\'a}vala}(2016)}]{Jen-Pla-2016}%
  \BibitemOpen
  \bibfield  {author} {\bibinfo {author} {\bibfnamefont {A.}~\bibnamefont
  {Jen{\v{c}}ov{\'a}}}\ and\ \bibinfo {author} {\bibfnamefont {M.}~\bibnamefont
  {Pl{\'a}vala}},\ }\href {https://doi.org/10.1063/1.4972286} {\bibfield
  {journal} {\bibinfo  {journal} {J. Math. Phys.}\ }\textbf {\bibinfo {volume}
  {57}},\ \bibinfo {pages} {122203} (\bibinfo {year} {2016})}\BibitemShut
  {NoStop}%
\bibitem [{\citenamefont {Duan}\ \emph {et~al.}(2016)\citenamefont {Duan},
  \citenamefont {Guo}, \citenamefont {Li},\ and\ \citenamefont
  {Li}}]{Dua-Guo-Li-Li-2016}%
  \BibitemOpen
  \bibfield  {author} {\bibinfo {author} {\bibfnamefont {R.}~\bibnamefont
  {Duan}}, \bibinfo {author} {\bibfnamefont {C.}~\bibnamefont {Guo}}, \bibinfo
  {author} {\bibfnamefont {C.-K.}\ \bibnamefont {Li}}, \ and\ \bibinfo {author}
  {\bibfnamefont {Y.}~\bibnamefont {Li}},\ }in\ \href
  {https://doi.org/10.1109/ISIT.2016.7541701} {\emph {\bibinfo {booktitle}
  {Proc. IEEE Int. Symp. Inf. Theory (ISIT)}}}\ (\bibinfo {organization}
  {IEEE},\ \bibinfo {year} {2016})\ pp.\ \bibinfo {pages}
  {2259--2263}\BibitemShut {NoStop}%
\bibitem [{\citenamefont {Li}\ \emph {et~al.}(2017)\citenamefont {Li},
  \citenamefont {Zheng}, \citenamefont {Situ},\ and\ \citenamefont
  {Qiu}}]{Li-Zhe-Sit-Qiu-2017}%
  \BibitemOpen
  \bibfield  {author} {\bibinfo {author} {\bibfnamefont {L.}~\bibnamefont
  {Li}}, \bibinfo {author} {\bibfnamefont {S.}~\bibnamefont {Zheng}}, \bibinfo
  {author} {\bibfnamefont {H.}~\bibnamefont {Situ}}, \ and\ \bibinfo {author}
  {\bibfnamefont {D.}~\bibnamefont {Qiu}},\ }\href
  {https://doi.org/10.1103/PhysRevA.96.052327} {\bibfield  {journal} {\bibinfo
  {journal} {Phys. Rev. A}\ }\textbf {\bibinfo {volume} {96}},\ \bibinfo
  {pages} {052327} (\bibinfo {year} {2017})}\BibitemShut {NoStop}%
\bibitem [{\citenamefont {Katariya}\ and\ \citenamefont
  {Wilde}(2020)}]{Kat-Wil-2020}%
  \BibitemOpen
  \bibfield  {author} {\bibinfo {author} {\bibfnamefont {V.}~\bibnamefont
  {Katariya}}\ and\ \bibinfo {author} {\bibfnamefont {M.~M.}\ \bibnamefont
  {Wilde}},\ }\href {https://arxiv.org/abs/2001.05376} {\bibfield  {journal}
  {\bibinfo  {journal} {arXiv preprint arXiv:2001.05376}\ } (\bibinfo {year}
  {2020})}\BibitemShut {NoStop}%
\bibitem [{\citenamefont {Chiribella}(2012{\natexlab{a}})}]{Chi-2012}%
  \BibitemOpen
  \bibfield  {author} {\bibinfo {author} {\bibfnamefont {G.}~\bibnamefont
  {Chiribella}},\ }\href {https://doi.org/10.1088/1367-2630/14/12/125008}
  {\bibfield  {journal} {\bibinfo  {journal} {New J. Phys.}\ }\textbf {\bibinfo
  {volume} {14}},\ \bibinfo {pages} {125008} (\bibinfo {year}
  {2012}{\natexlab{a}})}\BibitemShut {NoStop}%
\bibitem [{\citenamefont {Holevo}(1973)}]{Hol-1973}%
  \BibitemOpen
  \bibfield  {author} {\bibinfo {author} {\bibfnamefont {A.~S.}\ \bibnamefont
  {Holevo}},\ }\href {https://doi.org/10.1016/0047-259X(73)90028-6} {\bibfield
  {journal} {\bibinfo  {journal} {J. Multivar. Anal.}\ }\textbf {\bibinfo
  {volume} {3}},\ \bibinfo {pages} {337} (\bibinfo {year} {1973})}\BibitemShut
  {NoStop}%
\bibitem [{\citenamefont {Belavkin}(1975)}]{Bel-1975}%
  \BibitemOpen
  \bibfield  {author} {\bibinfo {author} {\bibfnamefont {V.~P.}\ \bibnamefont
  {Belavkin}},\ }\href {https://doi.org/10.1080/17442507508833114} {\bibfield
  {journal} {\bibinfo  {journal} {Stochastics}\ }\textbf {\bibinfo {volume}
  {1}},\ \bibinfo {pages} {315} (\bibinfo {year} {1975})}\BibitemShut {NoStop}%
\bibitem [{\citenamefont {Je\v{z}ek}\ \emph {et~al.}(2002)\citenamefont
  {Je\v{z}ek}, \citenamefont {\v{R}eh\'{a}\v{c}ek},\ and\ \citenamefont
  {Fiur\'{a}\v{s}ek}}]{Jez-Reh-Fiu-2002}%
  \BibitemOpen
  \bibfield  {author} {\bibinfo {author} {\bibfnamefont {M.}~\bibnamefont
  {Je\v{z}ek}}, \bibinfo {author} {\bibfnamefont {J.}~\bibnamefont
  {\v{R}eh\'{a}\v{c}ek}}, \ and\ \bibinfo {author} {\bibfnamefont
  {J.}~\bibnamefont {Fiur\'{a}\v{s}ek}},\ }\href
  {https://doi.org/10.1103/PhysRevA.65.060301} {\bibfield  {journal} {\bibinfo
  {journal} {Phys. Rev. A}\ }\textbf {\bibinfo {volume} {65}},\ \bibinfo
  {pages} {060301} (\bibinfo {year} {2002})}\BibitemShut {NoStop}%
\bibitem [{\citenamefont {Eldar}\ \emph {et~al.}(2003)\citenamefont {Eldar},
  \citenamefont {Megretski},\ and\ \citenamefont
  {Verghese}}]{Eld-Meg-Ver-2003}%
  \BibitemOpen
  \bibfield  {author} {\bibinfo {author} {\bibfnamefont {Y.~C.}\ \bibnamefont
  {Eldar}}, \bibinfo {author} {\bibfnamefont {A.}~\bibnamefont {Megretski}}, \
  and\ \bibinfo {author} {\bibfnamefont {G.~C.}\ \bibnamefont {Verghese}},\
  }\href {https://doi.org/10.1109/TIT.2003.809510} {\bibfield  {journal}
  {\bibinfo  {journal} {IEEE Trans. Inf. Theory}\ }\textbf {\bibinfo {volume}
  {49}},\ \bibinfo {pages} {1007} (\bibinfo {year} {2003})}\BibitemShut
  {NoStop}%
\bibitem [{\citenamefont {Eldar}\ \emph {et~al.}(2004)\citenamefont {Eldar},
  \citenamefont {Megretski},\ and\ \citenamefont
  {Verghese}}]{Eld-Meg-Ver-2004}%
  \BibitemOpen
  \bibfield  {author} {\bibinfo {author} {\bibfnamefont {Y.~C.}\ \bibnamefont
  {Eldar}}, \bibinfo {author} {\bibfnamefont {A.}~\bibnamefont {Megretski}}, \
  and\ \bibinfo {author} {\bibfnamefont {G.~C.}\ \bibnamefont {Verghese}},\
  }\href {https://doi.org/10.1109/TIT.2004.828070} {\bibfield  {journal}
  {\bibinfo  {journal} {IEEE Trans. Inf. Theory}\ }\textbf {\bibinfo {volume}
  {50}},\ \bibinfo {pages} {1198} (\bibinfo {year} {2004})}\BibitemShut
  {NoStop}%
\bibitem [{\citenamefont {Barnett}\ and\ \citenamefont
  {Croke}(2009)}]{Bar-Cro-2009}%
  \BibitemOpen
  \bibfield  {author} {\bibinfo {author} {\bibfnamefont {S.~M.}\ \bibnamefont
  {Barnett}}\ and\ \bibinfo {author} {\bibfnamefont {S.}~\bibnamefont
  {Croke}},\ }\href {https://doi.org/10.1088/1751-8113/42/6/062001} {\bibfield
  {journal} {\bibinfo  {journal} {J. Phys. A: Math. Theor.}\ }\textbf {\bibinfo
  {volume} {42}},\ \bibinfo {pages} {062001} (\bibinfo {year}
  {2009})}\BibitemShut {NoStop}%
\bibitem [{\citenamefont {Deconinck}\ and\ \citenamefont
  {Terhal}(2010)}]{Dec-Ter-2010}%
  \BibitemOpen
  \bibfield  {author} {\bibinfo {author} {\bibfnamefont {M.~E.}\ \bibnamefont
  {Deconinck}}\ and\ \bibinfo {author} {\bibfnamefont {B.~M.}\ \bibnamefont
  {Terhal}},\ }\href {https://doi.org/10.1103/PhysRevA.81.062304} {\bibfield
  {journal} {\bibinfo  {journal} {Phys. Rev. A}\ }\textbf {\bibinfo {volume}
  {81}},\ \bibinfo {pages} {062304} (\bibinfo {year} {2010})}\BibitemShut
  {NoStop}%
\bibitem [{\citenamefont {Assalini}\ \emph {et~al.}(2010)\citenamefont
  {Assalini}, \citenamefont {Cariolaro},\ and\ \citenamefont
  {Pierobon}}]{Ass-Car-Pie-2010}%
  \BibitemOpen
  \bibfield  {author} {\bibinfo {author} {\bibfnamefont {A.}~\bibnamefont
  {Assalini}}, \bibinfo {author} {\bibfnamefont {G.}~\bibnamefont {Cariolaro}},
  \ and\ \bibinfo {author} {\bibfnamefont {G.}~\bibnamefont {Pierobon}},\
  }\href {https://doi.org/10.1103/PhysRevA.81.012315} {\bibfield  {journal}
  {\bibinfo  {journal} {Phys. Rev. A}\ }\textbf {\bibinfo {volume} {81}},\
  \bibinfo {pages} {012315} (\bibinfo {year} {2010})}\BibitemShut {NoStop}%
\bibitem [{\citenamefont {Ha}\ and\ \citenamefont {Kwon}(2013)}]{Ha-Kwo-2013}%
  \BibitemOpen
  \bibfield  {author} {\bibinfo {author} {\bibfnamefont {D.}~\bibnamefont
  {Ha}}\ and\ \bibinfo {author} {\bibfnamefont {Y.}~\bibnamefont {Kwon}},\
  }\href {https://doi.org/10.1103/PhysRevA.87.062302} {\bibfield  {journal}
  {\bibinfo  {journal} {Phys. Rev. A}\ }\textbf {\bibinfo {volume} {87}},\
  \bibinfo {pages} {062302} (\bibinfo {year} {2013})}\BibitemShut {NoStop}%
\bibitem [{\citenamefont {Bae}\ and\ \citenamefont
  {Hwang}(2013)}]{Bae-Hwa-2013}%
  \BibitemOpen
  \bibfield  {author} {\bibinfo {author} {\bibfnamefont {J.}~\bibnamefont
  {Bae}}\ and\ \bibinfo {author} {\bibfnamefont {W.-Y.}\ \bibnamefont
  {Hwang}},\ }\href {https://doi.org/10.1103/PhysRevA.87.012334} {\bibfield
  {journal} {\bibinfo  {journal} {Phys. Rev. A}\ }\textbf {\bibinfo {volume}
  {87}},\ \bibinfo {pages} {012334} (\bibinfo {year} {2013})}\BibitemShut
  {NoStop}%
\bibitem [{\citenamefont {Singal}\ \emph {et~al.}(2019)\citenamefont {Singal},
  \citenamefont {Kim},\ and\ \citenamefont {Ghosh}}]{Sin-Kim-Gho-2019}%
  \BibitemOpen
  \bibfield  {author} {\bibinfo {author} {\bibfnamefont {T.}~\bibnamefont
  {Singal}}, \bibinfo {author} {\bibfnamefont {E.}~\bibnamefont {Kim}}, \ and\
  \bibinfo {author} {\bibfnamefont {S.}~\bibnamefont {Ghosh}},\ }\href
  {https://doi.org/10.1103/PhysRevA.99.052334} {\bibfield  {journal} {\bibinfo
  {journal} {Phys. Rev. A}\ }\textbf {\bibinfo {volume} {99}},\ \bibinfo
  {pages} {052334} (\bibinfo {year} {2019})}\BibitemShut {NoStop}%
\bibitem [{\citenamefont {Janotta}\ and\ \citenamefont
  {Lal}(2013)}]{Jan-Lal-2013}%
  \BibitemOpen
  \bibfield  {author} {\bibinfo {author} {\bibfnamefont {P.}~\bibnamefont
  {Janotta}}\ and\ \bibinfo {author} {\bibfnamefont {R.}~\bibnamefont {Lal}},\
  }\href {https://doi.org/10.1103/PhysRevA.87.052131} {\bibfield  {journal}
  {\bibinfo  {journal} {Phys. Rev. A}\ }\textbf {\bibinfo {volume} {87}},\
  \bibinfo {pages} {052131} (\bibinfo {year} {2013})}\BibitemShut {NoStop}%
\bibitem [{\citenamefont {Piani}\ and\ \citenamefont
  {Watrous}(2015)}]{Pia-Wat-2015}%
  \BibitemOpen
  \bibfield  {author} {\bibinfo {author} {\bibfnamefont {M.}~\bibnamefont
  {Piani}}\ and\ \bibinfo {author} {\bibfnamefont {J.}~\bibnamefont
  {Watrous}},\ }\href {https://doi.org/10.1103/PhysRevLett.114.060404}
  {\bibfield  {journal} {\bibinfo  {journal} {Phys. Rev. Lett.}\ }\textbf
  {\bibinfo {volume} {114}},\ \bibinfo {pages} {060404} (\bibinfo {year}
  {2015})}\BibitemShut {NoStop}%
\bibitem [{\citenamefont {Napoli}\ \emph {et~al.}(2016)\citenamefont {Napoli},
  \citenamefont {Bromley}, \citenamefont {Cianciaruso}, \citenamefont {Piani},
  \citenamefont {Johnston},\ and\ \citenamefont
  {Adesso}}]{Nap-Bro-Cia-Pia-2016}%
  \BibitemOpen
  \bibfield  {author} {\bibinfo {author} {\bibfnamefont {C.}~\bibnamefont
  {Napoli}}, \bibinfo {author} {\bibfnamefont {T.~R.}\ \bibnamefont {Bromley}},
  \bibinfo {author} {\bibfnamefont {M.}~\bibnamefont {Cianciaruso}}, \bibinfo
  {author} {\bibfnamefont {M.}~\bibnamefont {Piani}}, \bibinfo {author}
  {\bibfnamefont {N.}~\bibnamefont {Johnston}}, \ and\ \bibinfo {author}
  {\bibfnamefont {G.}~\bibnamefont {Adesso}},\ }\href
  {https://doi.org/10.1103/PhysRevLett.116.150502} {\bibfield  {journal}
  {\bibinfo  {journal} {Phys. Rev. Lett.}\ }\textbf {\bibinfo {volume} {116}},\
  \bibinfo {pages} {150502} (\bibinfo {year} {2016})}\BibitemShut {NoStop}%
\bibitem [{\citenamefont {Piani}\ \emph {et~al.}(2016)\citenamefont {Piani},
  \citenamefont {Cianciaruso}, \citenamefont {Bromley}, \citenamefont {Napoli},
  \citenamefont {Johnston},\ and\ \citenamefont
  {Adesso}}]{Pia-Cia-Bro-Nap-2016}%
  \BibitemOpen
  \bibfield  {author} {\bibinfo {author} {\bibfnamefont {M.}~\bibnamefont
  {Piani}}, \bibinfo {author} {\bibfnamefont {M.}~\bibnamefont {Cianciaruso}},
  \bibinfo {author} {\bibfnamefont {T.~R.}\ \bibnamefont {Bromley}}, \bibinfo
  {author} {\bibfnamefont {C.}~\bibnamefont {Napoli}}, \bibinfo {author}
  {\bibfnamefont {N.}~\bibnamefont {Johnston}}, \ and\ \bibinfo {author}
  {\bibfnamefont {G.}~\bibnamefont {Adesso}},\ }\href
  {https://doi.org/10.1103/PhysRevA.93.042107} {\bibfield  {journal} {\bibinfo
  {journal} {Phys. Rev. A}\ }\textbf {\bibinfo {volume} {93}},\ \bibinfo
  {pages} {042107} (\bibinfo {year} {2016})}\BibitemShut {NoStop}%
\bibitem [{\citenamefont {Anshu}\ \emph {et~al.}(2018)\citenamefont {Anshu},
  \citenamefont {Hsieh},\ and\ \citenamefont {Jain}}]{Ans-Hsi-Jai-2018}%
  \BibitemOpen
  \bibfield  {author} {\bibinfo {author} {\bibfnamefont {A.}~\bibnamefont
  {Anshu}}, \bibinfo {author} {\bibfnamefont {M.-H.}\ \bibnamefont {Hsieh}}, \
  and\ \bibinfo {author} {\bibfnamefont {R.}~\bibnamefont {Jain}},\ }\href
  {https://doi.org/10.1103/PhysRevLett.121.190504} {\bibfield  {journal}
  {\bibinfo  {journal} {Phys. Rev. Lett.}\ }\textbf {\bibinfo {volume} {121}},\
  \bibinfo {pages} {190504} (\bibinfo {year} {2018})}\BibitemShut {NoStop}%
\bibitem [{\citenamefont {Takagi}\ \emph {et~al.}(2019)\citenamefont {Takagi},
  \citenamefont {Regula}, \citenamefont {Bu}, \citenamefont {Liu},\ and\
  \citenamefont {Adesso}}]{Tak-Reg-Bu-Liu-2019}%
  \BibitemOpen
  \bibfield  {author} {\bibinfo {author} {\bibfnamefont {R.}~\bibnamefont
  {Takagi}}, \bibinfo {author} {\bibfnamefont {B.}~\bibnamefont {Regula}},
  \bibinfo {author} {\bibfnamefont {K.}~\bibnamefont {Bu}}, \bibinfo {author}
  {\bibfnamefont {Z.-W.}\ \bibnamefont {Liu}}, \ and\ \bibinfo {author}
  {\bibfnamefont {G.}~\bibnamefont {Adesso}},\ }\href
  {https://doi.org/10.1103/PhysRevLett.122.140402} {\bibfield  {journal}
  {\bibinfo  {journal} {Phys. Rev. Lett.}\ }\textbf {\bibinfo {volume} {122}},\
  \bibinfo {pages} {140402} (\bibinfo {year} {2019})}\BibitemShut {NoStop}%
\bibitem [{\citenamefont {Uola}\ \emph {et~al.}(2019)\citenamefont {Uola},
  \citenamefont {Kraft}, \citenamefont {Shang}, \citenamefont {Yu},\ and\
  \citenamefont {G{\"u}hne}}]{Uol-Kra-Sha-Yu-2019}%
  \BibitemOpen
  \bibfield  {author} {\bibinfo {author} {\bibfnamefont {R.}~\bibnamefont
  {Uola}}, \bibinfo {author} {\bibfnamefont {T.}~\bibnamefont {Kraft}},
  \bibinfo {author} {\bibfnamefont {J.}~\bibnamefont {Shang}}, \bibinfo
  {author} {\bibfnamefont {X.-D.}\ \bibnamefont {Yu}}, \ and\ \bibinfo {author}
  {\bibfnamefont {O.}~\bibnamefont {G{\"u}hne}},\ }\href
  {https://doi.org/10.1103/PhysRevLett.122.130404} {\bibfield  {journal}
  {\bibinfo  {journal} {Phys. Rev. Lett.}\ }\textbf {\bibinfo {volume} {122}},\
  \bibinfo {pages} {130404} (\bibinfo {year} {2019})}\BibitemShut {NoStop}%
\bibitem [{\citenamefont {Takagi}\ and\ \citenamefont
  {Regula}(2019)}]{Tak-Reg-2019}%
  \BibitemOpen
  \bibfield  {author} {\bibinfo {author} {\bibfnamefont {R.}~\bibnamefont
  {Takagi}}\ and\ \bibinfo {author} {\bibfnamefont {B.}~\bibnamefont
  {Regula}},\ }\href {https://doi.org/10.1103/PhysRevX.9.031053} {\bibfield
  {journal} {\bibinfo  {journal} {Phys. Rev. X}\ }\textbf {\bibinfo {volume}
  {9}},\ \bibinfo {pages} {031053} (\bibinfo {year} {2019})}\BibitemShut
  {NoStop}%
\bibitem [{\citenamefont {Oszmaniec}\ and\ \citenamefont
  {Biswas}(2019)}]{Osz-Bis-2019}%
  \BibitemOpen
  \bibfield  {author} {\bibinfo {author} {\bibfnamefont {M.}~\bibnamefont
  {Oszmaniec}}\ and\ \bibinfo {author} {\bibfnamefont {T.}~\bibnamefont
  {Biswas}},\ }\href {https://doi.org/10.22331/q-2019-04-26-133} {\bibfield
  {journal} {\bibinfo  {journal} {Quantum}\ }\textbf {\bibinfo {volume} {3}},\
  \bibinfo {pages} {133} (\bibinfo {year} {2019})}\BibitemShut {NoStop}%
\bibitem [{\citenamefont {Chiribella}\ \emph {et~al.}(2008)\citenamefont
  {Chiribella}, \citenamefont {D'{A}riano},\ and\ \citenamefont
  {Perinotti}}]{Chi-Dar-Per-2008}%
  \BibitemOpen
  \bibfield  {author} {\bibinfo {author} {\bibfnamefont {G.}~\bibnamefont
  {Chiribella}}, \bibinfo {author} {\bibfnamefont {G.~M.}\ \bibnamefont
  {D'{A}riano}}, \ and\ \bibinfo {author} {\bibfnamefont {P.}~\bibnamefont
  {Perinotti}},\ }\href {https://doi.org/10.1103/PhysRevLett.101.060401}
  {\bibfield  {journal} {\bibinfo  {journal} {Phys. Rev. Lett.}\ }\textbf
  {\bibinfo {volume} {101}},\ \bibinfo {pages} {060401} (\bibinfo {year}
  {2008})}\BibitemShut {NoStop}%
\bibitem [{\citenamefont {Oreshkov}\ \emph {et~al.}(2012)\citenamefont
  {Oreshkov}, \citenamefont {Costa},\ and\ \citenamefont
  {Brukner}}]{Ore-Cos-Bru-2012}%
  \BibitemOpen
  \bibfield  {author} {\bibinfo {author} {\bibfnamefont {O.}~\bibnamefont
  {Oreshkov}}, \bibinfo {author} {\bibfnamefont {F.}~\bibnamefont {Costa}}, \
  and\ \bibinfo {author} {\bibfnamefont {{\v{C}}.}~\bibnamefont {Brukner}},\
  }\href {https://doi.org/10.1038/ncomms2076} {\bibfield  {journal} {\bibinfo
  {journal} {Nat. Commun.}\ }\textbf {\bibinfo {volume} {3}},\ \bibinfo {pages}
  {1} (\bibinfo {year} {2012})}\BibitemShut {NoStop}%
\bibitem [{\citenamefont {Colnaghi}\ \emph {et~al.}(2012)\citenamefont
  {Colnaghi}, \citenamefont {D'{A}riano}, \citenamefont {Facchini},\ and\
  \citenamefont {Perinotti}}]{Col-Dar-Fac-Per-2012}%
  \BibitemOpen
  \bibfield  {author} {\bibinfo {author} {\bibfnamefont {T.}~\bibnamefont
  {Colnaghi}}, \bibinfo {author} {\bibfnamefont {G.~M.}\ \bibnamefont
  {D'{A}riano}}, \bibinfo {author} {\bibfnamefont {S.}~\bibnamefont
  {Facchini}}, \ and\ \bibinfo {author} {\bibfnamefont {P.}~\bibnamefont
  {Perinotti}},\ }\href {https://doi.org/10.1016/j.physleta.2012.08.028}
  {\bibfield  {journal} {\bibinfo  {journal} {Phys. Lett. A}\ }\textbf
  {\bibinfo {volume} {376}},\ \bibinfo {pages} {2940} (\bibinfo {year}
  {2012})}\BibitemShut {NoStop}%
\bibitem [{\citenamefont
  {Chiribella}(2012{\natexlab{b}})}]{Chi-2012-nosignalling}%
  \BibitemOpen
  \bibfield  {author} {\bibinfo {author} {\bibfnamefont {G.}~\bibnamefont
  {Chiribella}},\ }\href {https://doi.org/10.1103/PhysRevA.86.040301}
  {\bibfield  {journal} {\bibinfo  {journal} {Phys. Rev. A}\ }\textbf {\bibinfo
  {volume} {86}},\ \bibinfo {pages} {040301} (\bibinfo {year}
  {2012}{\natexlab{b}})}\BibitemShut {NoStop}%
\bibitem [{\citenamefont {Chiribella}\ \emph {et~al.}(2013)\citenamefont
  {Chiribella}, \citenamefont {D'{A}riano}, \citenamefont {Perinotti},\ and\
  \citenamefont {Valiron}}]{Chi-Dar-Per-2013}%
  \BibitemOpen
  \bibfield  {author} {\bibinfo {author} {\bibfnamefont {G.}~\bibnamefont
  {Chiribella}}, \bibinfo {author} {\bibfnamefont {G.~M.}\ \bibnamefont
  {D'{A}riano}}, \bibinfo {author} {\bibfnamefont {P.}~\bibnamefont
  {Perinotti}}, \ and\ \bibinfo {author} {\bibfnamefont {B.}~\bibnamefont
  {Valiron}},\ }\href {https://doi.org/10.1103/PhysRevA.88.022318} {\bibfield
  {journal} {\bibinfo  {journal} {Phys. Rev. A}\ }\textbf {\bibinfo {volume}
  {88}},\ \bibinfo {pages} {022318} (\bibinfo {year} {2013})}\BibitemShut
  {NoStop}%
\bibitem [{\citenamefont {Bavaresco}\ \emph {et~al.}(2021)\citenamefont
  {Bavaresco}, \citenamefont {Murao},\ and\ \citenamefont
  {Quintino}}]{Bav-Mur-Qui-2021-hierarchy}%
  \BibitemOpen
  \bibfield  {author} {\bibinfo {author} {\bibfnamefont {J.}~\bibnamefont
  {Bavaresco}}, \bibinfo {author} {\bibfnamefont {M.}~\bibnamefont {Murao}}, \
  and\ \bibinfo {author} {\bibfnamefont {M.~T.}\ \bibnamefont {Quintino}},\
  }\href {https://doi.org/10.1103/PhysRevLett.127.200504} {\bibfield  {journal}
  {\bibinfo  {journal} {Phys. Rev. Lett.}\ }\textbf {\bibinfo {volume} {127}},\
  \bibinfo {pages} {200504} (\bibinfo {year} {2021})}\BibitemShut {NoStop}%
\bibitem [{\citenamefont {Nakahira}\ and\ \citenamefont
  {Kato}(2021)}]{Nak-Kat-2021-general}%
  \BibitemOpen
  \bibfield  {author} {\bibinfo {author} {\bibfnamefont {K.}~\bibnamefont
  {Nakahira}}\ and\ \bibinfo {author} {\bibfnamefont {K.}~\bibnamefont
  {Kato}},\ }\href {https://doi.org/10.1103/PhysRevA.103.062606} {\bibfield
  {journal} {\bibinfo  {journal} {Phys. Rev. A}\ }\textbf {\bibinfo {volume}
  {103}},\ \bibinfo {pages} {062606} (\bibinfo {year} {2021})}\BibitemShut
  {NoStop}%
\bibitem [{\citenamefont {Zhuang}\ and\ \citenamefont
  {Pirandola}(2020)}]{Zhu-Pir-2020}%
  \BibitemOpen
  \bibfield  {author} {\bibinfo {author} {\bibfnamefont {Q.}~\bibnamefont
  {Zhuang}}\ and\ \bibinfo {author} {\bibfnamefont {S.}~\bibnamefont
  {Pirandola}},\ }\href {https://doi.org/10.1103/PhysRevLett.125.080505}
  {\bibfield  {journal} {\bibinfo  {journal} {Phys. Rev. Lett.}\ }\textbf
  {\bibinfo {volume} {125}},\ \bibinfo {pages} {080505} (\bibinfo {year}
  {2020})}\BibitemShut {NoStop}%
\bibitem [{\citenamefont {Regula}(2017)}]{Reg-2017}%
  \BibitemOpen
  \bibfield  {author} {\bibinfo {author} {\bibfnamefont {B.}~\bibnamefont
  {Regula}},\ }\href {https://doi.org/10.1088/1751-8121/aa9100} {\bibfield
  {journal} {\bibinfo  {journal} {J. Phys. A: Math. Theor.}\ }\textbf {\bibinfo
  {volume} {51}},\ \bibinfo {pages} {045303} (\bibinfo {year}
  {2017})}\BibitemShut {NoStop}%
\bibitem [{\citenamefont {Chitambar}\ and\ \citenamefont
  {Gour}(2019)}]{Chi-Gou-2019}%
  \BibitemOpen
  \bibfield  {author} {\bibinfo {author} {\bibfnamefont {E.}~\bibnamefont
  {Chitambar}}\ and\ \bibinfo {author} {\bibfnamefont {G.}~\bibnamefont
  {Gour}},\ }\href {https://doi.org/10.1103/RevModPhys.91.025001} {\bibfield
  {journal} {\bibinfo  {journal} {Rev. Mod. Phys.}\ }\textbf {\bibinfo {volume}
  {91}},\ \bibinfo {pages} {025001} (\bibinfo {year} {2019})}\BibitemShut
  {NoStop}%
\bibitem [{\citenamefont {Dhara}\ and\ \citenamefont
  {Dutta}(2011)}]{Dha-Dut-2011}%
  \BibitemOpen
  \bibfield  {author} {\bibinfo {author} {\bibfnamefont {A.}~\bibnamefont
  {Dhara}}\ and\ \bibinfo {author} {\bibfnamefont {J.}~\bibnamefont {Dutta}},\
  }\href {https://doi.org/10.1201/b11156} {\emph {\bibinfo {title} {Optimality
  conditions in convex optimization: a finite-dimensional view}}}\ (\bibinfo
  {publisher} {CRC Press},\ \bibinfo {year} {2011})\BibitemShut {NoStop}%
\bibitem [{\citenamefont {Harrow}\ and\ \citenamefont
  {Nielsen}(2003)}]{Har-Nie-2003}%
  \BibitemOpen
  \bibfield  {author} {\bibinfo {author} {\bibfnamefont {A.~W.}\ \bibnamefont
  {Harrow}}\ and\ \bibinfo {author} {\bibfnamefont {M.~A.}\ \bibnamefont
  {Nielsen}},\ }\href {https://doi.org/10.1103/PhysRevA.68.012308} {\bibfield
  {journal} {\bibinfo  {journal} {Phys. Rev. A}\ }\textbf {\bibinfo {volume}
  {68}},\ \bibinfo {pages} {012308} (\bibinfo {year} {2003})}\BibitemShut
  {NoStop}%
\end{thebibliography}

\end{document}